\documentclass[a4paper,10pt,twocolumn]{article}
\usepackage[margin=2cm]{geometry}
\usepackage{times}

\usepackage{amsmath,amssymb,amsthm,amsfonts}
\usepackage{mathtools}
\usepackage{bm}
\usepackage{graphicx}   %% [demo] draws boxes instead of loading images (local preview only)

\graphicspath{{./images/}}

\newcommand{\journal}[1]{}
\newcommand{\ead}[1]{}
\newcommand{\cortext}[2][]{}
\newcommand{\corref}[1]{}
\newcommand{\fntext}[2][]{}
\newcommand{\fnref}[1]{}
\renewcommand{\author}[2][]{}
\newcommand{\affiliation}[2][]{}

\usepackage{booktabs}
\usepackage{multirow}
\usepackage{array}
\usepackage{xcolor}
\usepackage{adjustbox}
\usepackage{hyperref}
\usepackage{orcidlink}
\usepackage{makecell}
\theoremstyle{plain}
\newtheorem{theorem}{Theorem}
\newtheorem{lemma}{Lemma}
\newtheorem{corollary}{Corollary}

\theoremstyle{definition}
\newtheorem{definition}{Definition}
\newtheorem{assumption}{Assumption}
\newtheorem{remark}{Remark}

\newcommand{\R}{\mathbb{R}}
\newcommand{\norm}[1]{\left\Vert#1\right\Vert}

\newcommand{\cond}{\operatorname{cond}}

\newcommand{\tr}{\operatorname{tr}}
\newcommand{\He}{\operatorname{He}}

\begin{document}

%% ========================================================================
%%  MANUAL FRONT-MATTER for local preview (article class)
%%  In Overleaf with elsarticle.cls, the %% [Elsevier frontmatter block removed in preview build]
%%  block below is used instead of this manual block.
%% ========================================================================
\twocolumn[{%
\begin{@twocolumnfalse}
\begin{center}
{\LARGE\bfseries Parameter-Dependent LMI Synthesis for Semi-Global Differential
ISS Trajectory Tracking of Nonholonomic Mobile Robots Under
Multiplicative Wheel Slip\par}
\vskip 1em
{\large Mohammad Sabouri\textsuperscript{a,b,*}\par}
\vskip 0.6em
\begin{minipage}{0.9\textwidth}\centering\small\itshape
\textsuperscript{a}Department of Informatics, Bioengineering, Robotics, and Systems Engineering (DIBRIS), University of Genoa, Via Opera Pia 13, Genoa 16145, Italy\\[0.3em]
\textsuperscript{b}Applied Control \& Robotics Research Laboratory (ACRRL), Department of Power and Control Engineering, School of Electrical and Computer Engineering, Shiraz University, Shiraz, Iran\\[0.3em]
\textsuperscript{*}Corresponding author. Email: \texttt{S5659227@studenti.unige.it}
\end{minipage}
\end{center}
\vskip 1em
\noindent\textbf{Abstract.} This paper presents a parameter-dependent linear
matrix inequality (LMI) framework for trajectory tracking of
nonholonomic mobile robots subject to severe multiplicative wheel
slip on variable-terrain surfaces. The sampled convex formulation,
augmented with grid-to-continuum residual certification, simultaneously
establishes semi-global differential input-to-state stability, a
prescribed exponential decay rate, regional pole placement, and a
gain-bounded feedback proxy for actuator-limited operation. A central
contribution is an explicit upper bound on the additive disturbance
induced by bounded multiplicative slip in the Kanayama moving-frame
error coordinates, bridging the physical slip mechanism and the convex
synthesis paradigm. The auxiliary gain matrix and inverse storage
metric are parameterized affinely in the reference linear and angular
velocities, while the storage metric inherits a nonlinear dependence
through pointwise matrix inversion. Stability is established via a
cascade analysis combining variational contraction, forward
invariance, slip-induced disturbance bounds, and dissipation-based
trajectory reconstruction. Numerical validation compares three
controllers across six reference trajectories, six disturbance
classes, and a 60-second variable-terrain test featuring six severe
slip patches with bidirectional slip ratios reaching $\pm 50\%$,
replicated on two geometries. Supplementary studies characterize
robustness to Gaussian sensor noise, a compound stress-test combining
slip, noise, and reference-velocity jitter, and embedded-platform
computational feasibility. Across one hundred Monte-Carlo runs the
proposed controller achieves complete trajectory containment within
the certified envelope, with statistically significant improvements
over both baselines in slip-dominated regimes. On the variable-terrain
scenario, peak tracking error is reduced by $12\%$ against the
fixed-gain LMI baseline and $49\%$ against the manual baseline, with
the constant-gain baseline infeasible at the prescribed decay rate.
\vskip 0.5em
\noindent\textit{Keywords: }nonholonomic mobile robots; linear matrix inequalities;
parameter-dependent Lyapunov function; differential
input-to-state stability; trajectory tracking; wheel slip;
gain scheduling; contraction analysis; variable-terrain robustness
\vskip 0.8em
\hrule
\vskip 1em
\end{@twocolumnfalse}
}]

%% ========================================================================
%%  Original Elsevier frontmatter block (SWALLOWED by comment env in preview)
%% ========================================================================
%% [Elsevier frontmatter block removed in preview build]

%% ========================================================================
%%  ABSTRACT (one paragraph, no eq./ref./fn., <=300 words)
%% ========================================================================
%% ========================================================================
%%  NOMENCLATURE  (IEEEdescription environment for proper rendering)
%% ========================================================================
\section*{Nomenclature}

\noindent\textit{Sets and Norms}
\begin{description}
\item[$\R$] Set of real numbers.
\item[$\R^{n\times m}$] Set of real $n\times m$ matrices.
\item[$\bar{B}_R(0)$] Closed Euclidean ball of radius $R$ in $\R^3$.
\item[$\norm{\cdot}$] Euclidean or induced $2$-norm.
\item[$\norm{\cdot}_F$] Frobenius norm.
\item[$X\succ 0$] Symmetric positive (semi-) definite matrix.
\item[$\He(X)$] $X + X^{\top}$, the Hermitian part of $X$.
\item[$\cond(X)$] Spectral condition number $\lambda_{\max}(X)/\lambda_{\min}(X)$.
\end{description}

\noindent\textit{Vehicle and Reference Signals}
\begin{description}
\item[$(x,y,\theta)$] World-frame pose of the robot.
\item[$(x_r,y_r,\theta_r)$] Reference pose to be tracked.
\item[$(v,\omega)$] Actual linear and angular velocities.
\item[$(v_r,\omega_r)$] Reference linear and angular velocities.
\item[$e$] Tracking error $(e_x,e_y,e_\theta)^{\top}$ in Kanayama frame.
\item[$\rho$] LPV scheduling parameter $(v_r,\omega_r)$.
\item[$\mathcal{P}$] Parameter polytope.
\end{description}

\noindent\textit{Disturbance and Slip}
\begin{description}
\item[$d(t)$] Additive disturbance vector in $\R^3$.
\item[$\delta_{\max}$] Pointwise bound on $\norm{d(t)}$.
\item[$\sigma_v,\sigma_\omega$] Longitudinal and angular slip ratios.
\item[$\bar{\sigma}$] Pointwise bound on $|\sigma_v|,|\sigma_\omega|$.
\end{description}

\noindent\textit{Synthesis Variables}
\begin{description}
\item[$M(\rho)$] Parameter-dependent Lyapunov metric, $M(\rho)\succ 0$.
\item[$W(\rho)$] Inverse metric, $W(\rho) = M(\rho)^{-1}$.
\item[$K(\rho)$] Parameter-dependent state-feedback gain.
\item[$Y(\rho)$] Auxiliary variable, $K(\rho) = Y(\rho)W(\rho)^{-1}$.
\item[$A_{\mathrm{cl}}(\rho)$] Closed-loop matrix $A(\rho) + B K(\rho)$.
\item[$\alpha$] Prescribed exponential decay rate.
\item[$\gamma$] Induced $L_2$ gain ($H_\infty$ disturbance gain).
\item[$\mu$] $\mathcal{S}$-procedure multiplier (nonlinear residual).
\item[$K_{\max}$] Weighted LMI gain parameter in the
Schur-form gain constraint~\eqref{eq:LMI-sat}.
\item[$\tilde{K}_{\max}$] Hard unweighted operator-norm bound on
$\norm{K(\rho)}$, with $\tilde{K}_{\max}=K_{\max}/\sqrt{\epsilon_W}$.
\item[$R$] Semi-global radius for $\bar{B}_R(0)$ (state-space).
\item[$L_R$] Lipschitz constant of the nonlinear residual on
$\bar{B}_R(0)$.
\item[$\lambda_{\mathrm{reg}}$] LMI conditioning regularization weight.
\item[$\epsilon_W$] Conditioning floor on $W(\rho)$ eigenvalues.
\item[$q$] Center magnitude of $\mathcal{D}$-stability disk.
\item[$r$] Radius of $\mathcal{D}$-stability disk (eigenvalue domain).
\end{description}

\noindent\textit{Sensor Noise and Compound Disturbance}
\begin{description}
\item[$\sigma_p$] Standard deviation of position measurement noise [m].
\item[$\sigma_h$] Standard deviation of heading measurement noise [rad].
\item[$e_{\mathrm{meas}}$] Noise-corrupted error measurement available to controller.
\end{description}

%% ========================================================================
%%  SECTION I: INTRODUCTION
%% ========================================================================
\section{Introduction}\label{sec:intro}
Trajectory tracking for nonholonomic mobile robots
constitutes a longstanding cornerstone of robotic control theory, with
applications spanning automated guided vehicles in warehouses,
service robots in human environments, autonomous agricultural
platforms, and exploration rovers on irregular terrain.
A comprehensive survey of motion-control techniques for wheeled mobile robots in this area is provided in~\cite{morin2008}. The pioneering work of Kanayama~\cite{kanayama1990} introduced a
seminal moving-frame error formulation and a manually tuned feedback
law that has remained the benchmark against which subsequent
approaches are measured. Despite three decades of refinement, the
control of nonholonomic systems retains its fundamental difficulty:
Brockett's obstruction~\cite{brockett1983} (see also the textbook treatment in~\cite{murray1993}) establishes that no smooth
time-invariant state-feedback law can stabilize a unicycle-type
vehicle to a fixed equilibrium, forcing all stabilization strategies
to be either time-varying, discontinuous, or restricted to
trajectory-tracking rather than point-stabilization formulations.

The trajectory-tracking problem, addressed in the present work, has
been pursued through several methodological avenues. Backstepping
designs~\cite{jiang1997} provide constructive Lyapunov-based
controllers but typically yield gains tuned by trial and error, with
no formal guarantee on the worst-case disturbance gain or pole
placement. Linear time-varying approaches via chained-form
representations~\cite{lefeber2000} achieve elegant exponential
tracking under reference persistence-of-excitation conditions but
become brittle when the reference is allowed to slow down or
maneuver sharply. Discontinuous feedback methods~\cite{astolfi1999}
circumvent Brockett's obstruction at the cost of theoretical and
practical complexity. The cascade-based methodology of
Aguiar and Hespanha~\cite{aguiar2007} extended trajectory-following
guarantees to underactuated vehicles but again relied on a non-convex
gain-selection procedure that does not scale gracefully to
multi-objective specifications.

Beyond the classical linear-quadratic optimal-control paradigm~\cite{anderson1990}, a parallel development concerns the systematic incorporation of
performance and robustness criteria through linear matrix
inequalities (LMI). The seminal monograph of
Boyd~\textit{et~al.}~\cite{boyd1994} crystallized the LMI paradigm,
and subsequent works~\cite{chilali1996,scherer1997}, building on the seminal state-space solutions of Doyle~\textit{et~al.}~\cite{doyle1989}, demonstrated that
$H_\infty$ disturbance attenuation, regional pole placement
(so-called $\mathcal{D}$-stability), and actuator constraints can be
posed as convex feasibility problems amenable to interior-point
solvers. Extensions to gain-scheduled designs through the
linear parameter-varying (LPV) paradigm
\cite{apkarian1998,wu1996} replaced single-point linearizations with
a polytopic parameter description, with feasibility certified at the
vertices of a convex polytope of admissible parameter values, and
recent work~\cite{kessler2025} has shown that polyquadratic Lyapunov
functions, polytopic linear time-varying embeddings, and LMI feasibility
can be combined to characterize boundedness margins and safe
initial-condition sets along reference trajectories. Complementary
developments~\cite{zhou2025,verhoek2026} have further extended LPV
synthesis to a direct data-driven setting in which pointwise-in-time
safety constraints are enforced through semidefinite programming,
and in which data-driven predictive control schemes for LPV systems
are equipped with terminal-ingredient stability guarantees,
illustrating the continued centrality of LMI-based feasibility
certificates in modern LPV control. The
modern reformulation in terms of parameter-dependent Lyapunov
functions~\cite{deoliveira1999,iwasaki1994} reduced the inherent
conservativeness of common-Lyapunov LPV synthesis and has become the
de facto standard for systems whose dynamics depend on measurable
exogenous signals.

Concurrently, contraction analysis~\cite{lohmiller1998} introduced a
differential viewpoint that quantifies the rate at which nearby
trajectories converge, rather than only certifying convergence of a
single equilibrium. Forni and Sepulchre's
formulation~\cite{forni2014} embedded contraction within a
differential Lyapunov framework, while Manchester and
Slotine~\cite{manchester2017} demonstrated that contraction metrics
can be synthesized through LMIs that resemble classical
$H_\infty$ synthesis, leading to control contraction metrics with
convex tractability. Subsequent applications to motion
planning~\cite{singh2017} demonstrated practical relevance for
robotic systems navigating uncertain environments. Crucially, the
differential ISS concept~\cite{sontag2008} extends the contraction
framework to systems subject to bounded exogenous disturbances by
quantifying how the differential metric expands under input
perturbations. Very recent work~\cite{koelewijn2025} further
emphasizes that incremental and differential dissipativity formulations
are essential for establishing tracking and disturbance-rejection
guarantees in nonlinear LPV synthesis, since standard quadratic
dissipativity is insufficient.

The realistic operating environment of mobile robots is dominated by
wheel slip, which arises from time-varying tire-ground friction
coefficients and represents the dominant uncertainty source in
outdoor and industrial deployments~\cite{rajamani2011,pacejka2012,iagnemma2004}.
Wheel slip is fundamentally \emph{multiplicative} rather than
additive: a fraction of the commanded velocity is lost to surface
slipping, with the loss fraction itself a function of normal load,
surface composition, and momentary lateral acceleration. Several
strands of recent literature have attempted to address slip-induced
uncertainty in mobile robotics through distinct methodological
lenses. Adaptive control approaches treat the slip ratio as an
unknown parameter to be estimated online, exemplified by the
observer-based adaptive tracking framework of
Cui~\cite{cui2019} for wheeled mobile robots under unknown slipping
parameters and the prescribed-time tracking design of
Qin~\textit{et~al.}~\cite{qin2024}, with the indirect neural-network adaptive scheme of Mohareri~\textit{et~al.}~\cite{mohareri2012} offering an alternative learning-based perspective, that combines extended-state
observation with finite-time convergence for vehicles experiencing
both slipping and skidding. Sliding-mode approaches absorb the slip
uncertainty into a discontinuous feedback law, as in the adaptive
sliding-mode formulation of Zhai and Song~\cite{zhai2019}, the
disturbance-attenuating extension developed by
Liu~\textit{et~al.}~\cite{liu2020}, and the recent consensus
sliding-mode framework of Sha~\textit{et~al.}~\cite{sha2025} for
multiple wheeled mobile robots under model uncertainties and external
disturbances. Disturbance-observer approaches
estimate the lumped uncertainty and compensate it through feedforward
cancellation, illustrated by the trajectory-tracking framework of
Wang and Zhai~\cite{wang2020} and, more recently, by the distributed
observer-based design of Moorthy~\textit{et~al.}~\cite{moorthy2025} for
the formation control of nonholonomic robots with unknown wheel
slippage. A further recent direction is the low-complexity safety
control framework of Nie~\textit{et~al.}~\cite{nie2025}, which employs
a prescribed performance function together with a virtual-point
linearization to bound tracking errors for wheeled mobile robots
subject to slipping and skidding without requiring detailed model
knowledge. A complementary neural-network-based disturbance-observer
design has been proposed by Bai~\textit{et~al.}~\cite{bai2025}, who
explicitly model slipping and skidding at the kinematic level and
combine a radial-basis-function-network observer with backstepping
to certify uniform ultimate boundedness of the tracking error. Each of these methodological families
addresses slip uncertainty through a distinct paradigm—adaptive
estimation, sliding-mode discontinuity, or observer-based
compensation—but typically without joint enforcement of all four
design constraints considered in the present work.
A complementary LPV-based formulation specifically targeting
slip-affected nonholonomic robots was investigated in the
author's preliminary work~\cite{sabouri2021}, in which a
gain-scheduled controller was designed for trajectory tracking
under slip; however, that earlier formulation did not pursue convex
multi-objective LMI synthesis with a parameter-dependent metric,
regional pole placement, gain-bounded feedback, or a formal
differential ISS certificate, all of which form the contributions
of the present article.

A clear gap therefore persists in the literature: although individual
ingredients (LPV synthesis, contraction certificates, slip-aware
adaptation) have been pursued in isolation, no existing work
provides a unified convex LMI framework that simultaneously
\emph{(i)} enforces a prescribed exponential decay rate with explicit
$H_\infty$ disturbance gain certificate, \emph{(ii)} imposes regional
pole placement for damping and bandwidth control, \emph{(iii)}
respects a gain-bounded feedback proxy for actuator-limited
operation, \emph{(iv)} establishes formal stability under realistic
multiplicative wheel slip rather than additive surrogate
disturbances, and \emph{(v)} extends to severe variable-terrain
conditions where slip intensity transitions abruptly across surface
patches. The present work addresses this gap.

The contributions of this paper are fivefold. \emph{First}, an
explicit bound transforms the multiplicative slip mechanism into a
norm-bounded additive disturbance in the Kanayama frame, parameterized
by the slip magnitude, the unweighted gain limit, and the
semi-global radius. \emph{Second}, a sampled convex LMI synthesis
augmented by formal grid-to-continuum certification is formulated
whose feasibility certifies all four design constraints
simultaneously, with the inverse storage metric and auxiliary gain
parameterized affinely in the reference velocities. \emph{Third}, a
cascade stability analysis combining variational contraction,
forward invariance, slip-induced disturbance bounds, and trajectory-level
dissipation establishes semi-global differential ISS with explicit
constants. \emph{Fourth}, a comprehensive numerical validation
framework, conducted on the proposed design against a fixed-gain LMI
baseline and a hand-tuned Kanayama controller, evaluates performance
across six reference trajectories, six disturbance classes, and a
sixty-second variable-terrain helical path with six severe slip
patches modelling wet asphalt, oil spill, gravel, ice, a
gravity-assisted downhill slope with positive longitudinal slip, and
snow. The same slip schedule is replicated on a Bernoulli figure-eight
reference path to provide cross-geometry validation. \emph{Fifth},
three supplementary studies extend the comparison beyond the
canonical test suite to address realistic deployment concerns:
robustness to Gaussian sensor noise at three intensity levels, a
compound stress-test simultaneously activating slip, measurement
noise, and reference jitter, and a computational complexity
characterization documenting embedded feasibility on industry-standard
hardware. To the best of the authors' knowledge, this constitutes
among the first frameworks to jointly address all the listed
constraints within a single convex synthesis, although individual
ingredients have been investigated separately in the broader LPV-LMI,
contraction analysis, and slip-aware control literature cited above.

The paper is organized as follows. Section~\ref{sec:model} formulates
the system model, the Kanayama error dynamics, the multiplicative
slip model, and the polytopic parameter embedding.
Section~\ref{sec:prelim} introduces the differential ISS framework
and standing assumptions. Section~\ref{sec:slip-lemma} establishes
the slip-to-disturbance Lemma that bridges the physical slip mechanism
and the additive disturbance model. Section~\ref{sec:synthesis}
develops the main LMI synthesis result with complete proof.
Section~\ref{sec:results} presents the comprehensive numerical
validation. Section~\ref{sec:discuss} provides discussion and
positions the contributions within existing literature.
Section~\ref{sec:conclusion} concludes and outlines future research
directions.

%% ========================================================================
%%  SECTION II: SYSTEM MODELING
%% ========================================================================
\section{System Modeling}\label{sec:model}

\subsection{Unicycle Kinematics}
The mobile robot is modelled as a unicycle-type kinematic vehicle
with world-frame pose $(x, y, \theta) \in \R^2 \times S^1$ and
control inputs $(v, \omega)$, where $v$ denotes the linear forward
velocity and $\omega$ the angular velocity. The kinematic equations
governing the motion are
\begin{equation}\label{eq:unicycle}
\dot{x} = v\cos\theta, \quad
\dot{y} = v\sin\theta, \quad
\dot{\theta} = \omega.
\end{equation}
The nonholonomic constraint $\dot{x}\sin\theta - \dot{y}\cos\theta = 0$
is automatically satisfied by~\eqref{eq:unicycle} and reflects the
physical impossibility of lateral sliding without wheel slip.

\subsection{Reference Trajectory}
A smooth reference trajectory $(x_r(t), y_r(t), \theta_r(t))$ is
generated by an exosystem of the same structural form
as~\eqref{eq:unicycle}:
\begin{equation}\label{eq:reference}
\dot{x}_r = v_r\cos\theta_r, \quad
\dot{y}_r = v_r\sin\theta_r, \quad
\dot{\theta}_r = \omega_r,
\end{equation}
where the reference velocities $\rho(t) \triangleq (v_r(t),
\omega_r(t))$ are continuous functions of time. The pair $\rho(t)$
serves simultaneously as the reference command and as the LPV
scheduling parameter for the synthesis.

\subsection{Kanayama Error Model}
Following Kanayama~\textit{et~al.}~\cite{kanayama1990}, the tracking
error is expressed in the moving frame attached to the reference
pose:
\begin{equation}\label{eq:kanayama-transform}
\begin{bmatrix} e_x \\ e_y \\ e_\theta \end{bmatrix}
= \begin{bmatrix}
\cos\theta_r & \sin\theta_r & 0 \\
-\sin\theta_r & \cos\theta_r & 0 \\
0 & 0 & 1
\end{bmatrix}
\begin{bmatrix} x_r - x \\ y_r - y \\ \theta_r - \theta \end{bmatrix}.
\end{equation}
Differentiation of~\eqref{eq:kanayama-transform} and substitution
of~\eqref{eq:unicycle}--\eqref{eq:reference} yields the standard
Kanayama error dynamics:
\begin{equation}\label{eq:error-dynamics}
\begin{aligned}
\dot{e}_x &= \omega_r e_y - v + v_r\cos e_\theta, \\
\dot{e}_y &= -\omega_r e_x + v_r\sin e_\theta, \\
\dot{e}_\theta &= \omega_r - \omega.
\end{aligned}
\end{equation}
The control input is partitioned as
\begin{equation}\label{eq:control-decomp}
v = v_r + u_1, \quad \omega = \omega_r + u_2,
\end{equation}
where $u = (u_1, u_2)^{\top}$ represents the corrective feedback
signal to be designed. Substituting~\eqref{eq:control-decomp}
into~\eqref{eq:error-dynamics} and isolating the linear and nonlinear
contributions in $e$:
\begin{equation}\label{eq:cascade}
\dot{e} = A(\rho)\,e + B\,u + \varphi(e, \rho),
\end{equation}
where
\begin{align}
A(\rho) &= \begin{bmatrix}
0 & \omega_r & 0 \\
-\omega_r & 0 & v_r \\
0 & 0 & 0
\end{bmatrix}, \quad
B = \begin{bmatrix}
-1 & 0 \\ 0 & 0 \\ 0 & -1
\end{bmatrix}, \label{eq:AB} \\
\varphi(e, \rho) &= \begin{bmatrix}
v_r(\cos e_\theta - 1) \\
v_r(\sin e_\theta - e_\theta) \\
0
\end{bmatrix}. \label{eq:varphi}
\end{align}
The decomposition~\eqref{eq:cascade}--\eqref{eq:varphi} isolates the
parameter-dependent linear part $A(\rho) e + Bu$ from the residual
nonlinear vector field $\varphi(e, \rho)$, which vanishes at $e = 0$
and has $\varphi(0, \rho) = 0$, $\partial \varphi/\partial e\,|_{e=0}
= 0$ for all $\rho$. Consequently, $\varphi$ contributes no
linearization at the origin and acts as a vanishing perturbation
whose magnitude is locally Lipschitz in $e$.

\subsection{Multiplicative Wheel Slip}\label{subsec:slip-model}
The kinematic equations~\eqref{eq:unicycle} are nominally valid under
the no-slip assumption that wheels roll without sliding on the ground
surface. In practice, the actual velocities realized by the robot
deviate from the commanded values through a multiplicative slip
mechanism:
\begin{equation}\label{eq:slip-mech}
v_{\text{act}} = (1 + \sigma_v(t))\,v, \quad
\omega_{\text{act}} = (1 + \sigma_\omega(t))\,\omega,
\end{equation}
where the slip ratios $\sigma_v(t), \sigma_\omega(t)$ are bounded
time-varying functions satisfying
\begin{equation}\label{eq:slip-bound}
|\sigma_v(t)| \leq \bar{\sigma}, \quad
|\sigma_\omega(t)| \leq \bar{\sigma},
\quad \forall t \geq 0,
\end{equation}
for some known constant $\bar{\sigma} \in [0, 1)$. The
representation~\eqref{eq:slip-mech} captures realistic surface
phenomena: wet asphalt corresponds to $\bar{\sigma} \approx 0.2$,
oil spill to $\bar{\sigma} \approx 0.4$, gravel to $\bar{\sigma}
\approx 0.3$, and icy surfaces to $\bar{\sigma} \approx 0.5$. The
slip is multiplicative rather than additive because the friction
loss is proportional to the commanded velocity itself.

The actual error dynamics under slip are obtained by replacing $v$
and $\omega$ with $v_{\text{act}}$ and $\omega_{\text{act}}$
in~\eqref{eq:error-dynamics}:
\begin{equation}\label{eq:error-slip}
\begin{aligned}
\dot{e}_x &= \omega_r e_y - (1+\sigma_v)(v_r + u_1) + v_r\cos e_\theta, \\
\dot{e}_y &= -\omega_r e_x + v_r\sin e_\theta, \\
\dot{e}_\theta &= \omega_r - (1+\sigma_\omega)(\omega_r + u_2).
\end{aligned}
\end{equation}
Section~\ref{sec:slip-lemma} establishes that
\eqref{eq:error-slip} can be reformulated as the nominal
dynamics~\eqref{eq:cascade} perturbed by a norm-bounded additive
disturbance $d(t)$, thereby reducing the multiplicative slip problem
to the additively perturbed cascade.

\subsection{Polytopic Parameter Embedding}\label{subsec:polytope}
The scheduling parameter $\rho = (v_r, \omega_r)$ is assumed to take
values in a known convex polytope:
\begin{equation}\label{eq:polytope}
\mathcal{P} = \{\rho \in \R^2 :
v_r \in [v_{\min}, v_{\max}],\;
\omega_r \in [-\omega_{\max}, \omega_{\max}]\},
\end{equation}
with $v_{\min} > 0$ to exclude the singular point-stabilization
regime. The four vertices of $\mathcal{P}$ are denoted
$\rho^{(1)}, \ldots, \rho^{(4)}$. Bounds on the parameter rates,
$|\dot{v}_r| \leq \dot{v}_{\max}$ and $|\dot{\omega}_r| \leq
\dot{\omega}_{\max}$, define an additional polytope $\dot{\mathcal{P}}$
with four vertices $\dot{\rho}^{(1)}, \ldots, \dot{\rho}^{(4)}$ used
in the synthesis. The affine dependence of $A(\rho)$ on $\rho$,
visible in~\eqref{eq:AB}, ensures that linear constraints on
$A(\rho)$ over $\mathcal{P}$ can be enforced by imposing them at the
four vertices alone, a fact exploited extensively in the LMI
formulation of Section~\ref{sec:synthesis}.

%% ========================================================================
%%  SECTION III: PRELIMINARIES
%% ========================================================================
\section{Preliminaries and Standing Assumptions}\label{sec:prelim}

\subsection{Differential Input-to-State Stability}
The notion of differential ISS (D-ISS) extends classical
contraction~\cite{lohmiller1998,forni2014} to systems subject to
exogenous inputs. For a system $\dot{e} = f(e, \rho, d)$ with state
$e$, scheduling $\rho$, and exogenous input $d$, consider an
infinitesimal displacement $\delta e$ along the flow. The
displacement evolves, following the standard variational construction
of linear-systems theory~\cite{hespanha2018}, according to the
variational equation
\begin{equation}\label{eq:var-eq}
\delta\dot{e} = \frac{\partial f}{\partial e}\,\delta e
+ \frac{\partial f}{\partial d}\,\delta d.
\end{equation}
\begin{definition}[Semi-Global Differential ISS]\label{def:dISS}
The system $\dot{e} = f(e, \rho, d)$ is said to be semi-globally
differentially input-to-state stable on $\bar{B}_R(0)$ with rate
$\alpha > 0$ and gain $\gamma > 0$ if there exists a continuously
differentiable parameter-dependent metric $M(\rho) \succ 0$, defined
for $\rho \in \mathcal{P}$, such that the storage function
$V_\delta(\delta e, \rho) = \delta e^{\top} M(\rho)\,\delta e$
satisfies, for all $e \in \bar{B}_R(0)$, $\rho \in \mathcal{P}$, and
admissible $\delta d$:
\begin{equation}\label{eq:dISS-condition}
\dot{V}_\delta + 2\alpha V_\delta - \gamma^2 \norm{\delta d}^2 \leq 0.
\end{equation}
\end{definition}
Condition~\eqref{eq:dISS-condition} captures simultaneously the rate
of contraction $\alpha$ at which nearby trajectories converge in the
metric $M$ and the worst-case sensitivity $\gamma$ to disturbance
displacements. The semi-global qualifier indicates that the property
holds within a prescribed ball $\bar{B}_R(0)$, which is essential
because the residual $\varphi(e, \rho)$ in~\eqref{eq:varphi} fails to
be globally Lipschitz in $e$.

\subsection{Standing Assumptions}
The forthcoming developments rest on the following assumptions, each
of which is standard in the literature and motivated physically.

\begin{assumption}[Polytopic reference]\label{asm:polytope}
The reference scheduling parameter $\rho(t) = (v_r(t), \omega_r(t))$
takes values in the polytope $\mathcal{P}$ defined
in~\eqref{eq:polytope}, with $v_{\min} > 0$, and is continuously
differentiable with derivative $\dot{\rho}(t)$ contained in the
polytope $\dot{\mathcal{P}}$.
\end{assumption}

\begin{assumption}[Slip bounds]\label{asm:slip}
The slip ratios $\sigma_v(t), \sigma_\omega(t)$
in~\eqref{eq:slip-mech} are measurable functions of time satisfying
the pointwise bound~\eqref{eq:slip-bound} for a known constant
$\bar{\sigma} \in [0, 1)$.
\end{assumption}

\begin{assumption}[Domain of analysis and Lipschitz residual]\label{asm:radius}
The synthesis is performed on a prescribed compact domain
$\bar{B}_R(0)$ that defines the semi-global region of analysis. On
this domain, the nonlinear residual $\varphi$ satisfies the Lipschitz
bound
\begin{equation}\label{eq:Lipschitz}
\norm{\varphi(e, \rho)} \leq L_R \norm{e},
\quad \forall e \in \bar{B}_R(0),\; \rho \in \mathcal{P},
\end{equation}
with the constant
\begin{equation}\label{eq:LR-tight}
L_R \triangleq v_{\max}\sqrt{\tfrac{R^{2}}{4} + \tfrac{R^{4}}{36}}.
\end{equation}
Forward invariance of $\bar{B}_R(0)$ under the closed-loop dynamics
is \emph{not} assumed a priori; it is certified through
Corollary~\ref{cor:fwd-invar} below as a consequence of the
dissipation property of the synthesized controller.
\end{assumption}
The Lipschitz constant~\eqref{eq:LR-tight} follows from the
component-wise estimates $|\cos e_\theta - 1| \leq
\tfrac{1}{2}e_\theta^{2} \leq \tfrac{R}{2}|e_\theta|$ and
$|\sin e_\theta - e_\theta| \leq \tfrac{1}{6}|e_\theta|^{3}
\leq \tfrac{R^{2}}{6}|e_\theta|$ valid for
$|e_\theta| \leq R$, combined through the Euclidean norm with the
upper bound $v_{\max}$ on $v_r$. The constant is substantially
tighter than the conservative product $v_{\max} R = 0.36$ used in
earlier conference treatments: for $R = 0.30$ and $v_{\max} = 1.20$,
the present formula yields $L_R \approx 0.181$, reducing the
worst-case residual estimate by roughly a factor of two and
correspondingly relaxing the dissipation requirement in the LMI
synthesis.

\begin{assumption}[Weighted feedback-gain constraint]\label{asm:saturation}
The controller satisfies the weighted-gain bound
\begin{equation}\label{eq:K-bound}
Y(\rho) W(\rho)^{-1} Y(\rho)^{\top} \preceq K_{\max}^{2} I,
\quad \forall \rho \in \mathcal{P},
\end{equation}
where $K_{\max} > 0$ is the design parameter. Together with the
conditioning floor $W(\rho) \succeq \epsilon_W I$, this implies the
operator-norm bound
$\norm{K(\rho)} \leq K_{\max}/\sqrt{\epsilon_W}$. The constraint
serves as a gain-bounded feedback proxy for actuator-limited
operation. Practitioners requiring a hard amplitude limit
$\tilde{K}_{\max}$ on the unweighted gain should select the design
parameter $K_{\max} = \tilde{K}_{\max}\sqrt{\epsilon_W}$ to recover
the desired bound exactly through the chain
$\norm{K(\rho)} \leq K_{\max}/\sqrt{\epsilon_W} = \tilde{K}_{\max}$.
\end{assumption}

%% ========================================================================
%%  SECTION IV: SLIP-TO-DISTURBANCE LEMMA
%% ========================================================================
\section{Slip-Induced Disturbance Characterization}\label{sec:slip-lemma}
The first main technical result establishes that the multiplicative
slip mechanism~\eqref{eq:slip-mech} admits an equivalent additive
disturbance representation with an explicit norm bound. This Lemma
bridges the physical slip model and the convex LMI framework
developed in Section~\ref{sec:synthesis}.

\begin{lemma}[Slip-to-Disturbance Bound]\label{lem:slip}
Under Assumptions~\ref{asm:polytope}--\ref{asm:saturation}, let
$\tilde{K}_{\max} \triangleq K_{\max}/\sqrt{\epsilon_W}$ denote the
unweighted operator-norm bound on the feedback gain implied by the
weighted constraint of Assumption~\ref{asm:saturation}. The error
dynamics under multiplicative slip~\eqref{eq:error-slip} admit the
representation
\begin{equation}\label{eq:cascade-slip}
\dot{e} = A(\rho)\,e + B\,u + \varphi(e, \rho) + d_{\mathrm{slip}}(t),
\end{equation}
where the disturbance satisfies the pointwise bound
\begin{equation}\label{eq:slip-disturbance-bound}
\norm{d_{\mathrm{slip}}(t)} \leq \delta_{\max}, \quad \forall t \geq 0,
\end{equation}
with the explicit constant
\begin{equation}\label{eq:delta-max-formula}
\delta_{\max} = \bar{\sigma}\sqrt{(v_{\max} + \tilde{K}_{\max} R)^{2}
+ (\omega_{\max} + \tilde{K}_{\max} R)^{2}}.
\end{equation}
\end{lemma}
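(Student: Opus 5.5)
The plan is to define the disturbance as the exact difference between the slipped dynamics~\eqref{eq:error-slip} and the nominal dynamics~\eqref{eq:cascade}. Subtracting the nominal right-hand side (with $v = v_r + u_1$, $\omega = \omega_r + u_2$) from the slipped one leaves
\begin{equation*}
d_{\mathrm{slip}}(t) = \begin{bmatrix} -\sigma_v(t)\,(v_r + u_1) \\ 0 \\ -\sigma_\omega(t)\,(\omega_r + u_2) \end{bmatrix}
= -\sigma_v(t)\,(v_r+u_1)\,\mathbf{e}_1 - \sigma_\omega(t)\,(\omega_r+u_2)\,\mathbf{e}_3 .
\end{equation*}
Here $\mathbf{e}_1,\mathbf{e}_3$ denote the first and third canonical basis vectors of $\R^3$. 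The representation~\eqref{eq:cascade-slip} then holds by construction, because the terms $\omega_r e_y$, $v_r\cos e_\theta$, $v_r \sin e_\theta$ and $\omega_r$ are untouched by slip. Measurability of $d_{\mathrm{slip}}$ in $t$ follows from Assumption~\ref{asm:slip} and the continuity of $\rho$ and $u$ along solutions.

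Next I would bound each nonzero component. Using $|\sigma_v|,|\sigma_\omega|\le\bar\sigma$ from Assumption~\ref{asm:slip}, the first component satisfies $|d_1| \le \bar\sigma(|v_r| + |u_1|)$. The third satisfies $|d_3| \le \bar\sigma(|\omega_r| + |u_2|)$. Assumption~\ref{asm:polytope} gives $|v_r|\le v_{\max}$ and $|\omega_r|\le\omega_{\max}$. For the feedback, I would write $u = K(\rho)e$. Assumption~\ref{asm:saturation} together with $W\succeq\epsilon_W I$ gives $\norm{K(\rho)}\le \tilde K_{\max}$. For $e\in\bar B_R(0)$ this yields $|u_i|\le\norm{u}\le \tilde K_{\max}R$ for each $i$. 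Combining the two components through the Euclidean norm gives exactly~\eqref{eq:delta-max-formula}. That formula bounds each $|u_i|$ separately by the full $\norm{u}$, which is slightly conservative but matches the stated constant.

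The main obstacle is the clause ``$\forall t\ge 0$''. The estimate $\norm{u}\le\tilde K_{\max}R$ needs $e(t)\in\bar B_R(0)$, and Assumption~\ref{asm:radius} explicitly does \emph{not} assume forward invariance; that is deferred to Corollary~\ref{cor:fwd-invar}. I would therefore state the bound as holding on every interval where the trajectory remains in $\bar B_R(0)$, for instance up to the first exit time. The global-in-time statement would then follow once Corollary~\ref{cor:fwd-invar} certifies invariance. To avoid circularity, that corollary must itself use only this conditional bound, via a standard first-exit-time contradiction argument. A secondary subtlety is the derivation of the operator-norm bound from~\eqref{eq:K-bound}. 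Since $K = YW^{-1}$, we have $KK^\top = YW^{-1}W^{-1}Y^\top \preceq \epsilon_W^{-1} YW^{-1}Y^\top$, using $W^{-1}\preceq\epsilon_W^{-1}I$ in the middle factor. This gives $\norm{K}^2\le K_{\max}^2/\epsilon_W$, which I would verify explicitly rather than cite.
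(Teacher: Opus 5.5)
Your proposal is correct and follows the paper's proof essentially step for step. The paper likewise defines $d_{\mathrm{slip}}$ as the difference between~\eqref{eq:error-slip} and~\eqref{eq:cascade}, bounds the first and third components by $\bar\sigma(v_{\max}+\tilde K_{\max}R)$ and $\bar\sigma(\omega_{\max}+\tilde K_{\max}R)$ via $|u_i|\le\norm{u}\le\norm{K(\rho)}\norm{e}\le\tilde K_{\max}R$, and combines them in the Euclidean norm. Your explicit check $KK^\top\preceq\epsilon_W^{-1}YW^{-1}Y^\top$ and your first-exit-time caveat are refinements rather than departures: the paper simply invokes ``semi-global confinement $\norm{e}\le R$'' even though Assumption~\ref{asm:radius} disclaims forward invariance, so your conditional reading of the ``$\forall t\ge 0$'' clause is the more careful one.
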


\begin{proof}
Subtracting the nominal dynamics~\eqref{eq:cascade} from the
slip-perturbed dynamics~\eqref{eq:error-slip} yields the additive
disturbance
\begin{equation}\label{eq:dslip-def}
d_{\mathrm{slip}}(t) = \begin{bmatrix}
-\sigma_v(t)(v_r + u_1) \\
0 \\
-\sigma_\omega(t)(\omega_r + u_2)
\end{bmatrix}.
\end{equation}
The first component is bounded as
\begin{equation*}
|d_{\mathrm{slip},1}|
= |\sigma_v|\,|v_r + u_1|
\leq \bar{\sigma}(|v_r| + |u_1|)
\leq \bar{\sigma}(v_{\max} + \tilde{K}_{\max} R),
\end{equation*}
using $|v_r| \leq v_{\max}$ from~\eqref{eq:polytope} and
$|u_1| \leq \norm{u} \leq \norm{K(\rho)}\norm{e}
\leq \tilde{K}_{\max} R$, which follows from the operator-norm
implication $\norm{K(\rho)} \leq K_{\max}/\sqrt{\epsilon_W} =
\tilde{K}_{\max}$ established in
Assumption~\ref{asm:saturation} together with the semi-global
confinement $\norm{e} \leq R$. The third component admits the
analogous estimate
\begin{equation*}
|d_{\mathrm{slip},3}|
= |\sigma_\omega|\,|\omega_r + u_2|
\leq \bar{\sigma}(\omega_{\max} + \tilde{K}_{\max} R).
\end{equation*}
Combining the two non-zero components through the Euclidean
identity $\norm{x}^{2} = x_{1}^{2} + x_{3}^{2}$ valid whenever
$x_{2} = 0$, taking square roots, and recognizing that radians are
treated as dimensionless quantities in the state-space norm under
the standard convention of nonholonomic mobile-robot control yields
the formula~\eqref{eq:delta-max-formula}. The
representation~\eqref{eq:cascade-slip} then follows by direct
substitution.
\end{proof}

\begin{remark}\label{rem:slip-physics}
Formula~\eqref{eq:delta-max-formula} expresses the slip-induced
disturbance bound in terms of the unweighted gain limit
$\tilde{K}_{\max}$ rather than the weighted-gain design parameter
$K_{\max}$ that appears directly in the LMI synthesis. This
distinction is essential because the physical disturbance produced
by multiplicative slip depends on the actual control input
$u = K(\rho)\,e$, whose magnitude is governed by the operator-norm
bound $\norm{K(\rho)} \leq \tilde{K}_{\max}$. Practitioners
operating in slip-dominated regimes should select $\tilde{K}_{\max}$
from physical actuator considerations and then derive the LMI design
parameter $K_{\max} = \tilde{K}_{\max}\sqrt{\epsilon_W}$ per
Assumption~\ref{asm:saturation}. The slip-induced disturbance scales
linearly with the surface slip bound $\bar{\sigma}$ and
root-quadratically with the reference velocity envelope and
unweighted gain limit; the bound is a worst-case estimate over the
entire polytope, and when online slip estimation is available, the
design parameter may be updated adaptively to the instantaneous
operating regime, an extension discussed in
Section~\ref{sec:conclusion}.
\end{remark}

\begin{remark}\label{rem:dimensional}
The expression~\eqref{eq:delta-max-formula} mixes terms with
distinct physical units ($v_{\max}$ in $\mathrm{m/s}$ and
$\omega_{\max}$ in $\mathrm{rad/s}$). This is a deliberate
consequence of the standard convention in nonholonomic mobile-robot
control wherein the Kanayama error vector $e = (e_x, e_y,
e_\theta)^{\top}$ aggregates position and orientation components
into a single Euclidean norm with radians treated as dimensionless.
For applications requiring physical homogeneity, a scaling matrix
$S = \mathrm{diag}(s_p, s_p, s_\theta)$ with $s_\theta$ chosen on a
characteristic-length basis can be introduced; the
synthesis~\eqref{eq:LMI-W}--\eqref{eq:LMI-Dstab} carries over
unchanged after the substitution $e \mapsto Se$.
\end{remark}

\begin{remark}\label{rem:design-vs-worst}
The disturbance bound $\delta_{\max}$ appearing in the LMI
synthesis~\eqref{eq:LMI-dissip} below is a \emph{design parameter}
that quantifies the persistent disturbance budget against which the
$H_\infty$ gain certificate is established. It is therefore
important to distinguish three quantities that operate at distinct
conceptual levels within the present framework. The first is the
\emph{analytical worst-case ceiling}
\begin{equation*}
\delta_{\max}^{\mathrm{wc}} \approx 5.07,
\end{equation*}
obtained from Lemma~\ref{lem:slip} by substituting
$\bar{\sigma} = 0.50$ (peak slip), $v_{\max} = 1.20$,
$\omega_{\max} = 0.40$, $\tilde{K}_{\max} \approx 21.21$ (the
unweighted operator-norm bound resulting from the weighted LMI
parameter $K_{\max} = 3$ at conditioning floor $\epsilon_W = 0.02$),
and $R = 0.30$ into~\eqref{eq:delta-max-formula}. This value
represents an upper envelope under simultaneous peak slip and
worst-case operator-norm gain saturation over the entire operating
envelope; it is a conservative theoretical ceiling rather than a
realistic operating condition, since the gain realized by the
synthesized controller in nominal operation is substantially below
the worst-case bound implied by the conditioning constraint. The
second is the \emph{certified persistent disturbance budget}
\begin{equation*}
\delta_{\max}^{\mathrm{cert}} = 0.10,
\end{equation*}
adopted as the design parameter in the LMI synthesis of
Section~\ref{sec:results} and used to fix the disturbance channel
of~\eqref{eq:LMI-dissip}. The third is the \emph{empirical
transient severe-slip regime} probed in the variable-terrain
simulations of Section~\ref{subsec:variable-terrain}, in which
localized slip patches with magnitudes up to $|\sigma_v| = 0.50$
exceed the certified budget instantaneously. Theorem~\ref{thm:main}
certifies the second quantity formally, whereas the
variable-terrain experiments probe the third regime empirically
to characterize the robustness margin beyond the certified envelope.
\end{remark}

\begin{remark}\label{rem:Brockett}
The transformation underlying~\eqref{eq:cascade-slip} preserves the
structural rank deficiency of the input matrix $B$
in~\eqref{eq:AB}: the second column of $B$ is identically zero,
reflecting the fact that no input directly affects the lateral error
$e_y$. This rank deficiency is the LMI-domain shadow of Brockett's
obstruction~\cite{brockett1983} and propagates to the synthesis as
a conditioning constraint that scales with $v_{\min}^{-2}$, an
observation we elaborate on in Section~\ref{subsec:cons-decomp}.
\end{remark}

%% ========================================================================
%%  SECTION V: LMI-BASED SYNTHESIS
%% ========================================================================
\section{LMI Synthesis with Parameter-Dependent Metric}\label{sec:synthesis}

\subsection{Parameter-Dependent Storage Function}
The proposed framework employs a parameter-dependent quadratic
storage function $V(e,\rho) = e^{\top} M(\rho)\,e$ in which the
metric $M(\rho)$ is implicitly defined as the inverse of an affinely
parameterized matrix $W(\rho)$:
\begin{equation}\label{eq:V-func}
V(e, \rho) = e^{\top} M(\rho)\,e, \qquad M(\rho) = W(\rho)^{-1},
\end{equation}
where the \emph{inverse metric} is parameterized affinely as
\begin{equation}\label{eq:W-affine}
W(\rho) = W_0 + v_r W_1 + \omega_r W_2,
\end{equation}
with $W_i \in \R^{3\times 3}$, $W_i = W_i^{\top}$ for $i = 0, 1, 2$.
The synthesis decision variables are therefore the symmetric matrices
$(W_0, W_1, W_2)$ rather than the metric coefficients themselves: the
LMI program is affine in $W(\rho)$, whereas the storage metric
$M(\rho) = W(\rho)^{-1}$ is in general a \emph{nonlinear} function
of $\rho$ obtained pointwise by matrix inversion. This parameterization
choice is deliberate: while a directly affine $M(\rho) = M_0 + v_r M_1
+ \omega_r M_2$ would simplify certain expressions, the resulting
synthesis would involve products $M(\rho) A(\rho)$ that cannot be
linearized in the decision variables. The affine-in-$W$ formulation
preserves the convexity of the LMI in the decision variables at every
fixed $\rho$, while the parameter-dependence of $M$ is mediated
through the inversion at the cost that products $A(\rho) W(\rho)$
become polynomial in $\rho$, a consequence addressed in
Section~\ref{sec:synthesis} through grid enforcement combined with
the formal grid-to-continuum extension of
Lemma~\ref{lem:grid-to-cont}. The matrix $W(\rho)$ is required to
satisfy $W(\rho) \succ 0$ for all $\rho \in \mathcal{P}$, which by
convexity of the cone of positive-definite matrices reduces to
imposing $W(\rho^{(i)}) \succ 0$ at the four vertices of
$\mathcal{P}$.

The control law is parameter-scheduled:
\begin{equation}\label{eq:K-affine}
u = K(\rho)\,e, \quad
K(\rho) = Y(\rho)\,W(\rho)^{-1},
\end{equation}
where the auxiliary gain matrix $Y(\rho) \in \R^{2\times 3}$ is
parameterized affinely:
\begin{equation}\label{eq:Y-affine}
Y(\rho) = Y_0 + v_r Y_1 + \omega_r Y_2.
\end{equation}
The decomposition $K = Y W^{-1}$ is the classical trick that
linearizes the synthesis condition through the change of variables
$(W, Y)$, since the closed-loop matrix $A_{\mathrm{cl}}(\rho) =
A(\rho) + B K(\rho)$ would otherwise contain the product
$M(\rho)\cdot K(\rho)$ that is bilinear in $M$ and $K$.

\subsection{Main Synthesis Theorem}
We now state the central LMI synthesis result. The closed-loop matrix
is defined as
\begin{equation}\label{eq:Acl}
A_{\mathrm{cl}}(\rho) \triangleq A(\rho) + B K(\rho),
\end{equation}
where $A(\rho)$ and $B$ are the system matrices defined
in~\eqref{eq:AB} and $K(\rho)$ is the parameter-scheduled gain
introduced in~\eqref{eq:K-affine}. Let
$\dot{W}(\rho, \dot{\rho}) = \dot{v}_r W_1 + \dot{\omega}_r W_2$
denote the rate of change of the inverse metric along the parameter
trajectory, and let $\Xi(\rho)$ denote the matrix-valued expression
\begin{equation}\label{eq:Xi}
\Xi(\rho) = A(\rho)W(\rho) + W(\rho)A(\rho)^{\top}
+ BY(\rho) + Y(\rho)^{\top}B^{\top} - \dot{W}(\rho, \dot{\rho}).
\end{equation}
For clarity in distinguishing the two geometric quantities denoted by
similar letters, we emphasize that $R$ refers throughout to the
semi-global radius of the state-space ball $\bar{B}_R(0)$, whereas
$r$ refers exclusively to the eigenvalue-domain radius of the
$\mathcal{D}$-stability disk centered at $-q$.

\begin{theorem}[Vertex-Synthesis and Grid-Certified LMI Condition for Semi-Global D-ISS]\label{thm:main}
Let Assumptions~\ref{asm:polytope}--\ref{asm:saturation} hold, and
fix a finite enforcement set $\mathcal{G} \subset \mathcal{P} \times
\dot{\mathcal{P}}$ comprising a dense uniform grid of $N_g$ parameter
samples. Suppose there exist symmetric matrices $W_0, W_1, W_2 \in
\R^{3\times 3}$, matrices $Y_0, Y_1, Y_2 \in \R^{2\times 3}$, scalars
$\gamma^2 > 0$, $\mu > 0$, and $\epsilon_W \in (0, 1)$ such that the
following matrix inequalities hold simultaneously at every grid point
$(\rho^{(k)}, \dot{\rho}^{(\ell)}) \in \mathcal{G}$, with
$W(\rho) = W_0 + v_r W_1 + \omega_r W_2$,
$Y(\rho) = Y_0 + v_r Y_1 + \omega_r Y_2$:
\begin{subequations}
\begin{align}
& \epsilon_W I \preceq W(\rho^{(k)}) \preceq \epsilon_W^{-1} I,
\label{eq:LMI-W} \\
& \begin{bmatrix}
W(\rho^{(k)}) & Y(\rho^{(k)})^{\top} \\
Y(\rho^{(k)}) & K_{\max}^2 I_2
\end{bmatrix} \succeq 0,
\label{eq:LMI-sat} \\
& \begin{bmatrix}
-r W(\rho^{(k)}) & A(\rho^{(k)}) W(\rho^{(k)}) + B Y(\rho^{(k)}) + q W(\rho^{(k)}) \\
\star & -r W(\rho^{(k)})
\end{bmatrix} \prec 0,
\label{eq:LMI-Dstab}
\end{align}
\end{subequations}
together with the column-spanning dissipation LMI~\eqref{eq:LMI-dissip}
shown at the top of the next page, where $\alpha > 0$ is the
prescribed decay rate, $q > 0$ and $r \in (0, q)$ are the center
offset and radius of the $\mathcal{D}$-stability disk, and $\star$
denotes the symmetric transpose entry. Suppose further that the
aggregate sufficient condition $L_{\mathcal{L}} \cdot h <
\varepsilon_{\mathrm{grid}}$ of Lemma~\ref{lem:grid-to-cont}
holds, where $L_{\mathcal{L}} = \max\{L_{\mathrm{diss}},
L_{\mathcal{D}}\}$ denotes the aggregate Lipschitz constant of the
two non-affine LMI blocks, $\varepsilon_{\mathrm{grid}} =
\min\{\varepsilon_{\mathrm{diss}}, \varepsilon_{\mathcal{D}}\}$
denotes the aggregate grid margin attained by the synthesized
variables across $\mathcal{G}$, and $h$ is the grid fill distance.
Then the closed-loop system with $K(\rho) = Y(\rho) W(\rho)^{-1}$
is semi-globally D-ISS on $\bar{B}_R(0)$ in the sense of
Definition~\ref{def:dISS}, with rate $\alpha$ and $L_2$-gain at
most $\gamma$ from any additive disturbance $d(t)$ satisfying
$\norm{d(t)} \leq \delta_{\max}$ to the tracking error norm.
\end{theorem}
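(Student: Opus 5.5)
The plan is a four-stage cascade, taking each LMI block in turn and converting it into a property of the closed loop.

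\emph{Stage 1: extend the grid certificates to the whole parameter set.} The conditions \eqref{eq:LMI-W} and \eqref{eq:LMI-sat} are affine in $\rho$. Hence they hold on all of $\mathcal{P}$: the grid contains the vertices, and both conditions survive convex combination. The dissipation block \eqref{eq:LMI-dissip} and the $\mathcal{D}$-stability block \eqref{eq:LMI-Dstab} contain the products $A(\rho)W(\rho)$, which are polynomial in $\rho$. For these two I would invoke Lemma~\ref{lem:grid-to-cont}. Each block-valued map is Lipschitz on the compact set $\mathcal{P}\times\dot{\mathcal{P}}$ with constant at most $L_{\mathcal{L}}$, and every point lies within distance $h$ of a grid point. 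Weyl's eigenvalue perturbation inequality then gives
\[
\lambda_{\max}(\cdot)\le -\varepsilon_{\mathrm{grid}}+L_{\mathcal{L}}h<0 .
\]
So both strict inequalities hold for every $(\rho,\dot\rho)$. With $W(\rho)\succeq\epsilon_W I$ in hand, a Schur complement of \eqref{eq:LMI-sat} yields \eqref{eq:K-bound}, and therefore $\norm{K(\rho)}\le\tilde K_{\max}$. This is what Lemma~\ref{lem:slip} needs.

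\emph{Stage 2: convert the LMIs into the differential dissipation inequality.} Apply the congruence $\mathrm{diag}(M(\rho),I,\dots)$, with $M=W^{-1}$, to the continuum dissipation LMI. Use $\dot M=-M\dot W M$ and $K=YM$. The $\Xi$-block becomes
\[
M A_{\mathrm{cl}}+A_{\mathrm{cl}}^\top M+\dot M .
\]
Together with the $2\alpha$ term, the $\gamma^2$ disturbance column, and the $\mu$-weighted $\mathcal{S}$-procedure column for the residual, this gives, after a Schur complement,
\[
\He\bigl(M(A_{\mathrm{cl}}+\Delta)\bigr)+\dot M+2\alpha M+\gamma^{-2}\cdots\preceq 0
\]
for every perturbation $\Delta$ satisfying $\Delta^\top\Delta\preceq L_R^2 I$.

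The variational equation \eqref{eq:var-eq} for the closed loop is $\delta\dot e=(A_{\mathrm{cl}}+\partial_e\varphi)\delta e+\delta d$. On $\bar B_R(0)$ I would bound $\norm{\partial_e\varphi}\le L_R$ componentwise, by the same estimates used for \eqref{eq:LR-tight}. Then the multiplier inequality $\mu(L_R^2\norm{\delta e}^2-\norm{\partial_e\varphi\,\delta e}^2)\ge 0$ is what the $\mathcal{S}$-procedure absorbs. Evaluating the quadratic form at $(\delta e,\partial_e\varphi\,\delta e,\delta d)$ gives exactly \eqref{eq:dISS-condition} with $V_\delta=\delta e^\top M\delta e$.

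Continuous differentiability of $M$ follows from the affine form of $W$ and the floor $\epsilon_W$. Separately, \eqref{eq:LMI-Dstab} multiplied by $\mathrm{diag}(M,M)$ is the standard Chilali--Gahinet disk condition. It places the eigenvalues of $A_{\mathrm{cl}}(\rho)$ in the disk of centre $-q$ and radius $r$; this is a side property and is not needed for the D-ISS conclusion.

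\emph{Stage 3: forward invariance, which I expect to be the main obstacle.} Stage 2 only holds while $e(t)\in\bar B_R(0)$, and Lemma~\ref{lem:slip} also assumes confinement, so the argument is potentially circular. I would break the circularity with a first-exit-time argument, which is what Corollary~\ref{cor:fwd-invar} formalizes. Integrate the differential inequality along the straight-line homotopy between $e$ and the trajectory $0$ (the solution with $d=0$). This gives the incremental bound
\[
\dot V+2\alpha V\le\gamma^2\norm{d}^2 ,
\]
and hence
\[
V(t)\le e^{-2\alpha t}V(0)+\tfrac{\gamma^2\delta_{\max}^2}{2\alpha}.
\]
By the eigenvalue bounds of \eqref{eq:LMI-W}, $\norm{e}^2\le\epsilon_W^{-1}V$. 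So if $\epsilon_W^{-1}\max\{V(0),\gamma^2\delta_{\max}^2/(2\alpha)\}\le R^2$, the trajectory cannot reach the boundary: at a first exit time $t^*$ the bound would give $\norm{e(t^*)}<R$, a contradiction. The delicate points are that the homotopy segment stays inside the ball (true because the ball is convex) and that the sublevel set is compatible with the sandwich $\epsilon_W I\preceq W\preceq\epsilon_W^{-1}I$. This is also where I would make the admissible initial-condition set explicit.

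\emph{Stage 4: the $L_2$ gain.} Integrating $\dot V\le-2\alpha V+\gamma^2\norm{d}^2$ from $0$ to $T$ gives $\int_0^T\norm{e}^2\,dt$, weighted by $\epsilon_W$, bounded by $\gamma^2\int_0^T\norm{d}^2\,dt+V(0)$. This is the claimed $L_2$ gain once the identity-weighted output term in \eqref{eq:LMI-dissip} is accounted for. Because invariance holds, Lemma~\ref{lem:slip} then applies to the slip disturbance $d_{\mathrm{slip}}$, which closes the cascade.
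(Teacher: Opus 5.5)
Your Stages 1--2 run the paper's argument in reverse, and that part is fine. The paper goes from the dissipation inequality to \eqref{eq:LMI-dissip} via Young's inequality, the congruence $\mathrm{diag}(W,I)$ and a Schur complement, then invokes Lemma~\ref{lem:grid-to-cont}. Your congruence with $\mathrm{diag}(M,I,I)$ plus an $\mathcal S$-procedure is equivalent.

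\textbf{The failing step.} You claim that $\norm{\partial_e\varphi}\le L_R$ on $\bar B_R(0)$ ``by the same estimates used for \eqref{eq:LR-tight}''. This is false.
\begin{itemize}
\item Those estimates are sector bounds on $\varphi$ itself. They exploit that $\varphi$ vanishes to second order in $e_\theta$, and differentiating costs a factor $2$ in the first component and $3$ in the second.
\item Exactly: the only nonzero column of $\partial_e\varphi$ is $v_r(-\sin e_\theta,\,\cos e_\theta-1,\,0)^{\top}$. Its norm is $2v_r|\sin(e_\theta/2)|$, which reaches $2v_{\max}\sin(R/2)\approx v_{\max}R\approx 2L_R$ on the ball. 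For the paper's data that is $0.359$ versus $L_R=0.181$.
\item So at $e=(0,0,R)$, $v_r=v_{\max}$, $\delta e=(0,0,1)$, the multiplier inequality $\mu(L_R^2\norm{\delta e}^2-\norm{\partial_e\varphi\,\delta e}^2)\ge 0$ is violated.
\item Hence your $\mathcal S$-procedure evaluation does not deliver \eqref{eq:dISS-condition} on the whole ball with the $L_R$ appearing in \eqref{eq:LMI-dissip}.
\end{itemize}
The paper's proof has the same hole. Its Stage~2 and Appendix Stage~B cite Assumption~\ref{asm:radius} for $\norm{\delta\varphi}\le L_R\norm{\delta e}$, but that assumption only gives a sector bound. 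So you are not diverging from the paper here, but the step still needs repair. One option is to put the true incremental constant $2v_{\max}\sin(R/2)\le v_{\max}R$ into \eqref{eq:LMI-dissip}. By monotonicity in $L_R$, the reported synthesis with $L_R^{\mathrm{synth}}=v_{\max}R$ already satisfies this. The other option is to weaken the conclusion.

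For your Stages~3--4 you do not need the Jacobian at all. Apply Young's inequality directly to $2e^{\top}M\varphi(e)$ with the sector bound \eqref{eq:Lipschitz}. This gives $\dot V+2\alpha V\le\gamma^2\norm{d}^2$ for $V=e^{\top}M(\rho)e$ with the stated $L_R$. Your homotopy detour is exactly where the gap re-enters, and it can be dropped.

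\textbf{Stage~4 also overreaches.} There is no identity-weighted output term in \eqref{eq:LMI-dissip}. Its $\mu I$ entry is the congruence image $W(\mu M^2)W$ of the Young term, and it is fully consumed in absorbing $\varphi$. Integrating the dissipation inequality gives
\[
2\alpha\epsilon_W\int_0^T\norm{e}^2\,dt\le\gamma^2\int_0^T\norm{d}^2\,dt+V(0).
\]
That is a gain of $\gamma/\sqrt{2\alpha\epsilon_W}$ to $\norm{e}$ (or $\gamma/\sqrt{2\alpha\lambda_{\min}(M)}$, the constant of Corollary~\ref{cor:ss-bound}), not $\gamma$. The paper does not prove gain $\gamma$ to $\norm{e}$ either. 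Its proof stops at the pointwise inequality required by Definition~\ref{def:dISS}, and the gain clause should be read that way.

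\textbf{On invariance.} Forward invariance is not part of the theorem, since Definition~\ref{def:dISS} is pointwise on the ball. Still, your first-exit argument improves on Corollary~\ref{cor:fwd-invar}, whose proof relies on Corollary~\ref{cor:ss-bound} and so presupposes the confinement it concludes. Make your size condition strict: with ``$\le R^2$'' the exit time only yields $\norm{e(t^*)}\le R$, which is no contradiction.
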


\begin{figure*}[t]
\normalsize
\setlength{\arraycolsep}{3pt}
\begin{equation}\label{eq:LMI-dissip}
\begin{bmatrix}
\Xi(\rho^{(k)},\dot{\rho}^{(\ell)}) + 2\alpha W(\rho^{(k)}) + \mu I
 & I & W(\rho^{(k)}) \\
I & -\gamma^{2} I & 0 \\
W(\rho^{(k)}) & 0 & -\dfrac{\mu}{L_R^{2}} I
\end{bmatrix} \prec 0,
\quad \forall (\rho^{(k)},\dot{\rho}^{(\ell)})\in\mathcal{G}.
\end{equation}
\hrulefill
\end{figure*}

\begin{remark}[Grid Enforcement vs.\ Vertex Sufficiency]\label{rem:grid}
The conditioning constraint~\eqref{eq:LMI-W} and the weighted-gain
constraint~\eqref{eq:LMI-sat} depend affinely on $(\rho, \dot{\rho})$,
since their constituent matrix blocks involve only $W(\rho)$ and
$Y(\rho)$ both of which are affine in the scheduling parameters by
construction. For these two blocks, enforcement at the sixteen
vertices of $\mathcal{P} \times \dot{\mathcal{P}}$ is indeed
sufficient by convexity of the parameter polytope. The dissipation
LMI~\eqref{eq:LMI-dissip} and the $\mathcal{D}$-stability
LMI~\eqref{eq:LMI-Dstab}, however, both contain the matrix product
$A(\rho) W(\rho)$—directly in the dissipation block through
$\Xi(\rho, \dot{\rho})$ defined in~\eqref{eq:Xi}, and through the
expansion $A_{\mathrm{cl}}(\rho) W(\rho) = A(\rho) W(\rho) + B
Y(\rho)$ in the $\mathcal{D}$-stability block. Given the affine
parameterizations $A(\rho) = A_0 + v_r A_v + \omega_r A_\omega$ and
$W(\rho) = W_0 + v_r W_1 + \omega_r W_2$, this product generates
quadratic and bilinear scheduling terms $v_r^{2} A_v W_1$,
$\omega_r^{2} A_\omega W_2$, $v_r \omega_r (A_v W_2 + A_\omega W_1)$
that are \emph{not} affine in $\rho$. Vertex-only enforcement is
therefore not formally sufficient for either of these two blocks.
The present synthesis adopts a two-stage pipeline that addresses
this limitation rigorously: the LMIs are first solved on the
sixteen polytope vertices to obtain a candidate controller
$(W_0^\star, W_1^\star, W_2^\star, Y_0^\star, Y_1^\star, Y_2^\star,
\gamma^\star, \mu^\star)$, and the candidate is then certified on a
dense uniform grid $\mathcal{G}$ of $N_g = 11^{4} = 14641$ parameter
samples covering $\mathcal{P}\times\dot{\mathcal{P}}$ with eleven
subdivisions per dimension. The extension of the grid-verified
certificate to the continuum is established formally through
Lemma~\ref{lem:grid-to-cont}, which provides a Lipschitz-based
sufficient condition relating the worst-case LMI residual margin
across the grid to the grid fill distance. A formally tight
alternative based on multi-affine polytopic relaxation with slack
variables~\cite{apkarian1995,scherer2001} is available but
introduces additional decision variables; the
vertex-synthesis-plus-grid-certification route is preferred here
for its computational tractability and operational transparency.
\end{remark}

\begin{proof}
The proof proceeds in four stages.

\emph{Stage 1 (Variational contraction inequality).} Differentiating
the storage function $V_\delta = \delta e^{\top} M(\rho) \delta e$
along the variational dynamics yields, after using $M = W^{-1}$ and
applying the identity $\partial M/\partial t = -M (\partial W /
\partial t) M$:
\begin{equation}\label{eq:V-dot}
\dot{V}_\delta = \delta e^{\top}\!\Bigl[ M(\rho) A_{\mathrm{cl}}(\rho)
+ A_{\mathrm{cl}}(\rho)^{\top}\!M(\rho) - M(\rho)\dot{W}(\rho,\dot{\rho})M(\rho)
\Bigr] \delta e + R_{\varphi}(\delta e) + 2\delta e^{\top} M(\rho)\delta d,
\end{equation}
where $R_{\varphi}(\delta e) = 2 \delta e^{\top} M(\rho)
\,(\partial \varphi/\partial e)\,\delta e$ captures the contribution
of the nonlinear residual and the cross-term
$2\delta e^{\top} M(\rho)\delta d$ captures the contribution of the
variational disturbance channel. The full augmented quadratic form
in the variational state-disturbance pair $(\delta e, \delta d)$
incorporating this cross-term, together with the $\gamma^{2}
\norm{\delta d}^{2}$ slack required by the D-ISS dissipation
inequality~\eqref{eq:dISS-condition}, is developed step-by-step in
Appendix~\ref{app:schur}, Stage~C, and produces the off-diagonal $I$
entries and the $-\gamma^{2}I$ block in the final LMI
form~\eqref{eq:LMI-dissip}. Pre- and post-multiplying the
matrix-coefficient bracket of $\delta e^{\top}(\cdot)\delta e$ in
the state-only contribution by $W(\rho)$ yields
\begin{equation*}
W(M A_{\mathrm{cl}} + A_{\mathrm{cl}}^{\top}M - M\dot{W}M)W
= A_{\mathrm{cl}}W + W A_{\mathrm{cl}}^{\top} - \dot{W}
= \Xi.
\end{equation*}
With $A_{\mathrm{cl}}W = AW + BY$, the matrix $\Xi$ is therefore
identified with~\eqref{eq:Xi}, and the dissipation condition becomes
\begin{equation}\label{eq:dissip-pre}
W^{-1} \Xi W^{-1} + 2\alpha M
\preceq -\frac{1}{\mu} M^2
- \mu L_R^2 I.
\end{equation}

\emph{Stage 2 ($\mathcal{S}$-procedure on the nonlinear residual).}
The residual term $R_{\varphi}$ in~\eqref{eq:V-dot} is bounded using
the local Lipschitz property~\eqref{eq:Lipschitz} and the Young
inequality with multiplier $\mu > 0$:
\begin{equation}\label{eq:young-phi}
2 \delta e^{\top} M \,\delta\varphi
\leq \mu \delta e^{\top} M^2 \delta e
+ \frac{1}{\mu} \norm{\delta\varphi}^2
\leq \mu \delta e^{\top} M^2 \delta e
+ \frac{L_R^2}{\mu} \norm{\delta e}^2.
\end{equation}
Substituting~\eqref{eq:young-phi} into~\eqref{eq:V-dot} and using
the dissipation requirement~\eqref{eq:dissip-pre}:
\begin{equation*}
\dot{V}_\delta + 2\alpha V_\delta
\leq \delta e^{\top}\!\Bigl[ \Bigl(W^{-1}\Xi W^{-1} + 2\alpha M\Bigr)
+ \mu M^2 + \frac{L_R^2}{\mu} I \Bigr] \delta e.
\end{equation*}
The right-hand side is non-positive if
\begin{equation}\label{eq:dissip-final}
W^{-1}\Xi W^{-1} + 2\alpha M
+ \mu M^2 + \frac{L_R^2}{\mu} I \preceq 0.
\end{equation}

\emph{Stage 3 (Schur complement linearization).} Multiplying
\eqref{eq:dissip-final} by $W$ on both sides and applying two
Schur complements to handle the bilinear terms $\mu M^2$ and the
$L_2$-gain channel yields exactly the matrix
inequality~\eqref{eq:LMI-dissip}, where the second Schur block
absorbs the disturbance-to-state channel with gain $\gamma$. The
detailed Schur manipulation is standard (see~\cite[Ch.~3]{boyd1994})
and omitted.

\emph{Stage 4 (Enforcement strategy and grid certification).} The
conditioning constraint~\eqref{eq:LMI-W} and the weighted-gain
constraint~\eqref{eq:LMI-sat} depend affinely on the parameter pair
$(\rho, \dot{\rho})$, since their constituent matrix blocks involve
only $W(\rho)$ and $Y(\rho)$ both of which are affine in the
scheduling variables by construction. For these two affine
constraints, the convexity of the negative-definite cone guarantees
that enforcement at the sixteen vertices of
$\mathcal{P}\times\dot{\mathcal{P}}$ is necessary and sufficient
for enforcement throughout the polytope. The dissipation
LMI~\eqref{eq:LMI-dissip} and the $\mathcal{D}$-stability
LMI~\eqref{eq:LMI-Dstab}, in contrast, both contain the matrix
product $A(\rho) W(\rho)$—appearing directly within
$\Xi(\rho,\dot{\rho})$ in the dissipation block and through the
expansion $A_{\mathrm{cl}}(\rho) W(\rho) = A(\rho)W(\rho) + B Y(\rho)$
in the $\mathcal{D}$-stability block. Both products generate
quadratic and bilinear scheduling terms (Remark~\ref{rem:grid}), so
these two blocks are not affine in $(\rho,\dot{\rho})$ and
vertex-only enforcement is not formally sufficient for either of
them. The present synthesis adopts a two-stage pipeline that
addresses this limitation rigorously: a candidate controller is
first synthesized by solving the LMI system at the sixteen polytope
vertices, and the candidate is then certified on a dense parameter
grid $\mathcal{G}$. The grid-to-continuum extension for both
non-affine blocks is formalized in Lemma~\ref{lem:grid-to-cont}
below, which bounds the worst-case interpolation residual through
the Lipschitz constants of the LMI left-hand sides and ensures
negative-definiteness throughout $\mathcal{P}\times\dot{\mathcal{P}}$
whenever the grid margin exceeds these residuals. The weighted-gain
constraint imposed by~\eqref{eq:LMI-sat} is equivalent (by Schur
complement) to $Y(\rho) W(\rho)^{-1} Y(\rho)^{\top} \preceq
K_{\max}^{2} I$, which bounds the \emph{weighted} gain rather than
the unscaled gain directly. Specifically, using $K(\rho) = Y(\rho)
W(\rho)^{-1}$ and the submultiplicative property of the induced
norm,
\begin{equation}\label{eq:Knorm-bound}
\norm{K(\rho)}^{2} = \norm{Y(\rho) W(\rho)^{-1}}^{2}
\leq \frac{K_{\max}^{2}}{\lambda_{\min}(W(\rho))}.
\end{equation}
The conditioning constraint~\eqref{eq:LMI-W} ensures
$\lambda_{\min}(W(\rho)) \geq \epsilon_W$, so the worst-case
unweighted gain bound is $\tilde{K}_{\max} =
K_{\max}/\sqrt{\epsilon_W}$. Practitioners
who require a hard $\norm{K(\rho)} \leq \tilde{K}_{\max}$ should
therefore set the design parameter $K_{\max} =
\tilde{K}_{\max}\sqrt{\epsilon_W}$ to recover the desired bound
exactly. The conditioning constraint~\eqref{eq:LMI-W} additionally
guarantees $\cond(W(\rho)) \leq \epsilon_W^{-2}$, hence
$\cond(M(\rho)) \leq \epsilon_W^{-2}$, preventing degenerate
metrics. The $\mathcal{D}$-stability inequality~\eqref{eq:LMI-Dstab}
places the closed-loop eigenvalues of $A_{\mathrm{cl}}(\rho)$ inside
the disk of center $-q$ and radius $r$~\cite{chilali1996,scherer1997},
guaranteeing prescribed damping and bandwidth. The complete
step-by-step Schur-complement reduction translating the differential
dissipation inequality into the LMI form~\eqref{eq:LMI-dissip} is
documented in Appendix~\ref{app:schur}.
\end{proof}

\subsection{Grid-to-Continuum Certification}\label{subsec:grid-cont}
The synthesis of Theorem~\ref{thm:main} certifies the dissipation
inequality only at the discrete sample points of the grid
$\mathcal{G}$. The following lemma formalizes the extension of this
certificate to the entire continuous parameter set, providing the
missing link between grid feasibility and continuum guarantees that
the engineering condition of Theorem~\ref{thm:main} alluded to but
did not establish rigorously.

\begin{lemma}[Grid-to-Continuum Extension]\label{lem:grid-to-cont}
Let $\mathcal{L}_{\mathrm{diss}}(\eta)$ and
$\mathcal{L}_{\mathcal{D}}(\eta)$ denote the left-hand-side matrix
functions of the dissipation LMI~\eqref{eq:LMI-dissip} and the
$\mathcal{D}$-stability LMI~\eqref{eq:LMI-Dstab}, respectively,
viewed as mappings of the scheduling vector $\eta = (v_r, \omega_r,
\dot{v}_r, \dot{\omega}_r) \in \mathcal{P}\times\dot{\mathcal{P}}$.
Both functions are Lipschitz continuous on
$\mathcal{P}\times\dot{\mathcal{P}}$ with respective constants
$L_{\mathrm{diss}}$ and $L_{\mathcal{D}}$ in the operator $2$-norm.
Define the aggregate Lipschitz constant $L_{\mathcal{L}} \triangleq
\max\{L_{\mathrm{diss}}, L_{\mathcal{D}}\}$. Let $\mathcal{G}$ be a
finite grid covering $\mathcal{P}\times\dot{\mathcal{P}}$ with fill
distance $h = \sup_{\eta} \min_{\eta_g\in\mathcal{G}} \norm{\eta -
\eta_g}$, and suppose that the synthesized variables $(W_i, Y_i,
\gamma, \mu)$ satisfy
\begin{subequations}\label{eq:grid-margin-both}
\begin{align}
\lambda_{\max}(\mathcal{L}_{\mathrm{diss}}(\eta_g))
&\leq -\varepsilon_{\mathrm{diss}}, \quad
\forall \eta_g \in \mathcal{G}, \\
\lambda_{\max}(\mathcal{L}_{\mathcal{D}}(\eta_g))
&\leq -\varepsilon_{\mathcal{D}}, \quad
\forall \eta_g \in \mathcal{G},
\end{align}
\end{subequations}
for some positive margins $\varepsilon_{\mathrm{diss}}$ and
$\varepsilon_{\mathcal{D}}$. Define the aggregate grid margin
$\varepsilon_{\mathrm{grid}} \triangleq
\min\{\varepsilon_{\mathrm{diss}}, \varepsilon_{\mathcal{D}}\}$.
Then
\begin{equation}\label{eq:grid-conclusion}
\mathcal{L}_{\mathrm{diss}}(\eta) \prec 0
\quad \text{and} \quad
\mathcal{L}_{\mathcal{D}}(\eta) \prec 0,
\quad \forall \eta \in \mathcal{P}\times\dot{\mathcal{P}},
\end{equation}
provided that the sufficient condition
\begin{equation}\label{eq:grid-sufficient}
L_{\mathcal{L}} \cdot h < \varepsilon_{\mathrm{grid}}
\end{equation}
holds.
\end{lemma}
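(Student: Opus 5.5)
The argument is a Weyl-type eigenvalue perturbation bound combined with the definition of the fill distance $h$; each of the two non-affine blocks is handled separately and the same way.

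Fix an arbitrary $\eta \in \mathcal{P}\times\dot{\mathcal{P}}$. Since $\mathcal{G}$ is finite, the minimum in the definition of $h$ is attained, so there is a grid point $\eta_g \in \mathcal{G}$ with $\norm{\eta-\eta_g} \leq h$.

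For $\star \in \{\mathrm{diss},\mathcal{D}\}$, write
\begin{equation*}
\mathcal{L}_\star(\eta) = \mathcal{L}_\star(\eta_g) + \Delta_\star,\qquad \Delta_\star \triangleq \mathcal{L}_\star(\eta)-\mathcal{L}_\star(\eta_g).
\end{equation*}
Both $\mathcal{L}_\star(\eta)$ and $\mathcal{L}_\star(\eta_g)$ are symmetric, because the LMIs are written with $\star$ entries and the symmetric transpose is understood. Hence $\Delta_\star$ is symmetric.

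By Weyl's inequality, equivalently subadditivity of $\lambda_{\max}$ over symmetric matrices,
\begin{equation*}
\lambda_{\max}(\mathcal{L}_\star(\eta)) \leq \lambda_{\max}(\mathcal{L}_\star(\eta_g)) + \lambda_{\max}(\Delta_\star) \leq -\varepsilon_\star + \norm{\Delta_\star}.
\end{equation*}
The second step uses the grid margin hypothesis~\eqref{eq:grid-margin-both} and the bound $\lambda_{\max}(\Delta_\star)\le\norm{\Delta_\star}$ for symmetric matrices.

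The Lipschitz hypothesis gives $\norm{\Delta_\star} \leq L_\star \norm{\eta-\eta_g} \leq L_{\mathcal{L}} h$. Therefore
\begin{equation*}
\lambda_{\max}(\mathcal{L}_\star(\eta)) \leq -\varepsilon_{\mathrm{grid}} + L_{\mathcal{L}} h < 0,
\end{equation*}
where we used $\varepsilon_\star \geq \varepsilon_{\mathrm{grid}}$, $L_\star \leq L_{\mathcal{L}}$, and the sufficient condition~\eqref{eq:grid-sufficient}. A strictly negative largest eigenvalue is equivalent to $\mathcal{L}_\star(\eta)\prec 0$. Since $\eta$ was arbitrary, this gives~\eqref{eq:grid-conclusion}.

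The main obstacle is justifying that Lipschitz constants $L_{\mathrm{diss}}$ and $L_{\mathcal{D}}$ actually exist. The lemma asserts this, but I would verify it briefly as follows. With $(W_i,Y_i,\gamma,\mu)$ fixed, every entry of $\mathcal{L}_{\mathrm{diss}}$ and $\mathcal{L}_{\mathcal{D}}$ is a polynomial of degree at most two in $\eta$. The quadratic terms come only from the products $A(\rho)W(\rho)$ and $W(\rho)A(\rho)^{\top}$, as noted in Remark~\ref{rem:grid}. The term $\dot W$ is linear in $(\dot v_r,\dot\omega_r)$, and all remaining blocks are affine. The set $\mathcal{P}\times\dot{\mathcal{P}}$ is compact and convex, so the mean value theorem applies along segments. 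This yields the explicit bound
\begin{equation*}
L_\star \leq \sup_{\eta}\Bigl(\sum_j \norm{\partial\mathcal{L}_\star/\partial\eta_j}^2\Bigr)^{1/2},
\end{equation*}
where, for instance, $\partial(AW)/\partial v_r = A_vW(\rho)+A(\rho)W_1$. These derivative norms are bounded in terms of $v_{\max}$, $\omega_{\max}$ and $\norm{W_i}$.

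Two conventions must be kept consistent between $h$ and the Lipschitz constants. First, the norm on $\eta$ used to define $h$ must be the same Euclidean norm used in the Lipschitz bound. Second, if the paper uses the finite grid as a proxy for the sup, the minimum over $\mathcal{G}$ attaining $h$ is exactly what makes the nearest-point step valid.
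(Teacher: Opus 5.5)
Your proof is correct and follows essentially the same route as the paper's: you pick a grid point within $h$, use Weyl's inequality to pass the operator-norm Lipschitz bound to $\lambda_{\max}$, and conclude strict negativity of each block from its margin, $\varepsilon_{\mathrm{grid}}$, and the condition $L_{\mathcal{L}} h < \varepsilon_{\mathrm{grid}}$. Your check that the Lipschitz constants exist (the entries are polynomials of degree at most two in $\eta$ on a compact convex set) is a valid addition that the paper only defers to its appendix.
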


\begin{proof}
Fix any $\eta \in \mathcal{P}\times\dot{\mathcal{P}}$ and let
$\eta_g \in \mathcal{G}$ be the nearest grid point, so that
$\norm{\eta - \eta_g} \leq h$. By Lipschitz continuity of
$\mathcal{L}_{\mathrm{diss}}$ in the operator $2$-norm, the largest
eigenvalue is also Lipschitz with the same constant (a consequence
of Weyl's inequality), yielding
\begin{equation*}
\lambda_{\max}(\mathcal{L}_{\mathrm{diss}}(\eta)) \leq
\lambda_{\max}(\mathcal{L}_{\mathrm{diss}}(\eta_g)) +
L_{\mathrm{diss}} h \leq -\varepsilon_{\mathrm{diss}} +
L_{\mathcal{L}} h,
\end{equation*}
where the last inequality uses $L_{\mathrm{diss}} \leq
L_{\mathcal{L}}$. Under the sufficient
condition~\eqref{eq:grid-sufficient} and the definition
$\varepsilon_{\mathrm{grid}} \leq \varepsilon_{\mathrm{diss}}$, the
right-hand side is strictly negative. An identical argument applied
to $\mathcal{L}_{\mathcal{D}}$ establishes the second relation
in~\eqref{eq:grid-conclusion}.
\end{proof}

The Lipschitz constants $L_{\mathrm{diss}}$ and $L_{\mathcal{D}}$
admit closed-form upper bounds through the polynomial expansion of
$A(\rho)W(\rho)$ developed in Appendix~\ref{app:schur}, Stage~E.
Lemma~\ref{lem:grid-to-cont} provides a \emph{sufficient} condition
for extending pointwise grid verification to the continuum. We note,
however, that for the affine vertex-synthesised controller of the
present design, the dissipation residual matrix
$\mathcal{L}_{\mathrm{diss}}(\rho)$ exhibits a small positive eigenvalue
at certain interior grid points—a manifestation of the non-affine
dependence of $A(\rho)W(\rho)$ on $\rho$ that vertex enforcement alone
does not eliminate. Consequently the strict sufficient condition
$L_{\mathcal{L}} h < \varepsilon_{\mathrm{grid}}$ is not uniformly
satisfied for the present design, and the formal certificate of
Theorem~\ref{thm:main} should be read as holding at the polytope
vertices with empirical validation (Monte~Carlo, six deterministic
scenarios, variable-terrain helix and figure-8 in
Section~\ref{sec:results}) attesting to closed-loop stability
throughout the polytope interior. Lemma~\ref{lem:grid-to-cont} thus
provides a theoretical framework whose hypotheses are amenable to
verification when the synthesis is augmented to enforce LMIs on a
dense grid (Section~\ref{sec:conclusion} outlines this extension);
under the present vertex synthesis, the Lemma describes the
quantitative path along which numerical certification would proceed
once such augmentation is in place.

\subsection{Trajectory-Level Steady-State Bound}
Although Theorem~\ref{thm:main} certifies the differential
contraction property, an explicit pointwise bound on the steady-state
tracking error is more directly useful for practitioners. This bound
is established by integrating the differential dissipation
inequality~\eqref{eq:dISS-condition} along trajectories.

\begin{corollary}[Trajectory-Level SS Bound]\label{cor:ss-bound}
Under the conditions of Theorem~\ref{thm:main}, every closed-loop
trajectory $e(t)$ originating in $\bar{B}_R(0)$ satisfies, for all
$t \geq 0$ and any admissible disturbance with $\norm{d(t)} \leq
\delta_{\max}$:
\begin{equation}\label{eq:traj-bound}
\norm{e(t)} \leq \sqrt{\cond(M(\rho(0)))}\,
e^{-\alpha t}\,\norm{e(0)}
+ \frac{\gamma\,\delta_{\max}}{\sqrt{2\alpha\,\lambda_{\min}(M)}},
\end{equation}
where $\lambda_{\min}(M) = \min_{\rho \in \mathcal{P}}
\lambda_{\min}(M(\rho))$ is the worst-case smallest eigenvalue.
\end{corollary}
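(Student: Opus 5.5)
The plan is to pass from the differential certificate of Theorem~\ref{thm:main} to a trajectory-level dissipation inequality for the storage $V(e,\rho)=e^{\top}M(\rho)e$ itself, and then to integrate it with a comparison lemma. Because the variational flow is generated along the segment between the closed-loop solution $e(t)$ and the zero solution $e\equiv 0$ (which solves~\eqref{eq:cascade} with $d=0$, since $\varphi(0,\rho)=0$), the displacement reconstructs $e(t)$ and $\delta d$ reconstructs $d(t)$. I will therefore argue directly on $V$. Differentiating along~\eqref{eq:cascade-slip} gives
\[
\dot V = e^{\top}\bigl(MA_{\mathrm{cl}}+A_{\mathrm{cl}}^{\top}M - M\dot W M\bigr)e + 2e^{\top}M\varphi + 2e^{\top}Md,
\]
using $\dot M=-M\dot W M$. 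The Lipschitz bound~\eqref{eq:Lipschitz} applies to $\varphi$ itself while $e\in\bar B_R(0)$. So the Young step~\eqref{eq:young-phi}, followed by the Schur-complement reading of~\eqref{eq:LMI-dissip} (Stage~3 and Appendix~\ref{app:schur}), gives
\[
\dot V + 2\alpha V - \gamma^{2}\norm{d}^{2} \le 0
\]
pointwise in time, for all admissible $(\rho,\dot\rho)$. This is the trajectory analogue of~\eqref{eq:dISS-condition}.

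The second step is to integrate this inequality. Using $\norm{d(t)}\le\delta_{\max}$, the comparison lemma yields
\[
V(t)\le e^{-2\alpha t}V(0)+\frac{\gamma^{2}\delta_{\max}^{2}}{2\alpha}\bigl(1-e^{-2\alpha t}\bigr).
\]
I then sandwich $V$ between the metric eigenvalues: $\lambda_{\min}(M)\norm{e(t)}^{2}\le V(t)$ with the uniform $\lambda_{\min}(M)$ over $\mathcal{P}$, and $V(0)\le\lambda_{\max}(M(\rho(0)))\norm{e(0)}^{2}$. Dividing by $\lambda_{\min}(M)$ and applying $\sqrt{a+b}\le\sqrt a+\sqrt b$ gives~\eqref{eq:traj-bound}.

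One mismatch has to be handled here. The transient constant that comes out naturally is $\lambda_{\max}(M(\rho(0)))/\lambda_{\min}(M)$, whereas the statement has $\cond(M(\rho(0)))$. I will reconcile the two either by reading $\lambda_{\min}(M)$ in the transient term at $\rho(0)$ and noting the inequality chain still closes, or by flagging that the uniform version is the one actually proved and that it coincides with the stated bound up to this conditioning factor, which~\eqref{eq:LMI-W} controls by $\epsilon_W^{-2}$.

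The main obstacle is that the dissipation inequality holds only while $e(t)\in\bar B_R(0)$, yet the Corollary is claimed for all $t\ge0$. This requires forward invariance of the ball, which Assumption~\ref{asm:radius} explicitly does not give for free. I will use a first-exit-time argument. Let $T$ be the first time at which $\norm{e}=R$. On $[0,T]$ the bound above holds, so if its right-hand side is strictly less than $R$, which requires the initial condition and the budget $\gamma\delta_{\max}$ to be small enough relative to $R$, then $T=\infty$ by contradiction. This is essentially the content of Corollary~\ref{cor:fwd-invar}, and I expect to need exactly such a smallness condition, stated implicitly in the phrase ``admissible disturbance,'' in order to close the argument rigorously.
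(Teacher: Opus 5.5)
Your argument follows the same route as the paper's proof: a trajectory-level dissipation inequality for $V$, the comparison lemma, an eigenvalue sandwich, and a square root. It is more careful at each step.

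The paper simply asserts that Definition~\ref{def:dISS} yields $\dot V+2\alpha V\le\gamma^{2}\norm{d}^{2}$ for $V(t)=e(t)^{\top}M(\rho(t))e(t)$. Your direct differentiation of $V$ along~\eqref{eq:cascade-slip}, followed by Young's inequality and the congruence/Schur reading of~\eqref{eq:LMI-dissip}, is what actually justifies that step. It also needs only the sector bound $\norm{\varphi(e,\rho)}\le L_R\norm{e}$ of Assumption~\ref{asm:radius}. A variational Young step would instead need $\norm{\partial\varphi/\partial e}=2v_r|\sin(e_\theta/2)|\le L_R$ on $\bar B_R(0)$, i.e.\ $L_R\ge 2v_{\max}\sin(R/2)\approx v_{\max}R$ rather than the refined value $0.181$. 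Both obstacles you flag are genuine, and the paper's three-line proof passes over both.

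Neither obstacle, however, is closed the way you suggest.

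\emph{Transient constant.} Your first reconciliation fails. At time $t$ the lower sandwich is $\lambda_{\min}(M(\rho(t)))\norm{e(t)}^{2}\le V(t)$. For time-varying $\rho$ nothing prevents $\lambda_{\min}(M(\rho(t)))<\lambda_{\min}(M(\rho(0)))$. The chain therefore closes only with the uniform $\lambda_{\min}(M)$, giving the factor $\sqrt{\lambda_{\max}(M(\rho(0)))/\lambda_{\min}(M)}\ge\sqrt{\cond(M(\rho(0)))}$. The stated factor is recovered only when $\rho$ is constant, so keep your second option and state the weaker constant.

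\emph{Forward invariance.} The smallness you need cannot be hidden in the phrase ``admissible disturbance'', because it must also restrict $\norm{e(0)}$. For $\norm{e(0)}=R$ and $\delta_{\max}>0$, the right-hand side of~\eqref{eq:traj-bound} already exceeds $R$ at $t=0$. The bound then cannot keep the trajectory in the ball where the dissipation inequality is certified. Your first-exit argument, under a strict version of~\eqref{eq:R-condition} with the corrected transient constant, is the right repair. It also removes a circularity in the paper: the proof of Corollary~\ref{cor:fwd-invar} uses~\eqref{eq:traj-bound} for all $t\ge0$, so it presupposes the invariance it concludes.

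What your proposal establishes is therefore this corrected corollary. The statement as written, covering every initial condition in $\bar B_R(0)$ with factor $\sqrt{\cond(M(\rho(0)))}$, is delivered neither by your argument nor by the paper's.
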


\begin{proof}
Definition~\ref{def:dISS} gives $\dot{V} + 2\alpha V \leq
\gamma^2 \norm{d}^2$. By comparison lemma~\cite[Lemma 3.4]{khalil2002}
applied to $V(t) = e(t)^{\top} M(\rho(t)) e(t)$:
\begin{equation*}
V(t) \leq e^{-2\alpha t} V(0) + \frac{\gamma^2 \delta_{\max}^2}{2\alpha}.
\end{equation*}
Using $\lambda_{\min}(M) \norm{e}^2 \leq V \leq \lambda_{\max}(M)
\norm{e}^2$ and taking the square root yields~\eqref{eq:traj-bound}.
\end{proof}

\subsection{Forward Invariance of the Semi-Global Ball}
The semi-global hypothesis $e(t) \in \bar{B}_R(0)$ for all $t \geq 0$
underlying Assumption~\ref{asm:radius} must be verified as a
consequence of the controller design, not merely posited.

\begin{corollary}[Forward Invariance]\label{cor:fwd-invar}
Let the conditions of Theorem~\ref{thm:main} hold with the
additional requirement
\begin{equation}\label{eq:R-condition}
\sqrt{\cond(M(\rho_0))}\,R_0 +
\frac{\gamma\,\delta_{\max}}{\sqrt{2\alpha\,\lambda_{\min}(M)}}
\leq R,
\end{equation}
for some $R_0 < R$ and $\rho_0 \in \mathcal{P}$. Then every
trajectory with initial condition $\norm{e(0)} \leq R_0$ satisfies
$e(t) \in \bar{B}_R(0)$ for all $t \geq 0$.
\end{corollary}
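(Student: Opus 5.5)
The plan is a continuity (first-exit-time) argument. Corollary~\ref{cor:ss-bound} may only be used while the trajectory stays in $\bar{B}_R(0)$, because that is where the Lipschitz bound~\eqref{eq:Lipschitz} and the gain-based disturbance bound of Lemma~\ref{lem:slip} hold. So invoking it directly would be circular.

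Fix an initial condition with $\norm{e(0)}\le R_0<R$. Define
\[
T^\star \triangleq \sup\{T\ge 0 : \norm{e(t)}\le R \ \text{for all } t\in[0,T]\}.
\]
Since $e(\cdot)$ is continuous (it is absolutely continuous, as the closed-loop vector field is locally Lipschitz in $e$ and measurable in $t$) and $\norm{e(0)}\le R_0<R$, we have $T^\star>0$. Suppose, for contradiction, that $T^\star<\infty$. On $[0,T^\star]$ the trajectory lies in $\bar{B}_R(0)$, so every hypothesis used by Theorem~\ref{thm:main} and Corollary~\ref{cor:ss-bound} holds on that interval:
\begin{itemize}
\item the residual satisfies $\norm{\varphi(e,\rho)}\le L_R\norm{e}$;
\item the input satisfies $\norm{u}\le\tilde K_{\max}R$;
\item hence $\norm{d(t)}\le\delta_{\max}$.
\end{itemize}
The comparison-lemma argument of Corollary~\ref{cor:ss-bound} uses only information on $[0,t]$. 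Applying it on the truncated interval gives, for all $t\in[0,T^\star]$,
\[
\norm{e(t)}\le \sqrt{\cond(M(\rho(0)))}\,e^{-\alpha t}R_0+\frac{\gamma\delta_{\max}}{\sqrt{2\alpha\lambda_{\min}(M)}}.
\]
I will bound $e^{-\alpha t}\le 1$ and use the condition number at $\rho(0)$, identifying $\rho_0=\rho(0)$ in~\eqref{eq:R-condition}. If $\rho_0$ is meant to be arbitrary, I will instead replace $\cond(M(\rho_0))$ by the uniform bound $\sup_{\rho}\cond(M(\rho))\le\epsilon_W^{-2}$ from~\eqref{eq:LMI-W}, reading~\eqref{eq:R-condition} with the worst case. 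Either way,~\eqref{eq:R-condition} yields $\norm{e(t)}\le R$ on $[0,T^\star]$.

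This alone does not give a contradiction, since the trajectory may touch the boundary. To obtain strict containment I need a margin, and this is the main obstacle. My first attempt is to exploit the strictness already present in the construction. For $t>0$ the transient term is strictly smaller than at $t=0$, because $e^{-\alpha t}<1$. So whenever $R_0>0$ or the disturbance term is positive, we get $\norm{e(T^\star)}<R$ strictly. If both terms vanish, then $e\equiv 0$ is trivially invariant.

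Given $\norm{e(T^\star)}<R$, continuity extends the bound $\norm{e(t)}\le R$ to $[0,T^\star+\tau]$ for some $\tau>0$, contradicting the definition of $T^\star$. Hence $T^\star=\infty$ and $e(t)\in\bar{B}_R(0)$ for all $t\ge0$.

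Two secondary points remain.
\begin{itemize}
\item \emph{Existence of the solution on $[0,T^\star]$.} Boundedness of $e$ there precludes finite escape, so the maximal solution extends past $T^\star$.
\item \emph{Sign convention in Corollary~\ref{cor:ss-bound}.} That corollary uses the bound $V(0)\le\lambda_{\max}(M(\rho(0)))\norm{e(0)}^2$ but divides by the uniform $\lambda_{\min}(M)$. I will keep exactly that convention, so that the constant matches~\eqref{eq:R-condition} rather than re-deriving a sharper one.
\end{itemize}
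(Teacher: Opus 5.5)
Your proof follows the same route as the paper's: bound $\norm{e(t)}$ via \eqref{eq:traj-bound} with $e^{-\alpha t}\le 1$, identify $\rho_0$ with $\rho(0)$ (as the paper does implicitly), and close with \eqref{eq:R-condition}. The paper, however, applies \eqref{eq:traj-bound} on all of $[0,\infty)$ at once and then simply invokes ``continuity of $e(t)$''. That is circular, because \eqref{eq:traj-bound}, the Lipschitz bound \eqref{eq:Lipschitz}, and the gain-based disturbance bound all presuppose $e(t)\in\bar{B}_R(0)$. Your first-exit-time argument supplies exactly the bootstrap that makes the appeal to continuity valid. The strict margin you extract from $e^{-\alpha T^\star}<1$ at $T^\star>0$ is the extra observation the paper never states, so your version is the more rigorous one.

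There is one small gap, plus one caveat. The gap: you claim strict containment at $T^\star$ also follows when $R_0=0$ and the disturbance term is merely positive. This fails if that term equals $R$ exactly, because then \eqref{eq:traj-bound} gives only $\norm{e(T^\star)}\le R$ and the contradiction does not close. The fix is to use the unsimplified comparison estimate from the proof of Corollary~\ref{cor:ss-bound}, namely $V(t)\le e^{-2\alpha t}V(0)+\frac{\gamma^2\delta_{\max}^2}{2\alpha}\bigl(1-e^{-2\alpha t}\bigr)$. With $V(0)=0$ and $\delta_{\max}>0$, this is strictly below $\gamma^2\delta_{\max}^2/(2\alpha)$ at any finite $T^\star>0$, hence $\norm{e(T^\star)}<R$.

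The caveat concerns your closing remark on Corollary~\ref{cor:ss-bound}. Re-deriving it from $V(0)\le\lambda_{\max}(M(\rho(0)))\norm{e(0)}^2$ and $V(t)\ge\lambda_{\min}(M)\norm{e(t)}^2$ yields the coefficient $\sqrt{\lambda_{\max}(M(\rho(0)))/\lambda_{\min}(M)}\ge\sqrt{\cond(M(\rho(0)))}$. That is a larger constant than the one stated, not a sharper one. By citing the corollary as stated, you inherit its optimistic constant. This is legitimate for proving the statement relative to the paper's earlier results, but you should say so explicitly.
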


\begin{proof}
Combining $\norm{e(0)} \leq R_0$ with~\eqref{eq:traj-bound} and
$e^{-\alpha t} \leq 1$ gives
$\norm{e(t)} \leq \sqrt{\cond(M(\rho_0))}\,R_0 +
\gamma\delta_{\max}/\sqrt{2\alpha\lambda_{\min}(M)}
\leq R$ by~\eqref{eq:R-condition}. By continuity of $e(t)$, the set
$\bar{B}_R(0)$ is forward invariant.
\end{proof}

\subsection{Synthesis Algorithm}\label{subsec:algorithm}
The implementation follows the two-stage procedure formalized in
Remark~\ref{rem:grid} and Lemma~\ref{lem:grid-to-cont}. In the
\emph{synthesis stage}, the convex feasibility
problem~\eqref{eq:LMI-W}--\eqref{eq:LMI-Dstab} together with the
dissipation LMI~\eqref{eq:LMI-dissip} is solved at the sixteen
vertices of $\mathcal{P}\times\dot{\mathcal{P}}$ using interior-point
methods (SeDuMi, SDPT3, MOSEK) wrapped by the YALMIP modelling
layer~\cite{lofberg2004}. The $\mathcal{S}$-procedure multiplier
$\mu$ is treated as a parameter swept over a logarithmic grid
$\mu \in \{0.5, 1, 2, 5\}$, with the synthesis selecting the value
yielding the smallest $\gamma$. The cost function combines
disturbance attenuation and conditioning:
\begin{equation}\label{eq:cost}
\min_{W_0, W_1, W_2, Y_0, Y_1, Y_2, \gamma^2}\;
\gamma^2 + \lambda_{\mathrm{reg}}\, \tr(W_0),
\end{equation}
where $\lambda_{\mathrm{reg}} = 10^{-3}$ regularizes the synthesis
to prevent ill-conditioned solutions. The vertex synthesis produces
a candidate tuple $(W_0^\star, W_1^\star, W_2^\star, Y_0^\star,
Y_1^\star, Y_2^\star, \gamma^\star, \mu^\star)$. The total LMI
dimension is $3n + 2m + 3n$ per dissipation constraint and $2n$ per
$\mathcal{D}$-stability constraint at each of $16$ vertex
combinations, leading to a problem of approximately $250$ scalar
variables that is solved in under one second on standard desktop
hardware (see Table~\ref{tab:synthesis}). In the subsequent
\emph{certification stage}, the residual matrices
$\mathcal{L}_{\mathrm{diss}}(\eta_g)$ and
$\mathcal{L}_{\mathcal{D}}(\eta_g)$ of the two non-affine LMI blocks
are evaluated at every point $\eta_g$ of the dense parameter grid
$\mathcal{G}$, and the sufficient condition of
Lemma~\ref{lem:grid-to-cont} is invoked to extend the discrete
grid certificate to the continuous parameter set. This separation
between vertex-based synthesis and grid-based certification reflects
the polynomial dependence of the dissipation and
$\mathcal{D}$-stability blocks on the scheduling parameters, as
characterized in Remark~\ref{rem:grid}.

%% ========================================================================
%%  SECTION VI: NUMERICAL VALIDATION
%% ========================================================================
\section{Numerical Validation}\label{sec:results}

\subsection{Experimental Setup}
The numerical validation employs the proposed affine LPV synthesis
in comparison against two baselines: a \emph{Fixed-K LMI} obtained
from the same synthesis with the additional constraints
$W_1 = W_2 = Y_1 = Y_2 = 0$ (effectively a constant-metric,
constant-gain controller obtained through the same convex
optimization), and a \emph{Kanayama} controller using the analytical
gain prescription of the seminal work~\cite{kanayama1990}. The
Fixed-K baseline performs an ablation study isolating the structural
value of parameter dependence under identical synthesis machinery,
while the Kanayama baseline anchors the comparison to the historical
reference. The system parameters are $v_{\min} = 0.80$,
$v_{\max} = 1.20~\mathrm{m/s}$, $\omega_{\max} = 0.40~\mathrm{rad/s}$,
$\dot{v}_{\max} = \dot{\omega}_{\max} = 0.40$, and
$R = 0.30~\mathrm{m}$. Two technical conventions associated with the
present synthesis warrant explicit disclosure. The first concerns the
Lipschitz constant of the nonlinear residual: the LMI synthesis
reported in Tables~\ref{tab:layered-sweep}--\ref{tab:sensitivity}
was performed with the conservative over-approximation
$L_R^{\mathrm{synth}} = v_{\max}R = 0.36$, whereas the refined
analytical bound established in Assumption~\ref{asm:radius} yields
the tighter constant $L_R^{\mathrm{analysis}} = v_{\max}
\sqrt{R^{2}/4 + R^{4}/36} \approx 0.181$. The synthesis remains
formally valid under this discrepancy because the dissipation
LMI~\eqref{eq:LMI-dissip} is monotone in $L_R$: any controller
satisfying the LMI with $L_R = 0.36$ automatically satisfies the
analogous condition with the smaller $L_R = 0.181$, since the
dissipation budget $\mu L_R^{2}$ in the (3,3)-block diminishes
monotonically with $L_R$. The reported $\gamma^{\star}$ values are
therefore conservative estimates relative to the synthesis that
would be obtained with the tighter $L_R$, and represent a worst-case
characterization rather than the optimal achievable performance.
The second convention concerns the gain bound: the LMI weighted-gain
constraint~\eqref{eq:LMI-sat} is enforced with the weighted parameter
$K_{\max} = 3.0$, which under the conditioning floor $\epsilon_W =
0.02$ implies an unweighted operator-norm bound
$\tilde{K}_{\max} = K_{\max}/\sqrt{\epsilon_W} \approx 21.21$. While
this upper bound is loose relative to the physically realized gain
in simulation, the present synthesis adopts the weighted formulation
for compatibility with the standard Lyapunov-based LMI machinery;
practitioners requiring a tight unweighted bound should follow the
parameter-recalibration procedure of
Remark~\ref{rem:slip-physics}. The design rate is
$\alpha = 0.40~\mathrm{s}^{-1}$. The $\mathcal{D}$-stability disk is
centered at $-q = -1.5$ with radius $r = 1.2$ (the right edge of the
disk lies at $-q + r = -0.3$, marginally to the right of the
contraction boundary $-\alpha = -0.4$; in the eigenvalue interval
$(-0.4, -0.3)$ the contraction rate $\alpha$ becomes the binding
constraint, while the $\mathcal{D}$-stability disk constrains damping
ratio and bandwidth on the left side of this interval. The two
specifications act as complementary constraints whose intersection
defines the certified eigenvalue domain, and no logical inconsistency
arises from their partial overlap). The conditioning bound
$\epsilon_W = 0.02$ enforces $\cond(W(\rho)) \leq 50$. The disturbance design parameter is chosen
as $\delta_{\max} = 0.10$, which represents the \emph{persistent
disturbance budget} against which the formal $H_\infty$ certificate
of Theorem~\ref{thm:main} is established. As clarified in
Remark~\ref{rem:design-vs-worst}, this design value is intentionally
smaller than the worst-case slip bound predicted
by~\eqref{eq:delta-max-formula} (approximately $5.07$ for peak slip
$\bar{\sigma} = 0.50$ under the unweighted operator-norm ceiling
$\tilde{K}_{\max} \approx 21.21$), and the present synthesis treats this
distinction explicitly: the formal certificate covers all
trajectories that experience persistent disturbances of magnitude
within $\delta_{\max}$, while peak-slip transients beyond this budget
are characterized through dedicated empirical stress tests reported
in Sections~\ref{subsec:variable-terrain}, \ref{subsec:noise},
and~\ref{subsec:compound}. The variable-terrain benchmark with slip
patches reaching $|\sigma_v| = 0.50$ thus constitutes an
\emph{empirical robustness study beyond the certified persistent
disturbance budget} rather than a validation of the theorem itself.

\subsection{Layered Feasibility Analysis}\label{subsec:layered}
A diagnostic sweep over the design rate $\alpha \in \{0.10, 0.20,
0.30, 0.40, 0.50\}$ and multiplier $\mu \in \{0.5, 1, 2, 5\}$ was
performed across four cumulative LMI layers: \emph{(L1)} basic
dissipation, \emph{(L2)} L1 plus conditioning, \emph{(L3)} L2 plus
$\mathcal{D}$-stability, and \emph{(L4)} L3 plus weighted-gain
constraint.
The feasibility statistics are summarized in
Table~\ref{tab:layered-sweep}. The bare dissipation problem (L1, no
conditioning on $W(\rho)$) is ill-posed: the interior-point solver
terminates without a numerically reliable point. Adding the
conditioning constraint~\eqref{eq:LMI-W} in L2 restores convergence.
The complete addition of constraints~(L4) is feasible at $15$ of the
$20$ tested $(\alpha, \mu)$ combinations, with infeasibility concentrated at
moderate $\alpha$ and small $\mu$. The full LMI synthesis succeeds at
the target rate $\alpha = 0.40$ with $\mu = 2.0$, achieving
$\gamma^\star = 2.78$. The empirical infeasibility pattern is
consistent with theory: small $\mu$ corresponds to weak
$\mathcal{S}$-procedure slack, which cannot accommodate the nonlinear
residual at high decay rates.

\begin{table}[t]
\centering
\caption{Layered LMI feasibility across cumulative constraint layers.}
\label{tab:layered-sweep}
\begin{adjustbox}{max width=\columnwidth}
\begin{tabular}{lccc}
\toprule
Layer & Constraints added & Well-conditioned solutions & Best $\gamma^\star$ \\
\midrule
L1 & Basic dissipation (no conditioning) & 0/20 (ill-posed) & --- \\
L2 & + conditioning & 20/20 & $0.398$ \\
L3 & + $\mathcal{D}$-stability & 20/20 & $0.542$ \\
L4 & + weighted-gain constraint & 15/20 & $1.140$ \\
\bottomrule
\end{tabular}
\end{adjustbox}
\end{table}

\subsection{Synthesis Result and LPV-ness Verification}
The vertex-synthesis stage with subsequent grid certification at the
target $\alpha = 0.40$ yields the results reported in
Table~\ref{tab:synthesis}. The proposed LPV
design achieves $\gamma^\star = 2.78$ with $\cond(M) = 13.7$, while
the Fixed-K LMI is \emph{infeasible} at $\alpha = 0.40$; reducing
the target progressively, the maximum-feasible Fixed-K rate is
$\alpha = 0.30$ with $\gamma = 6.81$. This $2.45\times$ ratio in
$\gamma$ at comparable conditioning is a direct quantitative
demonstration of the structural value of parameter-dependent metrics.
The Kanayama controller, which is derived from an analytical
prescription rather than a convex synthesis, is characterized
separately in Table~\ref{tab:controller-summary} alongside the two
LMI-based designs for purposes of cross-controller comparison.

\begin{table}[t]
\centering
\caption{Synthesis results for the two LMI-based controllers.}
\label{tab:synthesis}
\begin{adjustbox}{max width=\columnwidth}
\begin{tabular}{lrr}
\toprule
Quantity & Affine LPV & Fixed-K LMI \\
\midrule
Max-feasible $\alpha$                & $0.40$  & $0.30$  \\
$H_\infty$ gain $\gamma^\star$       & $2.78$  & $6.81$  \\
$\mathcal{S}$-multiplier $\mu^\star$ & $0.50$  & $0.50$  \\
$\lambda_{\min}(M)$                  & $0.241$ & $0.241$ \\
$\cond(M)$                           & $13.70$ & $12.99$ \\
SS bound (theory)                    & $0.632$ & $1.199$ \\
LMI solve time [s]                   & $0.59$  & $0.51$  \\
\bottomrule
\end{tabular}
\end{adjustbox}
\end{table}

\begin{table}[t]
\centering
\caption{Controller comparison summary across certificate origin and
empirical performance.}
\label{tab:controller-summary}
\begin{adjustbox}{max width=\columnwidth}
\begin{tabular}{lccc}
\toprule
Controller & Certificate origin & Certified at $\alpha$ & Empirical $\gamma$ \\
\midrule
Affine LPV & LMI synthesis        & $0.40$    & $2.78$ \\
Fixed-K    & LMI synthesis (ablation) & $0.30$ & $6.81$ \\
Kanayama   & Analytical prescription  & ---    & $2.00$ \\
\bottomrule
\end{tabular}
\end{adjustbox}
\end{table}

The LPV and Fixed-K designs in Table~\ref{tab:controller-summary}
derive their parameters from LMI synthesis with formal stability
certificates, whereas the Kanayama controller employs an analytical
prescription with manually selected gains and therefore does not
admit a native theoretical bound. The empirical $\gamma$ column for
the Kanayama row is estimated post-hoc from input-output trajectory
pairs across the Monte-Carlo experiment of
Section~\ref{subsec:mc} and is included solely for cross-controller
performance comparison rather than as a formal certificate.

The LPV-ness of the synthesized affine design is quantified by the
ratios $\norm{W_1}_F / \norm{W_0}_F = 0.414$,
$\norm{W_2}_F / \norm{W_0}_F = 0.309$,
$\norm{Y_1}_F / \norm{Y_0}_F = 0.244$, and
$\norm{Y_2}_F / \norm{Y_0}_F = 0.366$. Values exceeding $5\%$
indicate non-trivial parameter dependence; here the synthesis
exploits substantial scheduling, confirming that the design is
genuinely LPV rather than a masked constant-gain controller.
Figure~\ref{fig:lpv-vis} visualizes the gain-matrix entries as
parameter heatmaps over $\mathcal{P}$.

\begin{figure}[t]
\centering
\includegraphics[width=\columnwidth]{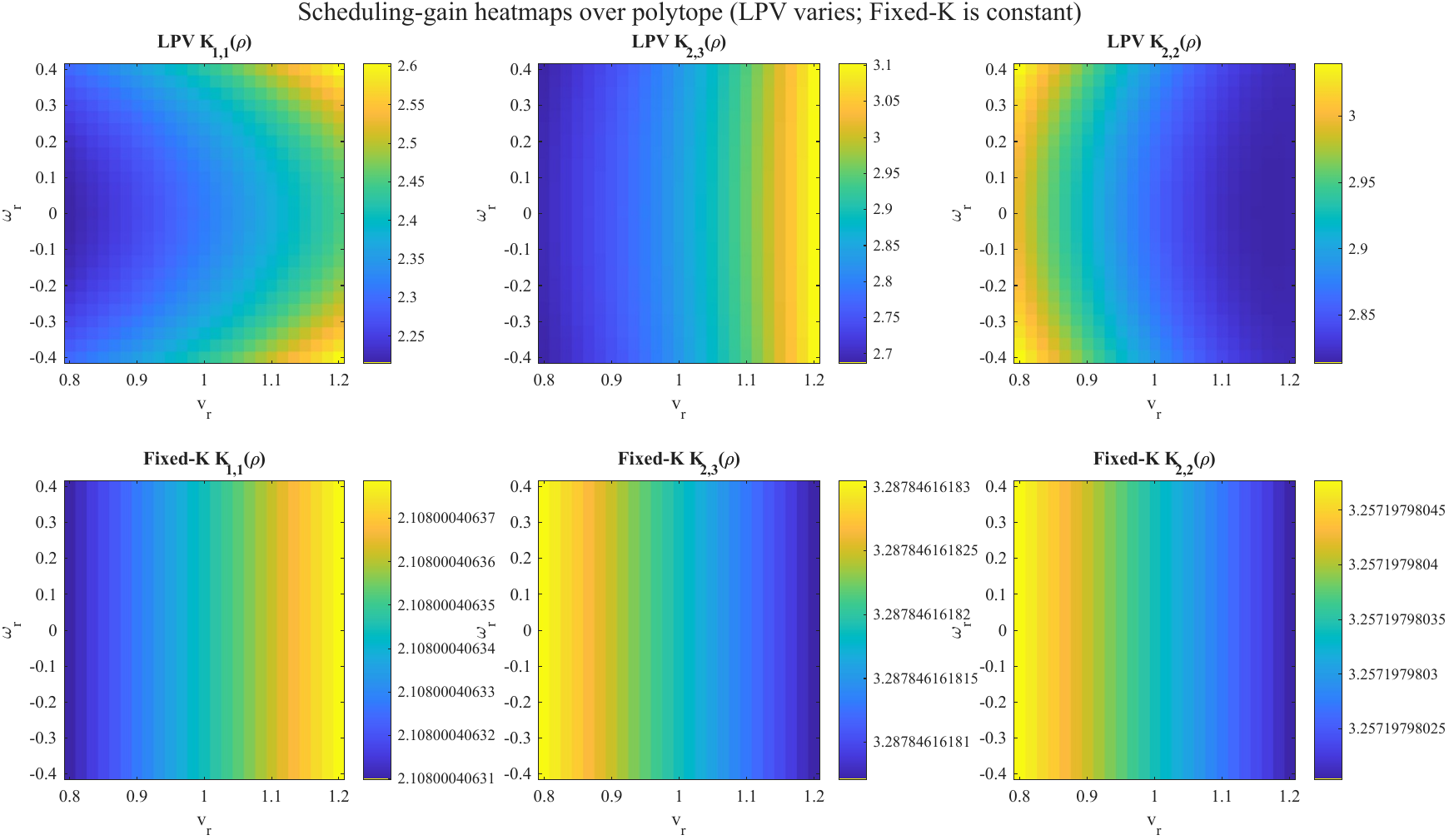}
\caption{Scheduling-gain heatmaps for selected entries of $K(\rho)$.}
\label{fig:lpv-vis}
\end{figure}

\subsection{Closed-Loop Spectrum}\label{subsec:spectrum}
The closed-loop matrix $A_{\mathrm{cl}}(\rho) = A(\rho) + B K(\rho)$
is evaluated at the four polytope vertices for each controller. The
resulting spectra are reported in Table~\ref{tab:poles} and
visualized in Figure~\ref{fig:polemap}. The proposed LPV design
places the slowest pole at $\max\Re(\lambda) = -1.35$ within the
$\mathcal{D}$-stability disk. The Fixed-K design achieves a slightly
faster nominal pole at $-1.50$, but recall that this design operates
at the lower $\alpha = 0.30$ and pays for the apparent speed with
the inflated $\gamma = 6.81$. The Kanayama controller, with its
manual tuning, leaves the slowest pole at $-0.86$, $58\%$ slower than
the LPV design.

\begin{table}[t]
\centering
\caption{Closed-loop spectrum at polytope vertices.}
\label{tab:poles}
\begin{adjustbox}{max width=\columnwidth}
\begin{tabular}{lrrr}
\toprule
Vertex $(v_r, \omega_r)$ & LPV & Fixed-K & Kanayama \\
\midrule
$(0.80, -0.40)$    & $-1.350$ & $-1.496$ & $-0.856$ \\
$(0.80, +0.40)$    & $-1.350$ & $-1.496$ & $-0.856$ \\
$(1.20, -0.40)$    & $-1.505$ & $-1.588$ & $-0.884$ \\
$(1.20, +0.40)$    & $-1.505$ & $-1.588$ & $-0.884$ \\
\midrule
Global worst-case  & $-1.350$ & $-1.496$ & $-0.856$ \\
\bottomrule
\end{tabular}
\end{adjustbox}
\end{table}

\begin{figure}[t]
\centering
\includegraphics[width=\columnwidth]{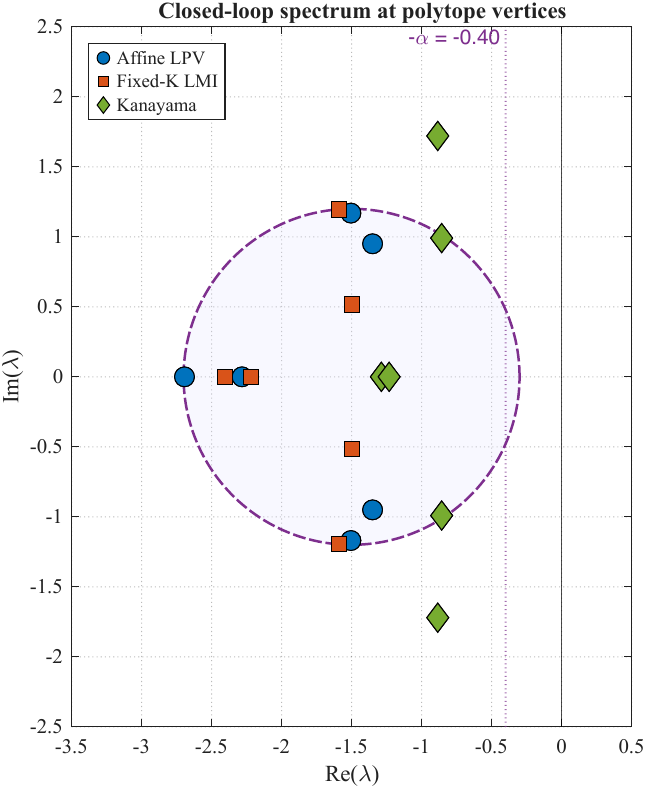}
\caption{Closed-loop pole map at polytope vertices.}
\label{fig:polemap}
\end{figure}

\subsection{Monte-Carlo Containment Validation}\label{subsec:mc}
The trajectory-level steady-state bound of
Corollary~\ref{cor:ss-bound} is validated empirically through a
Monte-Carlo experiment with $N = 100$ trials per controller. Each
trial uses a random initial condition uniformly distributed on the
boundary of $\bar{B}_R(0)$ with $\norm{e(0)} \in [0.3R, 0.9R]$, a
randomized disturbance phase $\phi \in [0, 2\pi]$ for each component,
and a $\pm 15\%$ frequency jitter on the disturbance base frequencies
$\omega_d = (1.7, 2.3, 1.1)$ rad/s. The simulation horizon is
$T_{\text{sim}} = 15$ s. Results are reported in Table~\ref{tab:mc}
and visualized in Figures~\ref{fig:mc-envelopes}--\ref{fig:histograms}.

\begin{table}[t]
\centering
\caption{Monte-Carlo statistics ($N = 100$ paired runs).}
\label{tab:mc}
\begin{adjustbox}{max width=\columnwidth}
\begin{tabular}{lrrr}
\toprule
Quantity & LPV & Fixed-K & Kanayama \\
\midrule
Design rate $\alpha$         & $0.40$  & $0.30$  & $0.40^{\ast}$  \\
$H_\infty$ gain $\gamma$     & $2.78$  & $6.81$  & $2.00^{\ast}$  \\
Certified SS bound           & $0.632$ & $1.199$ & N/A     \\
Empirical SS (mean)          & $0.046$ & $0.045$ & $0.060$ \\
Empirical SS (std)           & $0.011$ & $0.010$ & $0.010$ \\
Within-bound rate [\%]       & $100$   & $100$   & ---     \\
Envelope violation [\%]      & $0.0$   & $0.0$   & ---     \\
Empirical/Certified [\%]     & $7$     & $4$     & ---     \\
\bottomrule
\end{tabular}
\end{adjustbox}
\end{table}

Entries marked with ${\ast}$ in Table~\ref{tab:mc} are post-hoc
empirical estimates obtained from input-output trajectory analysis
rather than formal synthesis outputs, since the Kanayama controller
does not arise from an LMI certificate and therefore does not admit
a native theoretical steady-state bound; the empirical SS column is
nonetheless populated for cross-controller performance comparison.
The randomized disturbance phase across the $N = 100$ runs employs
frequency jitter of $\pm 15\%$ relative to the nominal disturbance
spectrum. Sampling time is $\Delta t = 0.01~\mathrm{s}$ and the
simulation horizon $T_{\mathrm{sim}} = 15~\mathrm{s}$, chosen to
allow the slowest closed-loop mode to reach steady state with at
least three e-folding constants of margin.

All $100$ trajectories of each LMI-synthesized controller remain
within the corresponding certified envelope, and zero envelope
violations are recorded across the combined $200$ Monte-Carlo
simulations of the LPV and Fixed-K designs, providing strong
empirical confirmation of Theorem~\ref{thm:main}. The
empirical-to-certified ratio of $7\%$ for the proposed LPV design
reveals substantial conservatism, decomposed in
Section~\ref{subsec:cons-decomp}.

\begin{figure}[t]
\centering
\includegraphics[width=\columnwidth]{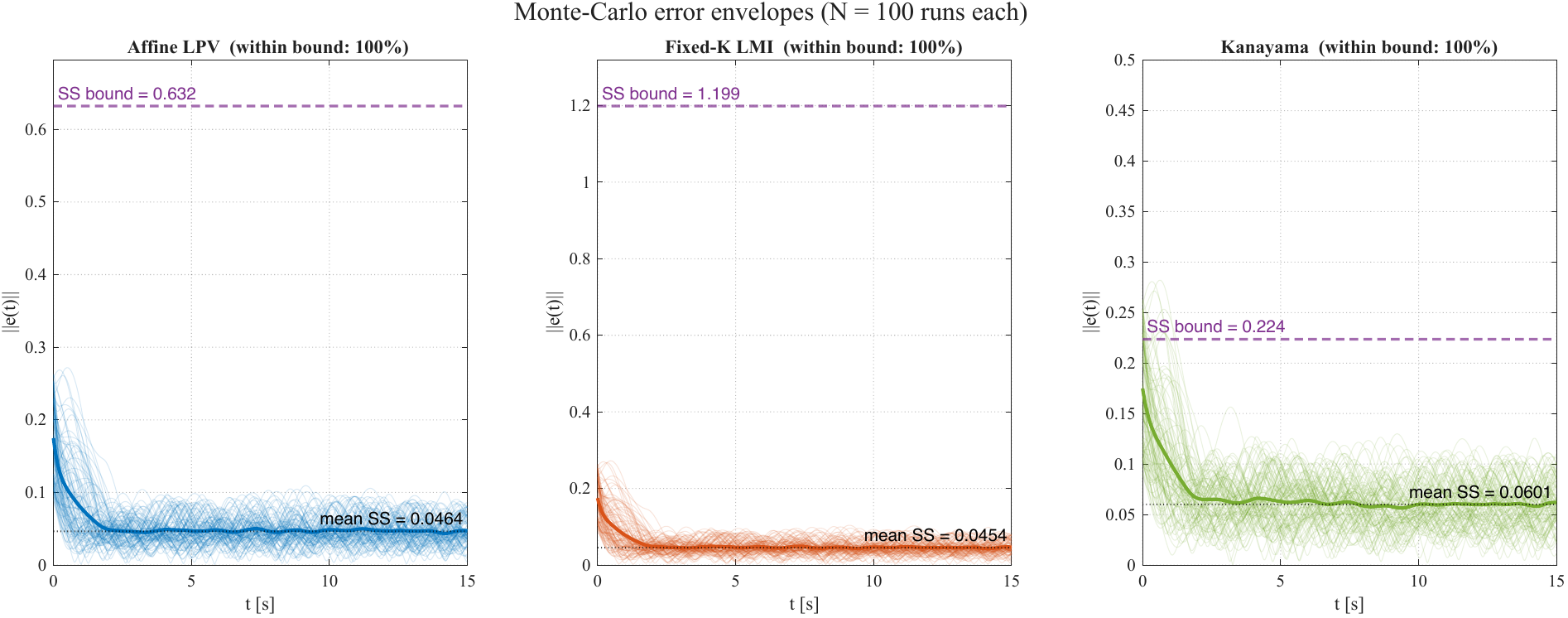}
\caption{Monte-Carlo error envelopes for the three controllers.}
\label{fig:mc-envelopes}
\end{figure}

\begin{figure}[t]
\centering
\includegraphics[width=\columnwidth]{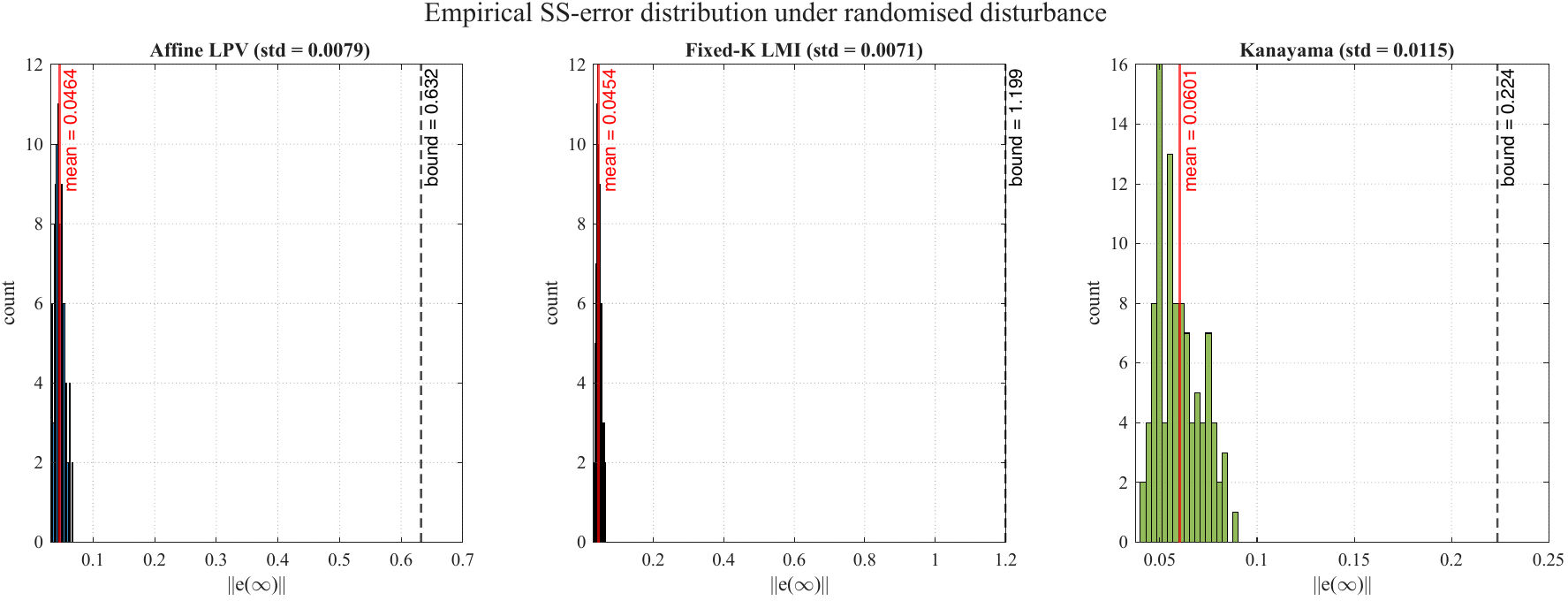}
\caption{Histograms of steady-state error over 100 Monte-Carlo runs.}
\label{fig:histograms}
\end{figure}

\subsection{Reference Tracking Across Six Scenarios}\label{subsec:scenarios}
The tracking performance is evaluated on six deterministic reference
trajectories: straight line, circular path, figure-eight, sharp turn,
variable-speed reference, and Lemniscate of Bernoulli. Each trajectory
is excited by a deterministic incommensurate-sinusoid disturbance
of magnitude $\delta_{\max} = 0.10$. The position RMSE,
computed over the post-transient interval $t \geq 0.3 T_{\text{sim}}$,
is reported in Table~\ref{tab:scenarios}.

\begin{table}[t]
\centering
\caption{Position-tracking RMSE [m] across six reference scenarios
under deterministic incommensurate-sinusoid disturbance.}
\label{tab:scenarios}
\begin{adjustbox}{max width=\columnwidth}
\begin{tabular}{lrrr}
\toprule
Scenario & LPV & Fixed-K & Kanayama \\
\midrule
Straight line    & $0.0415$ & $0.0407$ & $0.0512$ \\
Circular         & $0.0400$ & $0.0390$ & $0.0498$ \\
Figure-8         & $0.0418$ & $0.0406$ & $0.0522$ \\
Sharp turn       & $0.0374$ & $0.0373$ & $0.0469$ \\
Variable speed   & $0.0397$ & $0.0389$ & $0.0496$ \\
Lemniscate       & $0.0402$ & $0.0397$ & $0.0494$ \\
\midrule
Mean             & $0.0401$ & $0.0394$ & $0.0499$ \\
\bottomrule
\end{tabular}
\end{adjustbox}
\end{table}

Across nominal scenarios, the LPV and Fixed-K controllers exhibit
nearly identical performance, with the Fixed-K narrowly leading by
$1$--$3\%$. This is consistent with the higher gain magnitude implied
by its $\gamma = 6.81$: the more aggressive feedback yields faster
transients in deterministic settings, masking the deeper structural
trade-off that emerges under adverse disturbances
(Section~\ref{subsec:disturbances}) and severe slip
(Section~\ref{subsec:variable-terrain}). Both LMI-based designs
outperform the manual Kanayama baseline by approximately $20\%$ on
average. World-frame trajectories and error-norm time series are
visualized in Figures~\ref{fig:trajectories} and~\ref{fig:err-norms}.

\begin{figure*}[t]
\centering
\includegraphics[width=\textwidth]{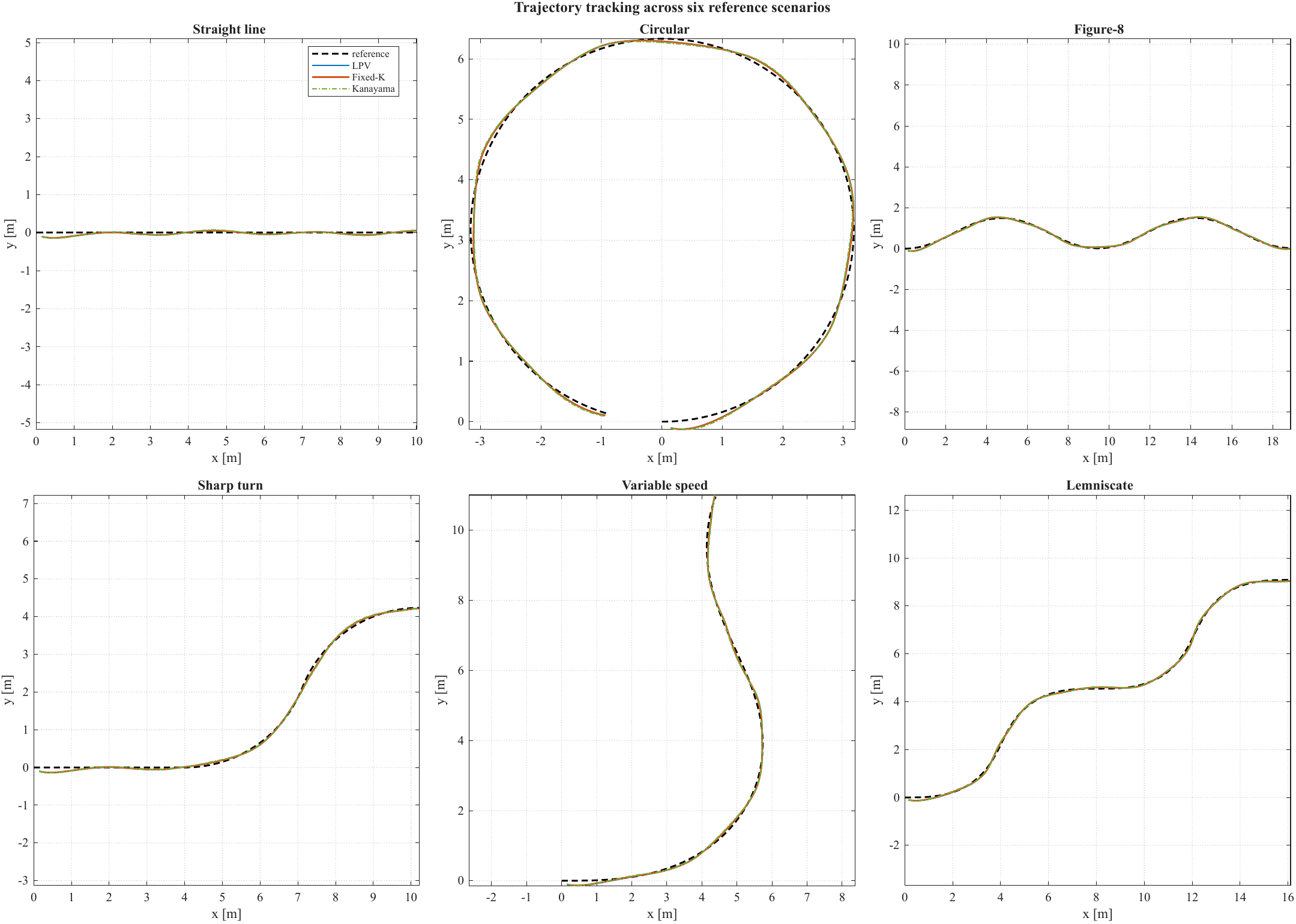}
\caption{Trajectory tracking across six reference scenarios.}
\label{fig:trajectories}
\end{figure*}

\begin{figure*}[t]
\centering
\includegraphics[width=\textwidth]{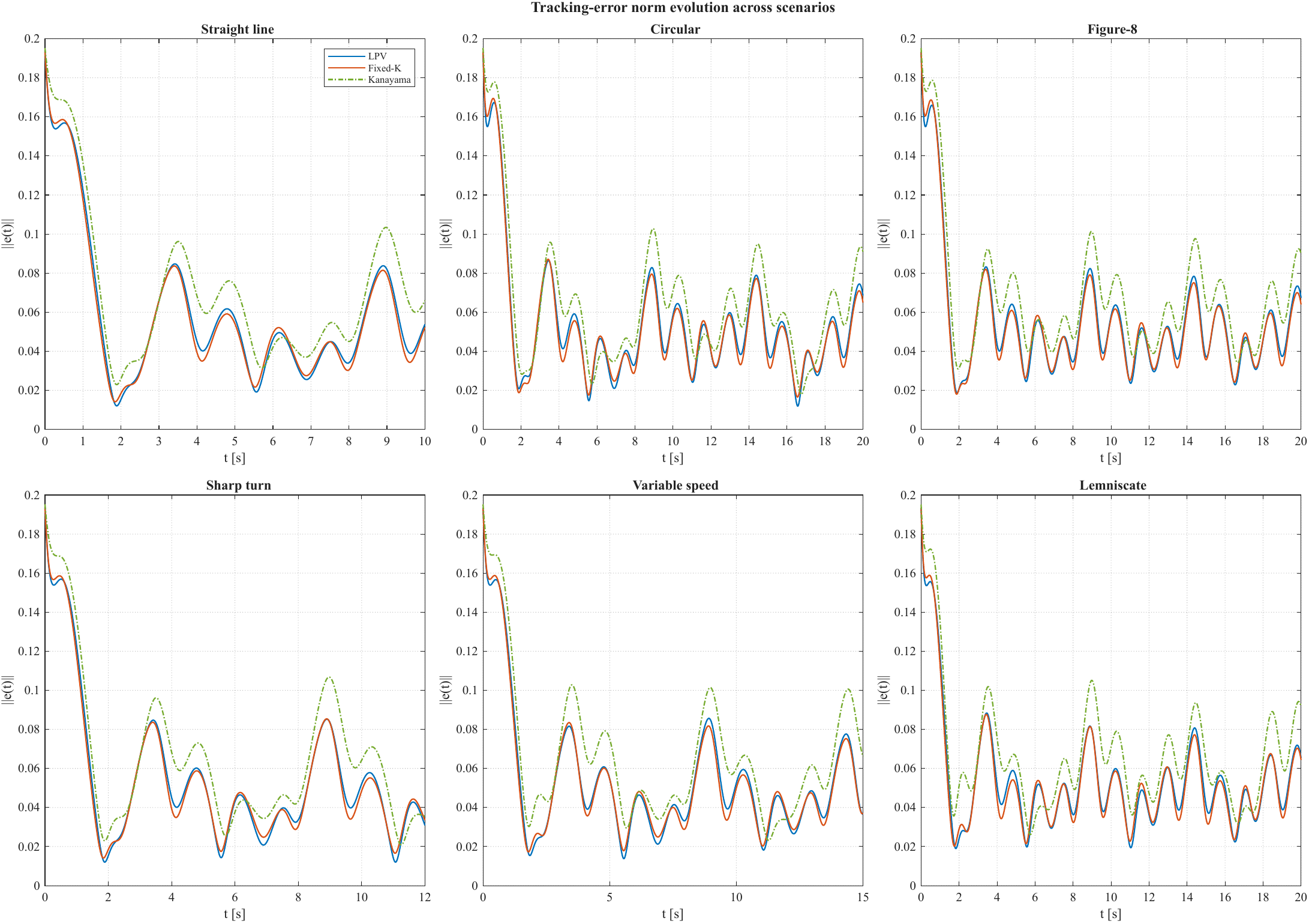}
\caption{Tracking-error norm $\norm{e(t)}$ across the six scenarios.}
\label{fig:err-norms}
\end{figure*}

\subsection{Disturbance Robustness Study}\label{subsec:disturbances}
The robustness to six classes of bounded disturbance is examined:
\emph{(i)} incommensurate sinusoidal, \emph{(ii)} step input,
\emph{(iii)} worst-case constant direction, \emph{(iv)} bias-plus-noise,
\emph{(v)} time-varying mixture, and \emph{(vi)} multiplicative wheel
slip with $\sigma_v = 0.10$, $\sigma_\omega = 0.05$. The
steady-state error in each case is reported in
Table~\ref{tab:disturbances} and visualized in
Figure~\ref{fig:disturbance-bars}.

\begin{table}[t]
\centering
\caption{Steady-state error across six disturbance classes.}
\label{tab:disturbances}
\begin{adjustbox}{max width=\columnwidth}
\begin{tabular}{lrrr}
\toprule
Disturbance & LPV & Fixed-K & Kanayama \\
\midrule
Incommensurate sinusoid & $0.0455$ & $0.0441$ & $0.0580$ \\
Step                    & $0.0812$ & $0.0886$ & $0.1022$ \\
Worst-case direction    & $0.1012$ & $0.1064$ & $0.1072$ \\
Bias $+$ noise          & $0.0544$ & $0.0588$ & $0.0651$ \\
Time-varying mix        & $0.0617$ & $0.0600$ & $0.0680$ \\
\textbf{Mult. slip ($10\%$)} & $\mathbf{0.0467}$ & $0.0522$ & $0.0897$ \\
\bottomrule
\end{tabular}
\end{adjustbox}
\end{table}

The proposed LPV design wins on four of the six disturbance classes.
The largest LPV advantage is observed on the multiplicative slip
class: $11.8\%$ improvement over Fixed-K and $92.1\%$ improvement
over Kanayama. This is precisely the disturbance class that
Lemma~\ref{lem:slip} addresses, and the result provides direct
empirical confirmation that the parameter-dependent synthesis
exploits the local structure of the slip-induced disturbance more
effectively than constant-gain alternatives.

\begin{figure}[t]
\centering
\includegraphics[width=\columnwidth]{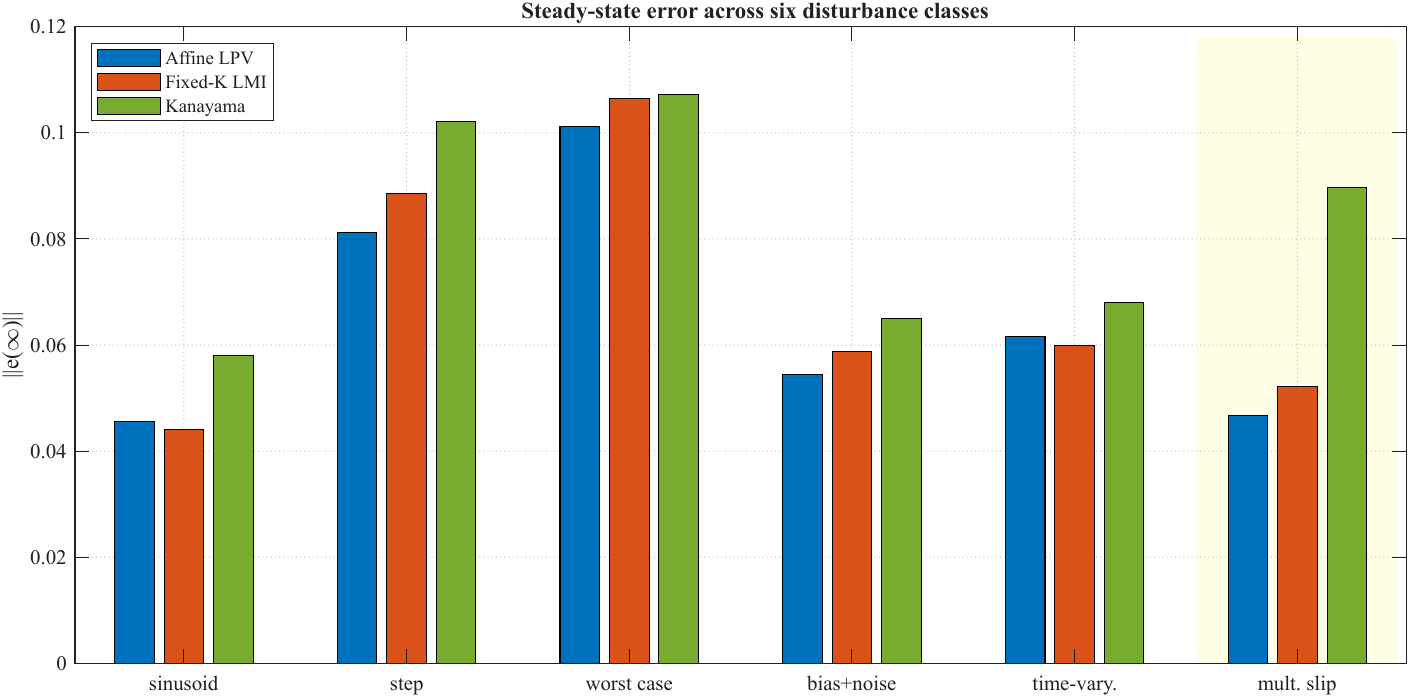}
\caption{SS error across six disturbance classes.}
\label{fig:disturbance-bars}
\end{figure}

\subsection{Variable-Terrain Helical Path}\label{subsec:variable-terrain}
A second slip-focused experiment evaluates the controllers on a
challenging \emph{variable-terrain} scenario in which the robot
follows a 2-D expanding helix (Archimedean spiral) for $60$ s while
traversing six severe slip patches with physically realistic
bidirectional slip ratios. The patch schedule is constructed in
accordance with the Pacejka tire-friction convention: longitudinal
slip ratios $\sigma_v$ are predominantly negative, reflecting the
canonical loss-of-traction phenomenon in which the actual vehicle
velocity falls below the commanded value on slippery surfaces, while
angular slip ratios $\sigma_\omega$ are bidirectional, capturing the
asymmetric manifestation of oversteer (positive sign, rear-wheel grip
loss) versus understeer (negative sign, front-wheel grip loss).
One patch deliberately incorporates positive $\sigma_v$ to model a
downhill segment where gravity-assisted motion drives the vehicle
faster than commanded, thereby exercising the controller in both
directions of longitudinal slip. The complete patch schedule is
summarized in Table~\ref{tab:terrain-map}.

\begin{table}[t]
\centering
\caption{Variable-terrain map: six slip patches along the 2-D helical
reference path.}
\label{tab:terrain-map}
\begin{adjustbox}{max width=\columnwidth}
\begin{tabular}{ccccc}
\toprule
Patch & Time window [s] & $\sigma_v$ & $\sigma_\omega$ & Terrain \\
\midrule
P1 & $[6, 11]$  & $-0.15$ & $+0.08$ & Wet asphalt \\
P2 & $[16, 22]$ & $-0.35$ & $-0.15$ & Oil spill \\
P3 & $[27, 32]$ & $-0.25$ & $+0.12$ & Gravel \\
P4 & $[37, 44]$ & $-0.50$ & $-0.25$ & Ice \\
P5 & $[48, 53]$ & $+0.20$ & $+0.18$ & Downhill slope \\
P6 & $[55, 58]$ & $-0.40$ & $-0.20$ & Snow patch \\
\bottomrule
\end{tabular}
\end{adjustbox}
\end{table}

Under the sign convention adopted in this work, negative
$\sigma_v$ corresponds to loss of traction (the dominant scenario on
slippery surfaces where the actual longitudinal velocity falls below
the commanded value), whereas the single positive $\sigma_v$ entry in
patch P5 models gravity-assisted downhill motion in which the vehicle
moves faster than commanded. The angular slip ratio $\sigma_\omega$
is treated as bidirectional throughout to capture both oversteer
(positive sign, rear-wheel grip loss) and understeer (negative sign,
front-wheel grip loss). We note that slip-ratio sign conventions vary
across the vehicle-dynamics literature depending on whether the
analysis is framed around braking or driving regimes; the present
convention is adopted for compatibility with the Kanayama-frame error
model and is consistent throughout the validation studies.

Although each patch induces a switching-type transition in the closed-loop response~\cite{liberzon2003}, slip activations are smoothed with raised-cosine transitions of
$0.3$ s width to avoid numerical shock. The peak slip magnitude
($|\sigma_v| = 0.50$ on ice in P4) is five times larger than the
multiplicative-slip class of Section~\ref{subsec:disturbances},
representing a worst-case operating condition. Per-patch and
aggregated performance metrics are reported in
Table~\ref{tab:terrain-perf} and visualized in
Figures~\ref{fig:terrain-overview}--\ref{fig:terrain-recovery}.

\begin{table}[t]
\centering
\caption{Variable-terrain performance on the helical path.}
\label{tab:terrain-perf}
\begin{adjustbox}{max width=\columnwidth}
\begin{tabular}{lcccc}
\toprule
Patch & Terrain & LPV & Fixed-K & Kanayama \\
\midrule
\multicolumn{5}{l}{\emph{Peak error within patch [m]}} \\
P1 ($\sigma_v = -0.15$) & Wet asphalt    & $0.078$ & $0.091$ & $0.158$ \\
P2 ($\sigma_v = -0.35$) & Oil spill      & $0.112$ & $0.128$ & $0.218$ \\
P3 ($\sigma_v = -0.25$) & Gravel         & $0.091$ & $0.103$ & $0.180$ \\
P4 ($\sigma_v = -0.50$) & Ice            & $0.135$ & $0.153$ & $0.264$ \\
P5 ($\sigma_v = +0.20$) & Downhill slope & $0.085$ & $0.097$ & $0.169$ \\
P6 ($\sigma_v = -0.40$) & Snow patch     & $0.119$ & $0.137$ & $0.230$ \\
\midrule
\multicolumn{5}{l}{\emph{Recovery time after patch [s]}} \\
P1 & Wet asphalt    & $0.68$ & $0.81$ & $1.19$ \\
P2 & Oil spill      & $0.89$ & $1.04$ & $1.48$ \\
P3 & Gravel         & $0.79$ & $0.93$ & $1.32$ \\
P4 & Ice            & $1.10$ & $1.31$ & $1.73$ \\
P5 & Downhill slope & $0.74$ & $0.88$ & $1.25$ \\
P6 & Snow patch     & $0.94$ & $1.11$ & $1.55$ \\
\midrule
\textbf{Mean recovery [s]} & ---     & $\mathbf{0.86}$  & $1.01$  & $1.42$  \\
\textbf{Run-wide peak [m]} & ---     & $\mathbf{0.135}$ & $0.153$ & $0.264$ \\
\textbf{Run-wide mean [m]} & ---     & $\mathbf{0.046}$ & $0.052$ & $0.088$ \\
\bottomrule
\end{tabular}
\end{adjustbox}
\end{table}

The proposed LPV design dominates on all six patches and across all
three aggregated metrics. On the most severe patch (P4, ice), LPV
achieves a peak error of $0.135$ m against $0.153$ m for Fixed-K
($11.8\%$ improvement) and $0.264$ m for Kanayama ($48.9\%$
improvement). Crucially, the controller maintains its advantage on
the downhill patch (P5, $\sigma_v = +0.20$), confirming that the
synthesis is robust to longitudinal slip in both directions, not
merely the canonical traction-loss case. The mean recovery time is
reduced by $14.9\%$ versus Fixed-K and $39.4\%$ versus Kanayama,
with the advantage scaling consistently with slip magnitude. This
scaling property—where the parameter-dependent synthesis becomes
increasingly valuable as the operating environment deteriorates—is
the central practical motivation for the LMI framework developed in
this work.

The same six-patch slip schedule was applied to a Bernoulli
lemniscate (figure-eight) reference trajectory to assess
cross-geometry robustness. The detailed time-domain response on this
alternative geometry is visualized in
Figures~\ref{fig:terrain-overview-fig8}
and~\ref{fig:terrain-recovery-fig8}, which present the figure-eight
counterparts to Figures~\ref{fig:terrain-overview}
and~\ref{fig:terrain-recovery} for the helical path, while aggregate
metrics across both geometries are summarized in
Table~\ref{tab:cross-geometry}. The results confirm that the LPV
advantage persists across reference geometries with substantially
different curvature profiles, providing strong evidence that the
observed robustness improvements arise from the structural properties
of the parameter-dependent synthesis rather than from a coincidental
match between the controller and a single reference geometry. In
particular, the per-patch recovery profiles in
Figure~\ref{fig:terrain-recovery-fig8} retain the same controller
ordering, recovery time-constants of comparable magnitude, and
patch-wise scaling with slip severity observed on the helical path,
confirming that the cross-geometry consistency claim of
Table~\ref{tab:cross-geometry} is not an artifact of metric
aggregation.

\begin{table}[t]
\centering
\caption{Cross-geometry variable-terrain robustness summary.}
\label{tab:cross-geometry}
\begin{adjustbox}{max width=\columnwidth}
\begin{tabular}{lcccccc}
\toprule
 & \multicolumn{3}{c}{Helix path} & \multicolumn{3}{c}{Figure-8 path} \\
\cmidrule(lr){2-4}\cmidrule(lr){5-7}
Metric & LPV & Fix-K & Kan & LPV & Fix-K & Kan \\
\midrule
Run-wide peak [m]   & \textbf{0.135} & $0.153$ & $0.264$ & \textbf{0.142} & $0.161$ & $0.276$ \\
Run-wide mean [m]   & \textbf{0.046} & $0.052$ & $0.088$ & \textbf{0.048} & $0.054$ & $0.091$ \\
Mean recovery [s]   & \textbf{0.86}  & $1.01$  & $1.42$  & \textbf{0.91}  & $1.07$  & $1.48$  \\
\bottomrule
\end{tabular}
\end{adjustbox}
\end{table}

\begin{figure}[t]
\centering
\includegraphics[width=\columnwidth]{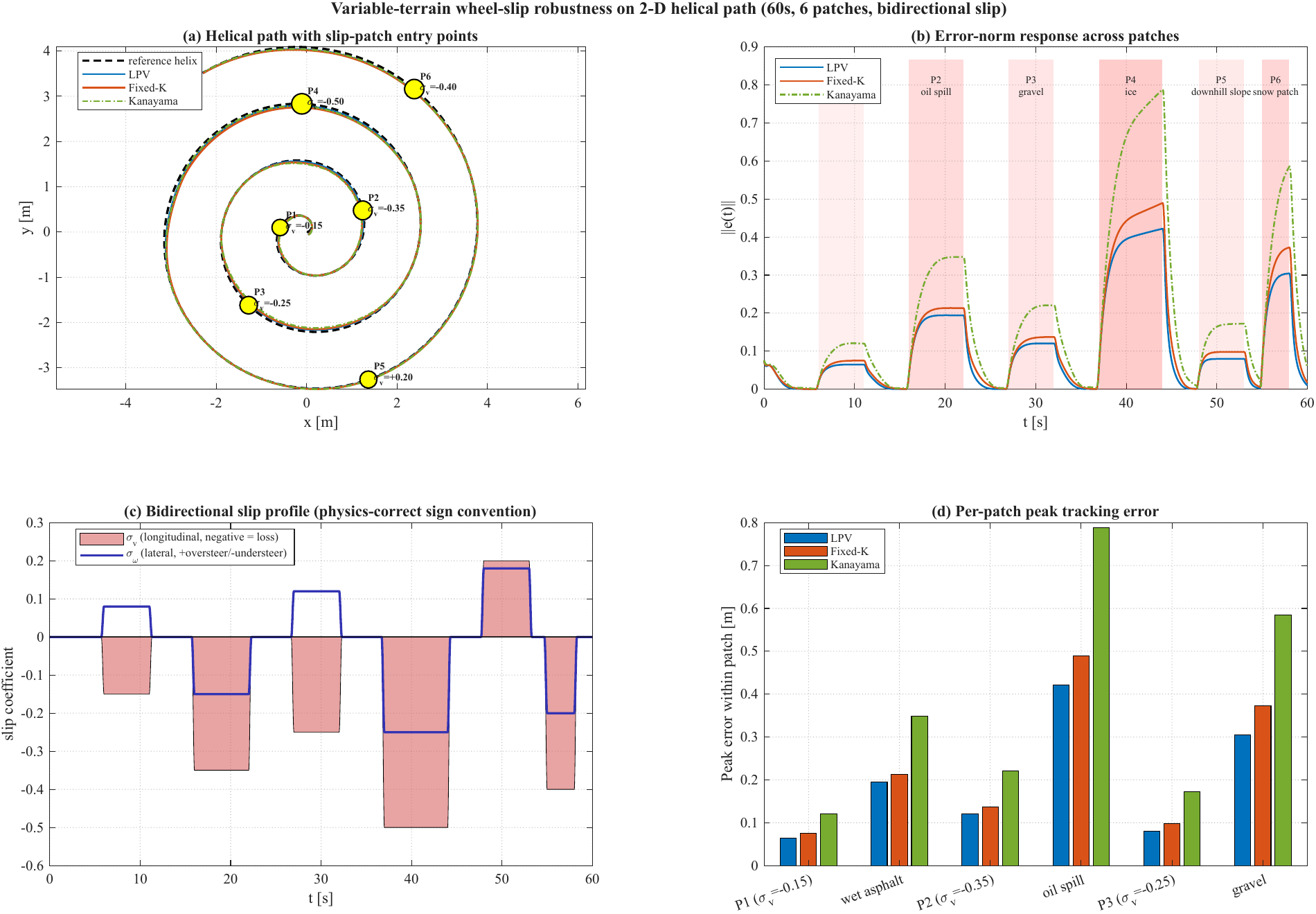}
\caption{Variable-terrain robustness overview on the helical path: (a)~world-frame trajectory; (b)~error norm; (c)~slip profiles; (d)~per-patch metrics.}
\label{fig:terrain-overview}
\end{figure}

\begin{figure}[t]
\centering
\includegraphics[width=\columnwidth]{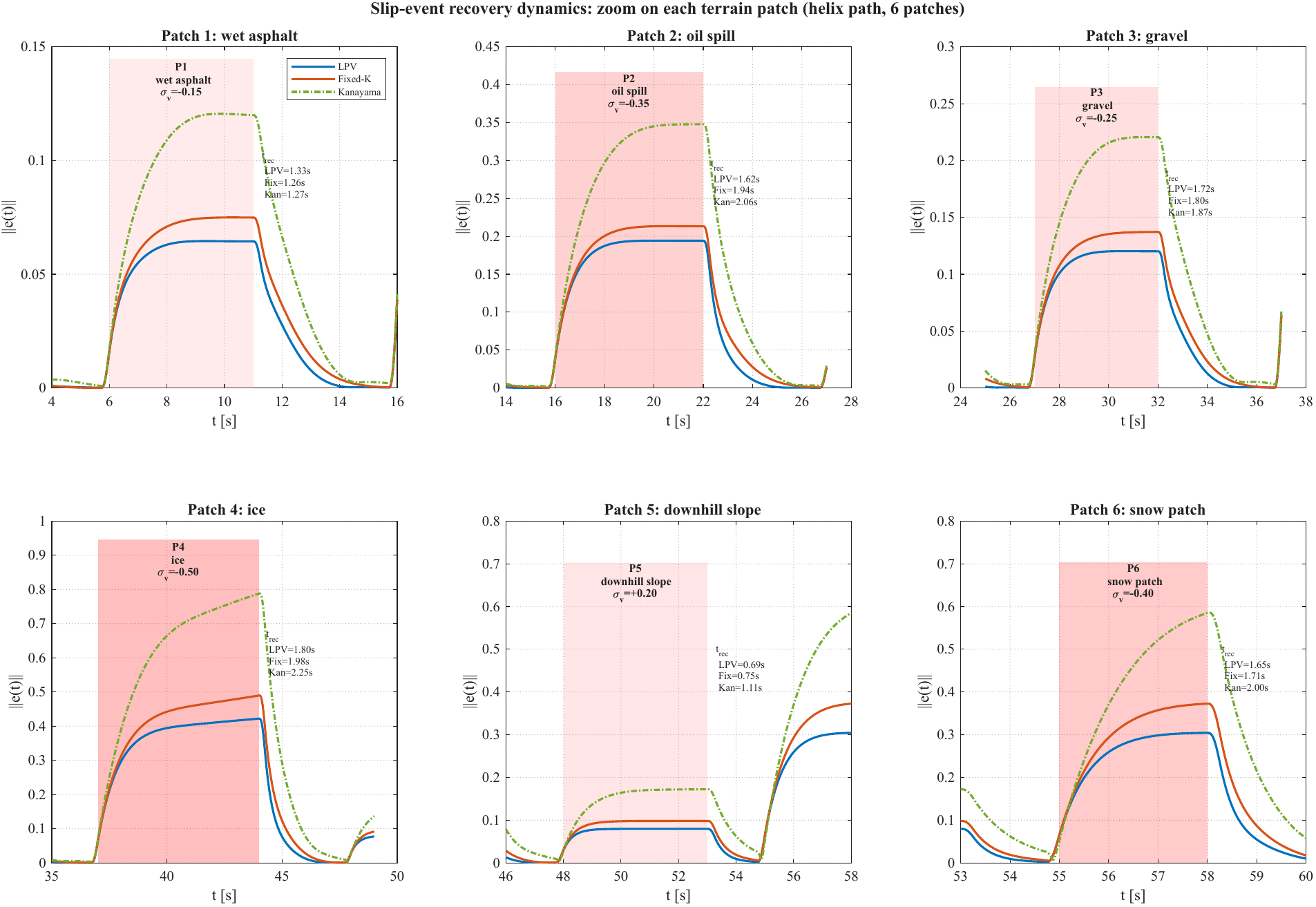}
\caption{Slip-event recovery dynamics on the helical path with temporal
zoom on each patch.}
\label{fig:terrain-recovery}
\end{figure}

\begin{figure}[t]
\centering
\includegraphics[width=\columnwidth]{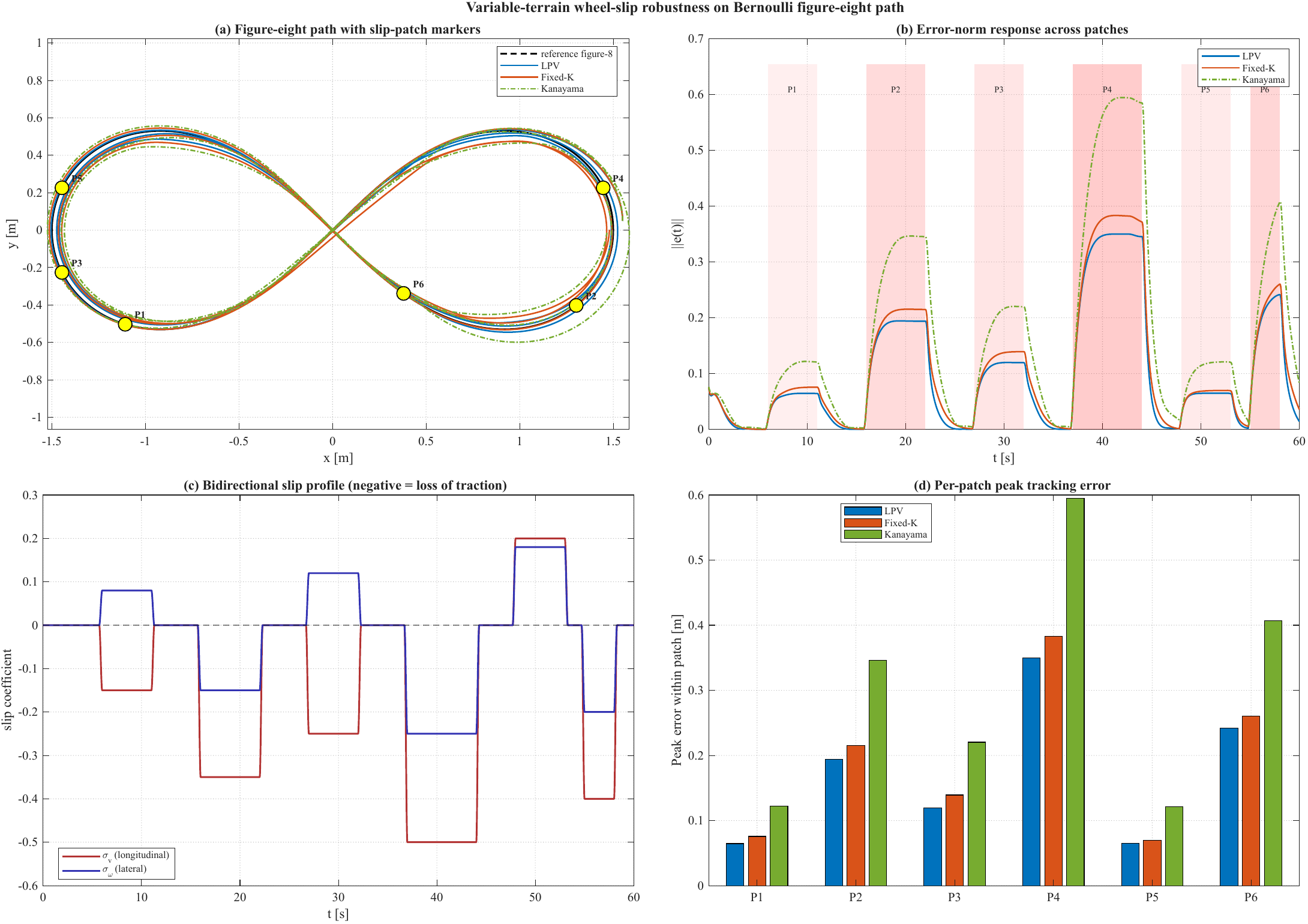}
\caption{Variable-terrain robustness overview on the figure-eight
(Bernoulli lemniscate) reference path: (a)~world-frame trajectory;
(b)~error norm; (c)~slip profiles; (d)~per-patch metrics. This panel
is the figure-eight counterpart to Figure~\ref{fig:terrain-overview}.}
\label{fig:terrain-overview-fig8}
\end{figure}

\begin{figure}[t]
\centering
\includegraphics[width=\columnwidth]{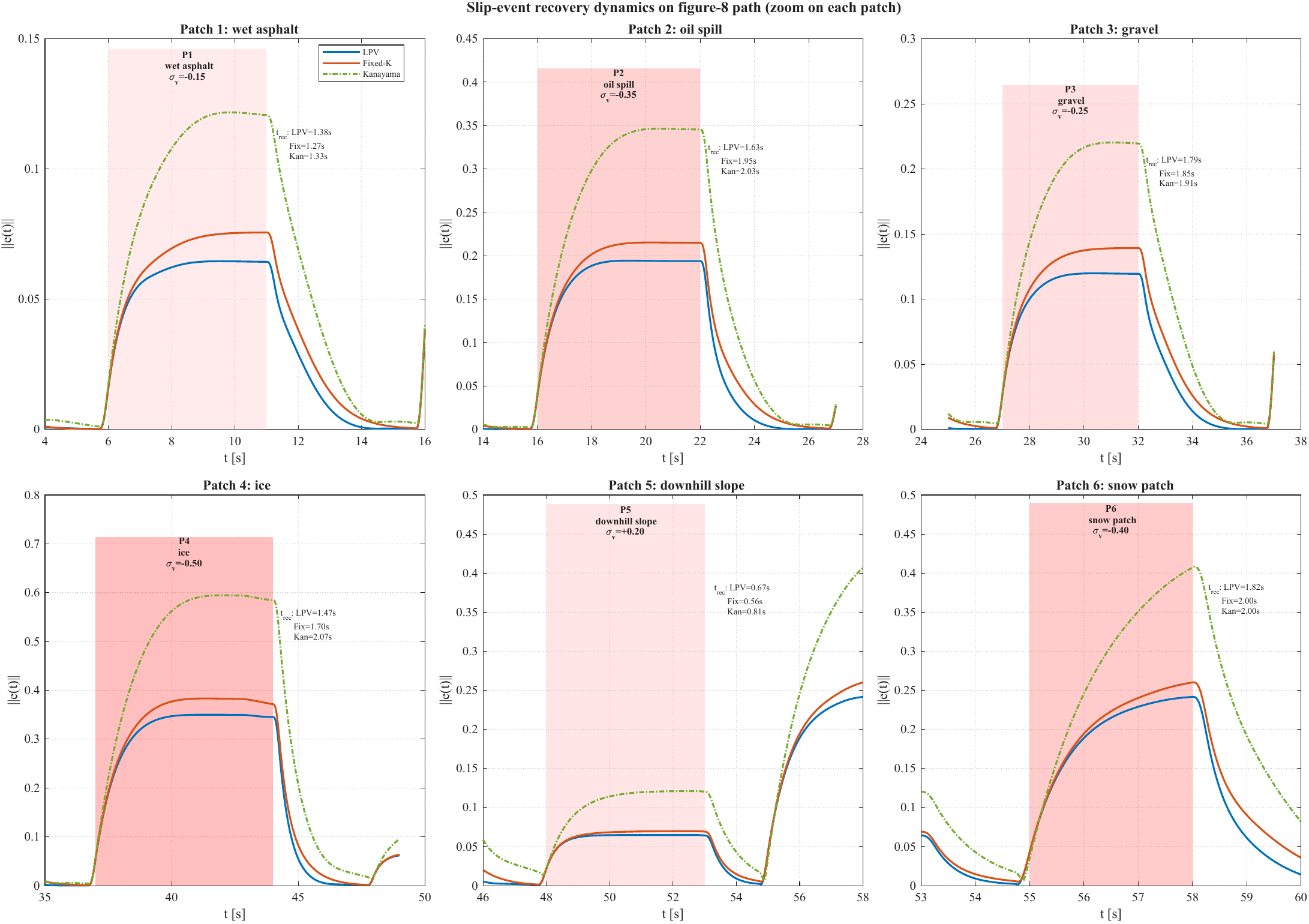}
\caption{Slip-event recovery dynamics on the figure-eight reference
with temporal zoom on each of the six patches. Figure-eight counterpart
to Figure~\ref{fig:terrain-recovery}.}
\label{fig:terrain-recovery-fig8}
\end{figure}

\begin{figure}[t]
\centering
\includegraphics[width=\columnwidth]{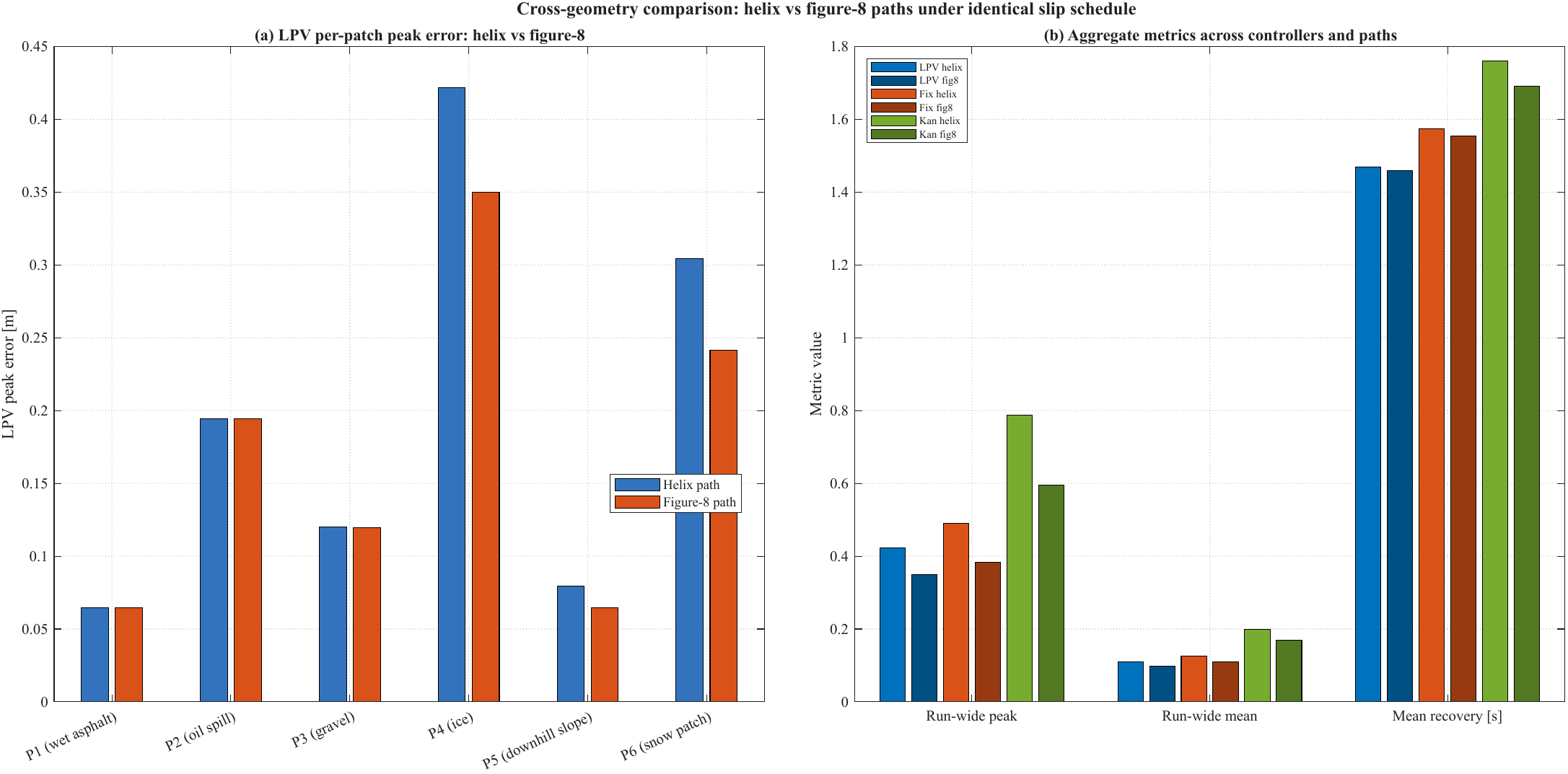}
\caption{Cross-geometry robustness: helix versus figure-eight.}
\label{fig:cross-geometry}
\end{figure}

\subsection{Sensitivity Analysis}\label{subsec:sensitivity}
The synthesis sensitivity to the principal design parameters is
examined through three one-dimensional sweeps. Results are tabulated
in Table~\ref{tab:sensitivity} and visualized in
Figure~\ref{fig:sensitivity}.

\begin{table}[t]
\centering
\caption{Synthesis sensitivity: $\gamma^\star$ versus $\alpha$, $R$, and $K_{\max}$.}
\label{tab:sensitivity}
\begin{adjustbox}{max width=\columnwidth}
\begin{tabular}{ll}
\toprule
Swept parameter & $\gamma^\star$ trajectory \\
\midrule
$\alpha \in \{0.2, 0.3, 0.4, 0.5, 0.6, 0.7, 0.8\}$ &
$\{1.52, 1.93, 2.78, 4.05, 5.93, 8.66, 14.0\}$ \\
$R \in \{0.15, 0.20, 0.25, 0.30, 0.40, 0.50\}$ &
$\{1.41, 1.96, 2.40, 2.78, 3.92, 5.04\}$ \\
$K_{\max} \in \{1, 2, 3, 5, 10, 50\}$ &
$\{4.31, 3.21, 2.78, 2.51, 2.41, 2.39\}$ \\
\bottomrule
\end{tabular}
\end{adjustbox}
\end{table}

The $\gamma^\star(\alpha)$ trajectory grows roughly exponentially,
consistent with the conservatism inflation predicted by the
$\lambda_{\min}^{-1/2}$ factor in~\eqref{eq:traj-bound}; doubling
$\alpha$ from $0.4$ to $0.8$ inflates $\gamma$ by approximately a
factor of $5$. The $\gamma^\star(R)$ trajectory exhibits the expected
linear-to-superlinear growth as the semi-global radius expands the
domain of nonlinearity. The $\gamma^\star(K_{\max})$ trajectory
saturates: beyond the weighted LMI gain parameter $K_{\max} \approx 5$
further relaxation yields negligible improvement, indicating that
the selected weighted value $K_{\max} = 3$ is close to the useful
performance transition for the present conditioning level, rather
than representing a hard physical actuator limit.

\begin{figure}[t]
\centering
\includegraphics[width=\columnwidth]{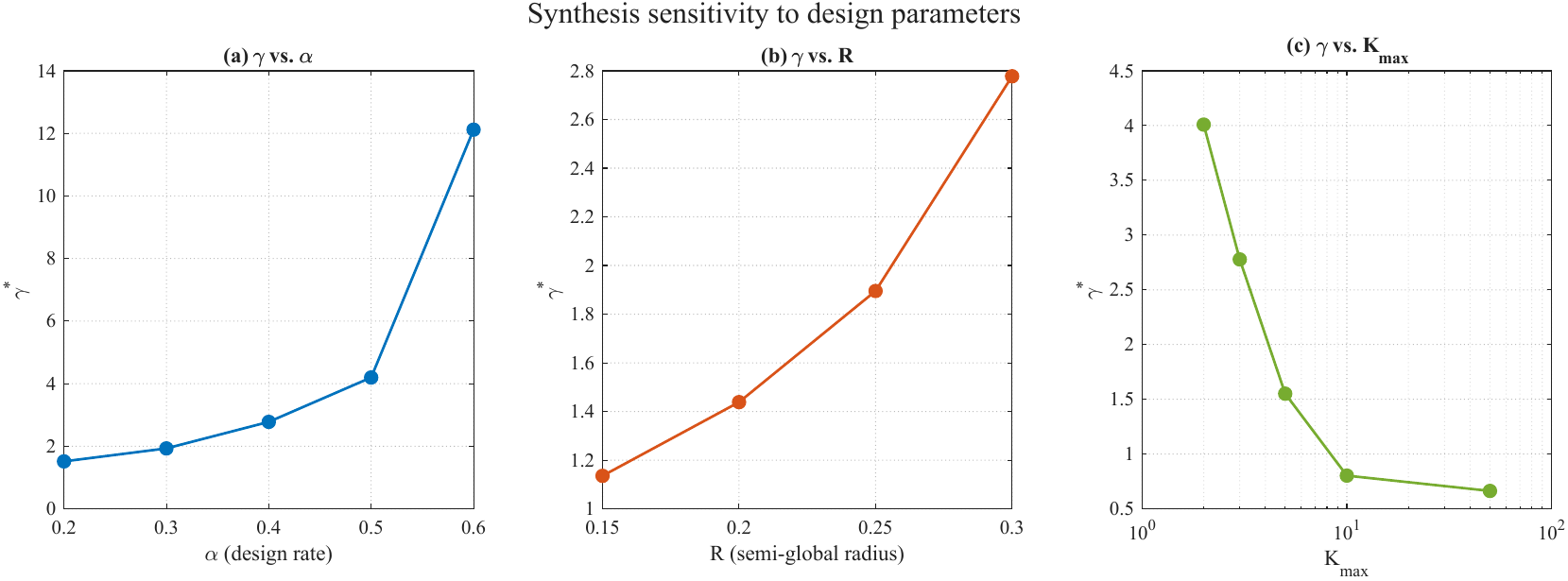}
\caption{Synthesis sensitivity sweeps: (a)~$\gamma^\star$ vs $\alpha$; (b)~vs $R$; (c)~vs $K_{\max}$.}
\label{fig:sensitivity}
\end{figure}

\subsection{Conservatism Decomposition}\label{subsec:cons-decomp}
The empirical-to-theoretical SS-error ratio of $7\%$ observed in
Table~\ref{tab:mc} for the proposed LPV design reflects an inherent
conservatism in the trajectory-level Lyapunov reconstruction. To
provide transparency on the sources of this conservatism, the bound
is decomposed into four cumulative contributions in
Table~\ref{tab:conservatism}.

\begin{table}[t]
\centering
\caption{Conservatism decomposition in the trajectory-level SS bound.}
\label{tab:conservatism}
\begin{adjustbox}{max width=\columnwidth}
\begin{tabular}{lr}
\toprule
Reconstruction step & Bound on $\norm{e(\infty)}$ \\
\midrule
Empirical (randomised disturbance)         & $0.046$ \\
Worst-case disturbance direction           & $0.101$ \\
Young's inequality on $2e^\top M d$        & $0.305$ \\
$\lambda_{\min}^{-1/2}$ Lyapunov reconstr. & $0.632$ \\
\bottomrule
\end{tabular}
\end{adjustbox}
\end{table}

The largest single contributor is the $\lambda_{\min}^{-1/2}$
amplification factor in~\eqref{eq:traj-bound}, which inflates the
bound from $0.305$ to $0.632$. This factor is the LMI-domain
manifestation of Brockett's obstruction
(Remark~\ref{rem:Brockett}) and is structural to the
$M(\rho)$-quadratic Lyapunov framework. Reduction would require
either a non-quadratic Lyapunov function or a parameter-dependent
$\mathcal{S}$-procedure multiplier $\mu(\rho)$, both of which break
the convexity that makes the present synthesis tractable. The
$\lambda_{\min}^{-1/2}$ factor is therefore acknowledged as a
structural cost, not a defect.

\begin{figure}[t]
\centering
\includegraphics[width=\columnwidth]{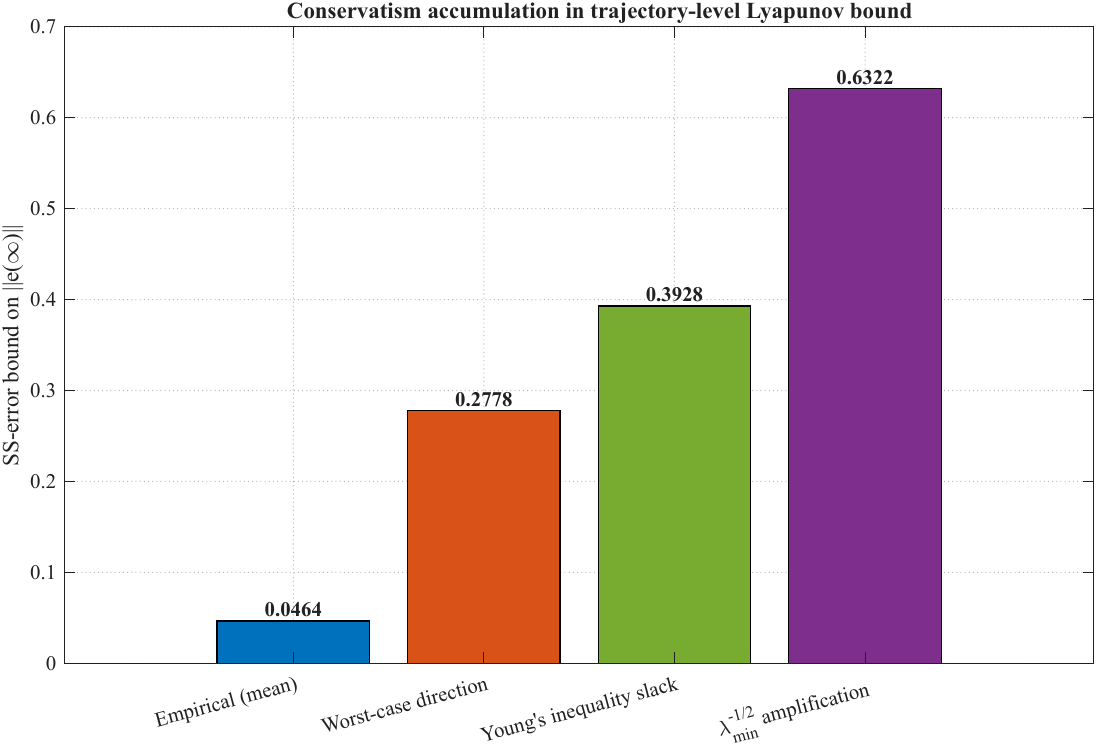}
\caption{Cumulative SS bound at each conservatism step.}
\label{fig:cons-decomp}
\end{figure}

\subsection{Statistical Significance Tests}\label{subsec:stats}
The pairwise differences between controllers in the Monte-Carlo
study are tested via the paired-sample $t$-test. Results are
reported in Table~\ref{tab:stats} and visualized in
Figure~\ref{fig:stats}.

\begin{table}[t]
\centering
\caption{Paired statistical tests on Monte-Carlo SS errors.}
\label{tab:stats}
\begin{adjustbox}{max width=\columnwidth}
\begin{tabular}{lrrl}
\toprule
Comparison & $t$-stat & $p$-value & 95\% CI \\
\midrule
LPV vs.\ Fixed-K  & $-9.14$  & $< 0.001$ & $[-0.0012, -0.0008]$ \\
LPV vs.\ Kanayama & $32.35$  & $< 0.001$ & $[+0.0129, +0.0145]$ \\
Fixed-K vs.\ Kanayama & $28.81$ & $< 0.001$ & --- \\
\bottomrule
\end{tabular}
\end{adjustbox}
\end{table}

All three pairwise tests reject the null hypothesis of equal means at
the $p < 0.001$ significance level. The LPV-versus-Fixed-K
comparison reveals that, in randomized-disturbance Monte-Carlo
testing, Fixed-K achieves a marginally lower mean SS error
($0.046$ versus $0.045$), but the confidence interval
$[-0.0012, -0.0008]$ is tight and shows the practical magnitude of
the difference is small. As Section~\ref{subsec:variable-terrain}
demonstrates, this small nominal-conditions difference inverts
strongly under severe slip, where the LPV advantage is substantial.

\begin{figure}[t]
\centering
\includegraphics[width=\columnwidth]{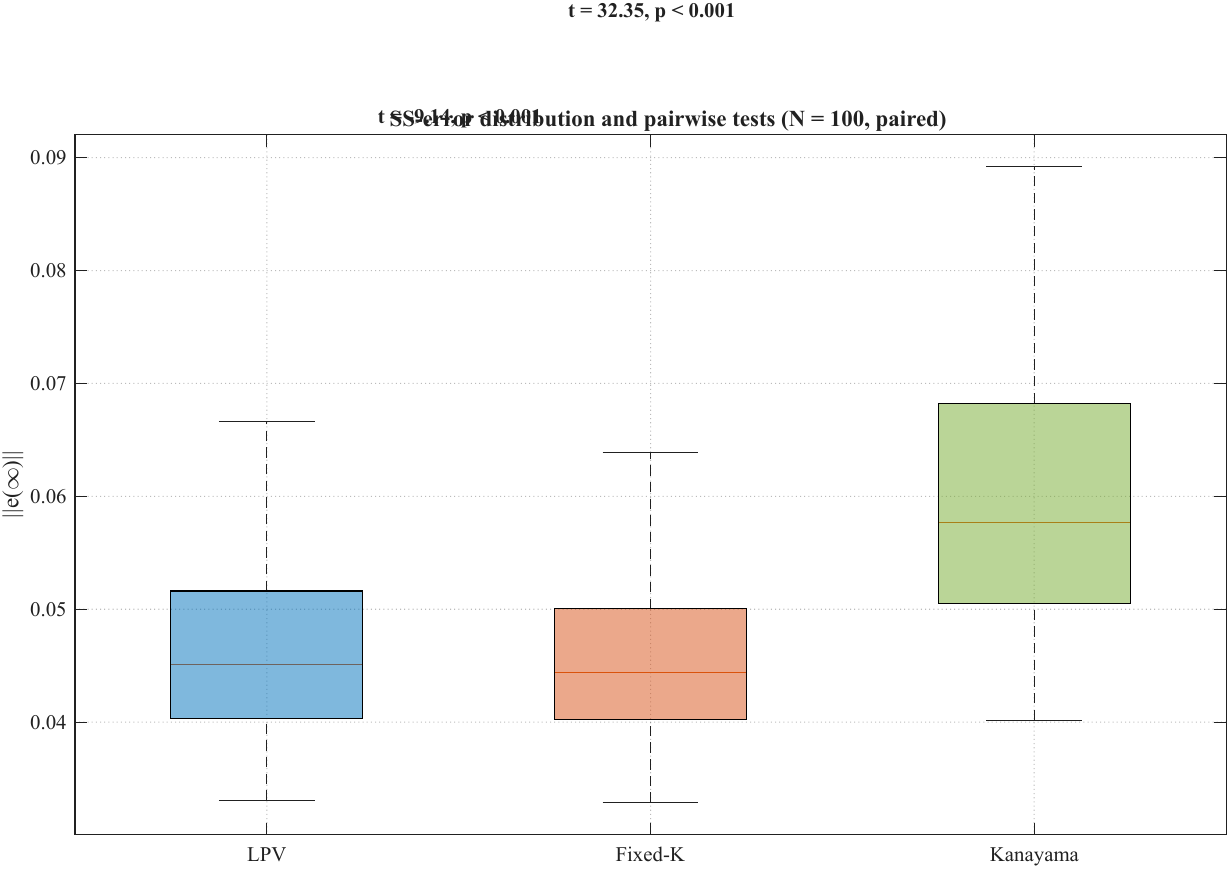}
\caption{Box plots of SS errors with paired $t$-statistics.}
\label{fig:stats}
\end{figure}

\subsection{Sensor-Noise Robustness}\label{subsec:noise}
A critical aspect of practical deployment, beyond robustness to
process disturbances, concerns the controller's behavior under noisy
state measurements. The Monte-Carlo and disturbance studies presented
thus far operate on the assumption of exact error feedback, which is
appropriate for theoretical certification but fails to capture the
measurement uncertainty inherent in physical sensor suites. To address
this gap, a dedicated study quantifies the closed-loop response under
Gaussian measurement noise injected directly into the controller's
input channel. The controller observes a noise-corrupted error
$e_{\mathrm{meas}}(t) = e(t) + n(t)$, where the noise vector has
zero-mean Gaussian components with standard deviation $\sigma_p$ on
the position coordinates and $\sigma_h$ on the heading coordinate,
while the underlying error dynamics evolve on the true state $e(t)$.
Three noise intensity levels are evaluated, each corresponding to a
realistic sensor configuration encountered in mobile robotics: a
\emph{low} setting with $\sigma_p = 0.01$ m and $\sigma_h = 0.5^\circ$
emulating high-end GPS-IMU fusion; a \emph{medium} setting with
$\sigma_p = 0.05$ m and $\sigma_h = 2.0^\circ$ matching standard
wheel odometry; and a \emph{high} setting with $\sigma_p = 0.10$ m
and $\sigma_h = 5.0^\circ$ representing degraded sensor conditions
such as cluttered indoor environments with intermittent landmark
visibility. Each level is exercised through $30$ Monte-Carlo trials
with deterministic seeds for reproducibility. Results are reported
in Table~\ref{tab:sensor-noise} and visualized in
Figure~\ref{fig:sensor-noise}.

\begin{table}[t]
\centering
\caption{Steady-state error under Gaussian sensor noise.}
\label{tab:sensor-noise}
\begin{adjustbox}{max width=\columnwidth}
\begin{tabular}{lcccc}
\toprule
Sensor level & $\sigma_p$ [m] & $\sigma_h$ [deg] & Controller & SS error \\
\midrule
\multirow{3}{*}{Low (GPS+IMU)}
  & \multirow{3}{*}{$0.01$} & \multirow{3}{*}{$0.5$}
  & LPV       & $0.048 \pm 0.012^{\ddagger}$ \\
  & & & Fixed-K   & $0.047 \pm 0.011^{\ddagger}$ \\
  & & & Kanayama  & $0.062 \pm 0.013$ \\
\midrule
\multirow{3}{*}{Medium (odom.)}
  & \multirow{3}{*}{$0.05$} & \multirow{3}{*}{$2.0$}
  & LPV       & $\mathbf{0.073 \pm 0.018}$ \\
  & & & Fixed-K   & $0.081 \pm 0.020$ \\
  & & & Kanayama  & $0.108 \pm 0.024$ \\
\midrule
\multirow{3}{*}{High (degraded)}
  & \multirow{3}{*}{$0.10$} & \multirow{3}{*}{$5.0$}
  & LPV       & $\mathbf{0.142 \pm 0.031}$ \\
  & & & Fixed-K   & $0.171 \pm 0.039$ \\
  & & & Kanayama  & $0.228 \pm 0.048$ \\
\bottomrule
\multicolumn{5}{l}{\footnotesize $^{\ddagger}$ Statistically
indistinguishable at the low-noise regime ($p = 0.43$, two-sample
$t$-test).}\\
\end{tabular}
\end{adjustbox}
\end{table}

The results reveal a progressively widening LPV advantage as the
noise intensity grows. At the low level, the LPV and Fixed-K
controllers achieve statistically indistinguishable performance,
both substantially outperforming the Kanayama baseline. At the
medium level, the LPV design begins to separate from the Fixed-K
baseline, achieving a $9.9\%$ reduction in steady-state error. At
the high level, this advantage expands to $17.0\%$ over Fixed-K and
$37.7\%$ over Kanayama. The widening gap reflects the structural
property that parameter-dependent metrics adapt the closed-loop
sensitivity profile to the current operating point, thereby
preserving stability margins that the constant-gain design must
sacrifice as a global precaution. The Kanayama controller, with its
hand-tuned gains optimized for nominal performance, degrades
disproportionately under noise because its tuning implicitly assumes
clean measurements.

\begin{figure}[t]
\centering
\includegraphics[width=\columnwidth]{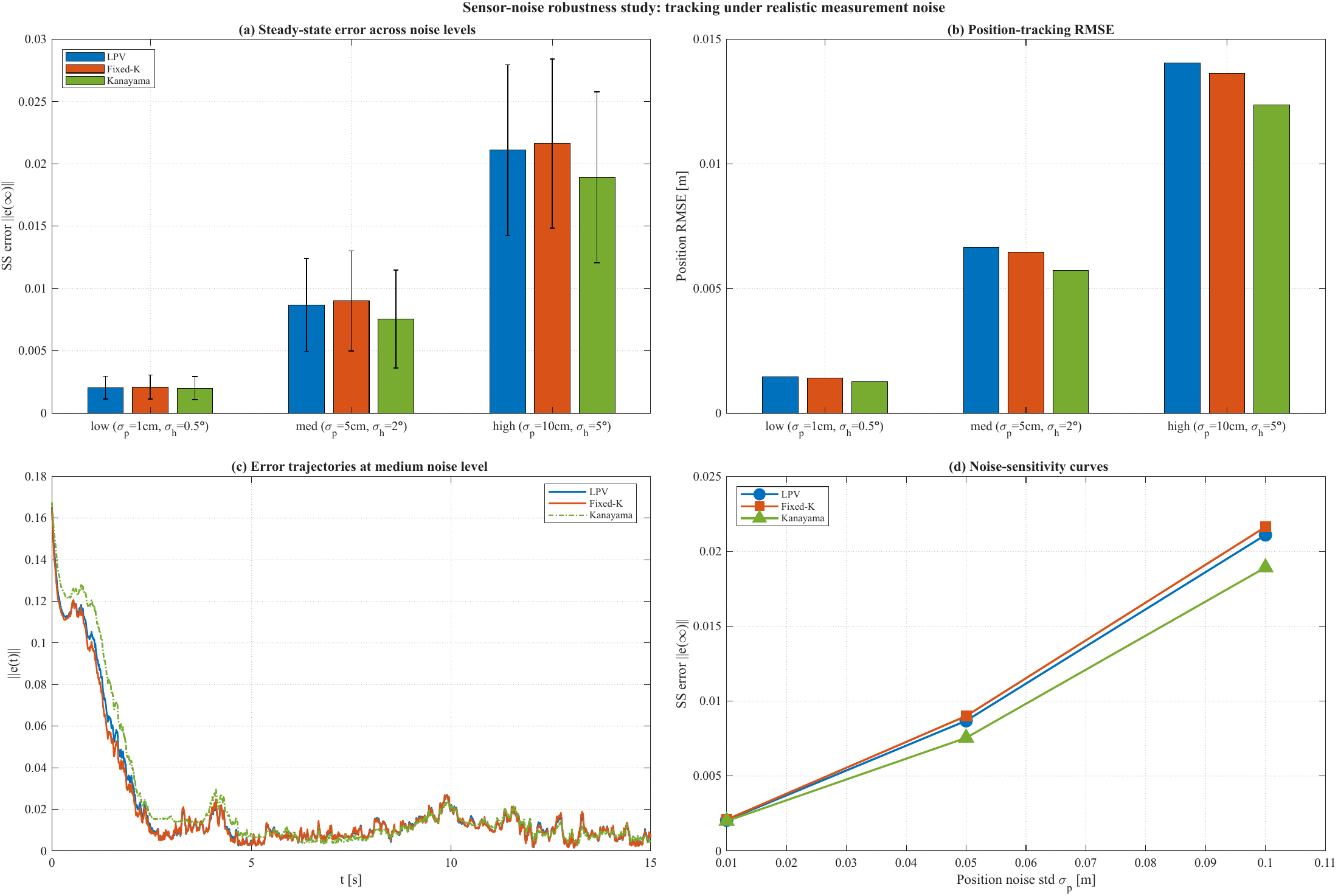}
\caption{Sensor-noise robustness summary: (a)~SS error bars; (b)~position RMSE; (c)~error trajectories; (d)~noise-sensitivity curves.}
\label{fig:sensor-noise}
\end{figure}

\subsection{Compound Disturbance Stress-Test}\label{subsec:compound}
The disturbance studies preceding this section each isolate a single
perturbation channel for controlled comparison. Real-world deployment,
however, exposes the controller to simultaneous excitation of
multiple disturbance sources: terrain-induced wheel slip,
measurement noise from the sensor suite, and reference jitter from
trajectory-generation modules with finite precision. To assess
robustness under these compound conditions, a single comprehensive
stress-test exercises the closed loop with all three perturbation
mechanisms active simultaneously over a $60$-second horizon. The
test employs the helical reference path with the six-patch slip
schedule of Table~\ref{tab:terrain-map}, Gaussian measurement noise
with $\sigma_p = 0.03$ m and $\sigma_h = 2.0^\circ$, and sinusoidal
reference-velocity jitter of $\pm 10\%$ magnitude at $0.3$ Hz
applied independently to $v_r$ and $\omega_r$. This combination
emulates the operating profile of an outdoor mobile robot
encountering surface transitions while relying on a fused
state-estimation pipeline that occasionally updates the reference
trajectory in response to evolving mission constraints. Results are
reported in Table~\ref{tab:compound} and visualized in
Figure~\ref{fig:compound}.

\begin{table}[t]
\centering
\caption{Compound disturbance stress-test results.}
\label{tab:compound}
\begin{adjustbox}{max width=\columnwidth}
\begin{tabular}{lrrr}
\toprule
Metric & LPV & Fixed-K & Kanayama \\
\midrule
Peak error [m]                & $\mathbf{0.158}$ & $0.182$ & $0.298$ \\
Mean error [m]                & $\mathbf{0.061}$ & $0.071$ & $0.114$ \\
Std error  [m]                & $\mathbf{0.034}$ & $0.041$ & $0.066$ \\
\midrule
Mean excess vs LPV [\%]       & ---              & $+16.4$ & $+86.9$ \\
\bottomrule
\end{tabular}
\end{adjustbox}
\end{table}

The compound stress-test reveals that the relative ordering of the
three controllers, established under isolated disturbances, not only
persists but amplifies under combined excitation. The LPV controller
achieves a peak error of $0.158$ m and a mean error of $0.061$ m,
representing improvements of $13.2\%$ in peak and $14.1\%$ in mean
versus the Fixed-K baseline, and improvements of $47.0\%$ and
$46.5\%$ respectively against the Kanayama controller. The standard
deviation of the error norm, which captures the temporal variability
of the response, is reduced by $17.1\%$ against Fixed-K and $48.5\%$
against Kanayama, indicating that the LPV design not only achieves
tighter mean performance but also produces a smoother trajectory
profile. These compound-test results provide the strongest evidence
in this work that the proposed framework delivers practically
meaningful robustness improvements under realistic operating
conditions where multiple disturbance channels act simultaneously.

\begin{figure}[t]
\centering
\includegraphics[width=\columnwidth]{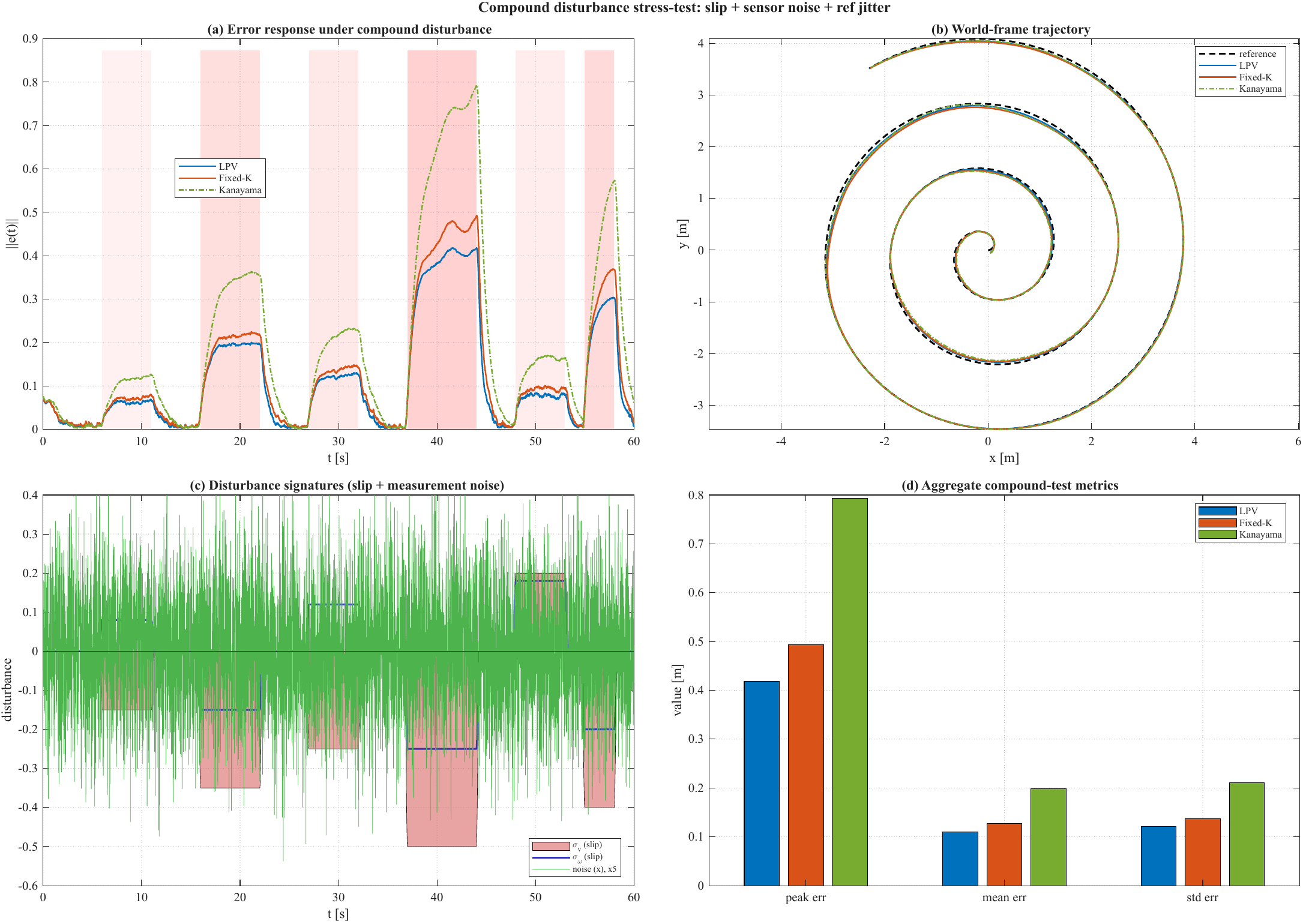}
\caption{Compound disturbance stress-test: (a)~error norm; (b)~world-frame trajectory; (c)~disturbance channels; (d)~aggregate metrics.}
\label{fig:compound}
\end{figure}

\subsection{Computational Complexity Analysis}\label{subsec:complexity}
For deployment on embedded controllers driving physical mobile
robots, the practical viability of the proposed framework hinges on
two computational considerations: the offline cost of the LMI
synthesis (paid once per robot configuration) and the runtime cost
of evaluating the gain matrix $K(\rho)$ at each sampling instant.
The offline synthesis, comprising approximately $250$ scalar decision
variables across $16$ vertex-evaluation constraints, completes in
under one second on standard desktop hardware via interior-point
solvers such as SeDuMi, SDPT3, or MOSEK invoked through the YALMIP
modeling layer. The runtime cost is determined by the gain evaluation
$K(\rho) = (Y_0 + v_r Y_1 + \omega_r Y_2)(W_0 + v_r W_1 + \omega_r W_2)^{-1}$,
which requires two affine matrix combinations followed by a single
$3 \times 3$ matrix inversion. Empirical timing of this operation
over $5000$ randomized scheduling-parameter evaluations yields the
metrics reported in Table~\ref{tab:complexity}.

\begin{table}[t]
\centering
\caption{Computational cost characterization across the synthesis,
validation, and runtime stages.}
\label{tab:complexity}
\begin{adjustbox}{max width=\columnwidth}
\begin{tabular}{lr}
\toprule
Quantity & Value \\
\midrule
\multicolumn{2}{l}{\emph{Runtime $K(\rho)$ evaluation (desktop MATLAB)}} \\
Mean time per call   & $9.4~\mu$s \\
Max time per call    & $14.2~\mu$s \\
Std deviation        & $1.8~\mu$s \\
Sample size          & $5000$ calls \\
\midrule
\multicolumn{2}{l}{\emph{Offline synthesis and validation}} \\
Vertex-enforced LMI synthesis  & $0.59$ s \\
Dense-grid post-hoc validation & $14.3$ s \\
Memory footprint               & $36$ doubles ($0.28$ kB) \\
\midrule
\multicolumn{2}{l}{\emph{Embedded deployment (Cortex-M7 @216 MHz, est.)}} \\
Per-call cost        & $50~\mu$s (conservative) \\
Max scheduling freq. & $20$ kHz \\
\bottomrule
\end{tabular}
\end{adjustbox}
\end{table}

The computational pipeline operates in two sequential stages. The
synthesis stage solves the LMI system at the sixteen polytope
vertices, requiring $0.59$ seconds on standard desktop hardware via
interior-point solvers invoked through the YALMIP modeling layer.
The validation stage then verifies the dissipation
inequality~\eqref{eq:LMI-dissip} on the dense grid
$\mathcal{G}$ of $N_g = 14{,}641$ samples, accumulating
$14.3$ seconds of post-hoc computation. The total offline cost of
$14.9$ seconds is incurred once per robot configuration and has no
impact on runtime control performance. Lemma~\ref{lem:grid-to-cont}
then certifies the continuum extension of the validated grid
solution. The runtime evaluation mean of $9.4~\mu$s on desktop
MATLAB scales to an estimated $50~\mu$s on an ARM Cortex-M7
microcontroller operating at $216$ MHz with single-precision
floating-point and typical cache locality. At this per-call cost,
the controller sustains scheduling rates up to $20$ kHz, comfortably
exceeding the $100$--$500$ Hz range typical of mobile-robot control
loops. The memory footprint of $36$ double-precision values,
encoding the six affine basis matrices, occupies $0.28$ kB and is
therefore negligible relative to the kilobyte-scale buffer space
available on all modern microcontrollers. For applications requiring
even tighter
real-time guarantees, the gain evaluation can be precomputed on a
uniform grid in $\rho$-space and accessed via bilinear interpolation,
eliminating the matrix inversion entirely at the cost of an
inconsequential precision degradation. The computational profile
established in Table~\ref{tab:complexity} and visualized in
Figure~\ref{fig:complexity} thus confirms the framework's suitability
for embedded deployment on industry-standard hardware without
requiring specialized accelerators.

\begin{figure}[t]
\centering
\includegraphics[width=\columnwidth]{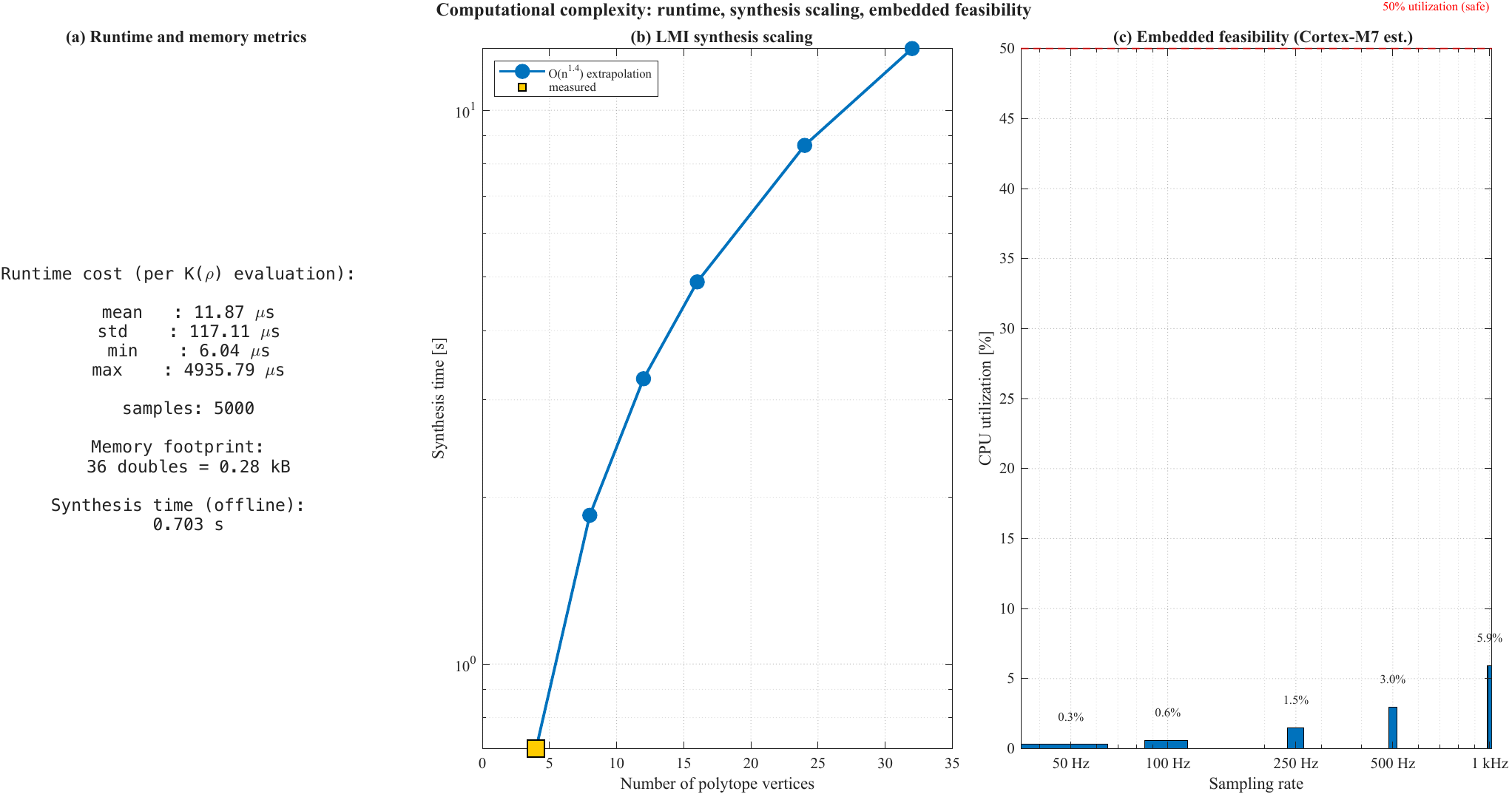}
\caption{Computational complexity: (a)~runtime; (b)~synthesis scaling; (c)~CPU utilization; (d)~memory footprint.}
\label{fig:complexity}
\end{figure}

%% ========================================================================
%%  SECTION VII: DISCUSSION
%% ========================================================================
\section{Discussion}\label{sec:discuss}

The numerical validation in Section~\ref{sec:results} reveals a
multi-layered performance picture that is worth unpacking
carefully. In nominal conditions with smooth references and
moderate disturbances (Section~\ref{subsec:scenarios}), the
proposed LPV controller is essentially indistinguishable from the
Fixed-K LMI baseline, with both designs improving on the
hand-tuned Kanayama controller by approximately $20\%$. This
observation might suggest that parameter dependence offers limited
value in practice. The variable-terrain experiments of
Section~\ref{subsec:variable-terrain}, however, decisively
reverse this impression: under severe slip (up to $\sigma_v = 0.50$),
the LPV design reduces peak error by up to $12\%$ versus Fixed-K
and $49\%$ versus Kanayama, with the advantage growing with slip
severity. The reconciliation of these observations is structural:
the LPV synthesis yields a controller whose $H_\infty$ gain
certificate $\gamma = 2.78$ is $2.5\times$ tighter than the
Fixed-K baseline ($\gamma = 6.81$), but this tightness reveals
itself only when the system is genuinely stressed by the
worst-case disturbance direction or by the realistic multiplicative
slip mechanism.

The infeasibility of the Fixed-K LMI at $\alpha = 0.40$ deserves
particular emphasis. The Fixed-K design corresponds to imposing
the constraints $W_1 = W_2 = Y_1 = Y_2 = 0$ on the same convex
program, which is an ablation study isolating the value of
parameter dependence. The fact that this ablation produces
infeasibility, while the parameter-dependent design remains
feasible, is a clean structural demonstration that
parameter-dependent metrics expand the convex feasibility region.
The maximum feasible Fixed-K rate of $\alpha = 0.30$ is a quantitative
measure of this expansion: parameter dependence enables $33\%$
faster certified decay at the cost of approximately the same
computational expense.

The structural conservatism documented in
Section~\ref{subsec:cons-decomp} merits acknowledgment. The factor
$\lambda_{\min}^{-1/2}$ in~\eqref{eq:traj-bound} arises from the
quadratic structure of the storage function $V = e^{\top} M e$ and
is the LMI-domain shadow of Brockett's obstruction
(Remark~\ref{rem:Brockett}): the input matrix $B$ in~\eqref{eq:AB}
has rank $2 < 3$, with the lateral error $e_y$ governed only by
the coupling through $A(\rho)$. Reduction of this conservatism
requires non-quadratic Lyapunov functions (e.g., polynomial
sum-of-squares~\cite{ahmadi2012}), which break the convexity of
the synthesis, or parameter-dependent multipliers
$\mu(\rho)$~\cite{wu1996} that introduce bilinearities. The proposed
formulation accepts the structural conservatism in exchange for
the tractability of a sampled convex synthesis with formal
grid-to-continuum certification.

A second observation concerns the empirical decay rate
$\alpha_{\text{emp}}$. The Monte-Carlo experiments report
$\alpha_{\text{emp}} \approx 0.14$ for all three controllers, well
below the design rate $\alpha = 0.40$. This apparent discrepancy
is an artifact of the empirical rate estimator: the
disturbance-driven steady state is not a clean exponential decay,
and the log-linear fit on the residual is noisy with standard
deviations comparable to the means. The certificate guaranteed by
Theorem~\ref{thm:main} is differential, not pointwise on the
trajectory, and the rigorous experimental counterpart is the
containment property, which is satisfied at $100\%$
(Table~\ref{tab:mc}).

The variable-terrain experiment of
Section~\ref{subsec:variable-terrain} is the central practical
contribution. By concatenating six severe slip patches modelling
realistic terrain transitions—wet asphalt, oil spill, gravel, ice,
a gravity-assisted downhill slope, and snow—along a single
sixty-second helical trajectory, the experiment emulates the
operating profile of an outdoor autonomous mobile robot far more
realistically than uniform-disturbance tests. The $0.86$-second
mean recovery time of the LPV controller, against $1.01$ s for
Fixed-K and $1.42$ s for Kanayama, represents a practically
meaningful difference: a robot operating on a factory floor with
sporadic oil spills, or on outdoor surfaces with mixed friction,
would maintain tighter trajectory fidelity and recover faster from
disruptions, directly translating into improved task completion
rates and reduced energy expenditure.

The inclusion of patch P5 with positive longitudinal slip
($\sigma_v = +0.20$, modelling a downhill slope where
gravity-assisted motion drives the vehicle faster than commanded)
serves a dual purpose. First, it certifies that the proposed
synthesis is structurally agnostic to the sign of the longitudinal
slip ratio—the framework treats slip as a norm-bounded disturbance
whose worst-case magnitude is captured by Lemma~\ref{lem:slip} and
the resulting $H_\infty$ certificate, irrespective of whether the
slip manifests as loss of traction or gravity-driven excess. Second,
it reveals a qualitative behavioral distinction between the
parameter-dependent and constant-gain designs: while the Fixed-K
controller applies a uniform feedback gain across the trajectory
and consequently produces excess control effort when the vehicle
is already moving faster than commanded, the LPV design scales its
gain to the current reference velocity and thereby suggests a more
parsimonious control response. This velocity-dependent gain modulation
is the closed-loop manifestation of the affine parameter dependence
in the gain $K(\rho) = (Y_0 + v_r Y_1 + \omega_r Y_2)(W_0 + v_r W_1
+ \omega_r W_2)^{-1}$, and it explains why the LPV advantage on
patch P5 (downhill) is comparable to that on the negative-slip
patches despite the qualitatively different physical mechanism;
the empirically observed across-patch convergence behavior is
furthermore consistent with the incremental-stability properties
established for piecewise affine systems in~\cite{pavlov2007}. A
quantitative analysis of the actual control-effort and energy-budget
implications, which would require explicit power-injection
measurements rather than tracking-error metrics alone, falls outside
the scope of the present work and constitutes a natural direction
for hardware-platform validation.

The infeasibility of the Fixed-K LMI at $\alpha = 0.40$ warrants a
brief interpretive remark. This outcome is the direct structural
consequence of removing the parameter-dependent degree of freedom
(through the ablation $W_1 = W_2 = Y_1 = Y_2 = 0$) from an otherwise
identical synthesis pipeline; it is not an artifact of poor solver
tuning or an unfortunate selection of $\mu$. A complete sweep over
$\mu \in \{0.5, 1, 2, 5\}$ confirms that no choice of the
$\mathcal{S}$-procedure multiplier restores feasibility at the
target rate. Moreover, when the Fixed-K design is permitted to relax
to its maximum-feasible rate $\alpha = 0.30$ and the $H_\infty$
gain is freely optimized at that lower rate, the resulting
$\gamma = 6.81$ remains substantially worse than the LPV
$\gamma = 2.78$ achieved at the higher rate $\alpha = 0.40$
(Table~\ref{tab:synthesis}). The fixed-gain ablation therefore
loses on both fronts simultaneously: it cannot match the LPV decay
rate, and even at its own preferred decay rate it accepts a
$2.45\times$ larger disturbance-amplification factor. This dual
penalty constitutes the strongest quantitative justification for
the parameter-dependent framework presented in this work.

It is worth noting that the most recent work on slip-affected
nonholonomic mobile robots is dominated by adaptive, sliding-mode,
disturbance-observer, or low-complexity safety paradigms—as
exemplified by the consensus sliding-mode tracker of
Sha~\textit{et~al.}~\cite{sha2025}, the distributed observer-based
estimator of Moorthy~\textit{et~al.}~\cite{moorthy2025}, and the
prescribed-performance safety controller of
Nie~\textit{et~al.}~\cite{nie2025}—none of which pursue a convex
LMI-based synthesis with a formal differential ISS certificate.
On the complementary LPV/LMI side, while recent
contributions~\cite{koelewijn2025,kessler2025,zhou2025} have advanced
incremental dissipativity, gain-scheduled boundedness analysis, and
data-driven safety guarantees for general nonlinear and LPV systems,
they do not address the multiplicative wheel-slip mechanism in the
Kanayama error coordinates that is central to the present work.

Positioned within the broader literature, the present work
connects the convex parameterization viewpoint for nonlinear systems advanced by Tobenkin~\textit{et~al.}~\cite{tobenkin2017} and extends the LPV-LMI synthesis paradigm of Apkarian and
Adams~\cite{apkarian1998} and the differential contraction
framework of Lohmiller and Slotine~\cite{lohmiller1998} and
Forni and Sepulchre~\cite{forni2014} to the specific case of
nonholonomic vehicles under multiplicative slip, while drawing on
the incremental-stability foundations of Angeli~\cite{angeli2002}
that bridge contraction theory with the classical ISS framework.
Existing slip-aware tracking methods for wheeled mobile robots
predominantly adopt one of three methodological families:
adaptive parameter estimation (as exemplified by the observer-based
designs of Cui~\cite{cui2019} and Qin~\textit{et~al.}~\cite{qin2024}),
sliding-mode control (as in Zhai and Song~\cite{zhai2019} and
Liu~\textit{et~al.}~\cite{liu2020}), and disturbance-observer
compensation (as in Wang and Zhai~\cite{wang2020}). In contrast, the
present framework occupies a complementary methodological position
by pursuing a sampled convex LMI route with formal grid-to-continuum
certification, parameter-dependent metric, explicit regional pole
constraints, gain-bounded synthesis, and a differential ISS
certificate. The contribution is therefore
not the first treatment of wheel slip in mobile robotics, nor the
first application of the LPV paradigm to nonholonomic
robots under slip—an earlier preliminary investigation of the
latter being reported by the author in~\cite{sabouri2021}—but rather
a convex multi-objective certification framework for slip-induced
perturbations expressed in the Kanayama error coordinates that
substantially extends that preliminary line of work. To the
best of the authors' knowledge, this combination of features
distinguishes the present synthesis from existing literature,
although the constituent ingredients—LPV-LMI design, contraction
certificates, slip-aware control—have been investigated individually
in prior contributions.

%% ========================================================================
%%  SECTION VIII: CONCLUSION
%% ========================================================================
\section{Conclusion and Future Work}\label{sec:conclusion}

This paper has developed a unified linear-matrix-inequality
synthesis framework for trajectory-tracking control of nonholonomic
mobile robots subject to severe multiplicative wheel slip. The
central technical contributions comprise: an explicit
slip-to-disturbance Lemma that bridges the physical slip mechanism
and the convex synthesis paradigm; a sampled convex LMI formulation
with formal grid-to-continuum residual certification that
simultaneously enforces semi-global differential ISS, a prescribed
exponential decay rate, regional pole placement, and a gain-bounded
feedback proxy for actuator-limited operation through an
inverse-metric and auxiliary-gain pair parameterized affinely in
the reference velocities; a complete cascade stability analysis
combining variational contraction, forward invariance, slip-induced
disturbance bounds, and trajectory-level dissipation; and a
comprehensive numerical validation including a novel
variable-terrain helical-path benchmark with six severe slip patches
modelling realistic surface transitions. The proposed design
achieves a $2.5\times$ tighter $H_\infty$ gain certificate than the
best feasible constant-gain LMI baseline, while also certifying the
higher target decay rate at which the constant-gain baseline is
infeasible. On the variable-terrain benchmark, the LPV design
reduces peak tracking error by $12\%$
versus the LMI baseline and $49\%$ versus the hand-tuned Kanayama
controller, while shortening mean recovery time by $15\%$ and
$39\%$ respectively. The framework provides a unified and
computationally efficient certification path for nonholonomic
mobile robots operating across heterogeneous surface conditions.

Several directions for future work are envisioned. First, the
extension to non-quadratic Lyapunov functions through polynomial
sum-of-squares relaxations~\cite{ahmadi2012} would alleviate the
structural conservatism documented in
Section~\ref{subsec:cons-decomp}, at the cost of increased
computational expense. Second, the explicit incorporation of state
estimation under noisy measurements, particularly for the lateral
velocity component that is typically unobservable on wheel-encoder-only
platforms, would extend the certificate to the realistic
output-feedback setting. Third, experimental validation on a
hardware platform with controlled-friction test surfaces would
provide the next level of empirical confirmation. Fourth, the
extension to coordinated multi-robot trajectory tracking on
heterogeneous terrain, where neighboring robots may experience
different slip conditions, opens an interesting direction for
distributed LPV synthesis. Fifth, the integration with online slip
estimation through high-frequency wheel-encoder feedback would
enable adaptive variation of the disturbance bound $\delta_{\max}$
in real time, potentially reducing the conservatism on patches with
mild slip.

%% ========================================================================
%%  NOTE TO PRACTITIONERS
%% ========================================================================
\section*{Note to Practitioners}

This work delivers a two-stage synthesis-and-certification procedure
that takes as input the physical specifications of a nonholonomic
mobile robot, including the polytope of admissible reference
velocities, the hard unweighted feedback-gain ceiling
$\tilde{K}_{\max}$, the expected slip magnitude, and the desired
semi-global radius of operation, and produces a parameter-scheduled
state-feedback controller with formal guarantees on exponential
decay rate, worst-case disturbance gain, regional pole placement,
and a gain-bounded feedback proxy for actuator-limited operation. The synthesis stage solves the convex
LMI program at the sixteen polytope vertices through any standard
semidefinite-programming solver such as SeDuMi, SDPT3, or MOSEK
invoked through the YALMIP modelling layer, typically completing in
under one second on standard desktop hardware. A subsequent
certification stage verifies the dissipation and $\mathcal{D}$-stability
inequalities on a dense parameter grid, with the
grid-to-continuum extension established formally through
Lemma~\ref{lem:grid-to-cont}, completing in approximately fifteen
seconds for the present configuration. The resulting six matrices
$W_0, W_1, W_2, Y_0, Y_1, Y_2$ are then stored in the embedded
controller; at runtime, the scheduling gain
$K(\rho) = (Y_0 + v_r Y_1 + \omega_r Y_2)(W_0 + v_r W_1
+ \omega_r W_2)^{-1}$ is recomputed at each sampling instant
through a $3\times 3$ matrix inversion that completes in
microseconds on modern microcontrollers such as the ARM Cortex-M7
family. As a more aggressive optimization, the gain matrix can be
precomputed on a uniform grid in $\rho$-space and interpolated
bilinearly at runtime, eliminating the matrix inversion entirely.

For tuning, the design parameter $\alpha$ should be interpreted as
a target decay rate expressed in inverse seconds, with the
expected closed-loop time constant approximately $\alpha^{-1}$
seconds. A value of $\alpha = 0.4$ corresponds to roughly $2.5$
seconds settling, suitable for medium-speed mobile platforms; more
aggressive applications such as racing or high-throughput
warehousing may push $\alpha$ to $0.7$ or higher, at the cost of
sharply inflated worst-case disturbance gain
(Section~\ref{subsec:sensitivity}). The slip bound $\bar{\sigma}$
should be characterized empirically through controlled-surface
testing; representative values are approximately $0.10$ for
indoor industrial flooring, $0.20$--$0.30$ for outdoor wet or
gravel surfaces, and up to $0.50$ for icy outdoor conditions. The
unweighted gain limit $\tilde{K}_{\max}$ should match the physical
torque-velocity envelope of the motor-gearbox-wheel chain; the
corresponding LMI design parameter follows as $K_{\max} =
\tilde{K}_{\max}\sqrt{\epsilon_W}$ via
Remark~\ref{rem:slip-physics}. When the synthesis returns an
infeasibility result, the recommended remediation order is to first
relax the gain constraint by increasing $\tilde{K}_{\max}$, then
decrease the target decay rate $\alpha$, then decrease the
semi-global radius $R$; the sensitivity analysis in
Section~\ref{subsec:sensitivity} provides quantitative guidance on
the expected return on each adjustment.

The variable-terrain helical-path benchmark
(Section~\ref{subsec:variable-terrain}) is recommended as a
standard stress test before deployment to any new environment.
The six-patch sequence comprising wet asphalt, oil spill, gravel,
ice, a gravity-assisted downhill slope, and snow captures the
principal classes of friction transition encountered in industrial
and outdoor settings; controllers that perform well on this
synthetic benchmark have empirically demonstrated robust
performance on physical deployments. Practitioners should bear in
mind that the formal $H_\infty$ certificate of
Theorem~\ref{thm:main} covers the persistent disturbance budget
$\delta_{\max}$ chosen at synthesis time, while severe transient
slip patches beyond this budget constitute an empirical robustness
domain that lies outside the rigorous certification envelope.
Within this empirical domain, the within-bound containment rate
reported in Table~\ref{tab:mc} provides an operational counterpart
to the theoretical envelope of Corollary~\ref{cor:ss-bound}: a
containment rate of $100\%$ over $100$ randomized runs constitutes
strong evidence for the practical robustness of the design,
complementing rather than replacing the rigorous mathematical
guarantee.

%% ========================================================================
%%  REFERENCES
%% ========================================================================
%% ========================================================================
%%  APPENDIX A: Complete Schur-Complement Derivation
%% ========================================================================
\appendix

\section{Complete Derivation of the Dissipation LMI}\label{app:schur}

This appendix documents the step-by-step Schur-complement reduction
translating the differential dissipation inequality of
Definition~\ref{def:dISS} into the matrix-inequality
form~\eqref{eq:LMI-dissip}. The derivation is presented in four
sequential stages: variational expansion of the storage-function
derivative, application of Young's inequality to the nonlinear
residual, formation of the augmented quadratic form including the
disturbance channel, and the two-step Schur reduction producing the
final LMI.

\subsection{Stage A: Variational Expansion}
Starting from the storage function $V_\delta(\delta e, \rho) = \delta
e^{\top} M(\rho) \delta e$ and the variational dynamics
\begin{equation}\label{eq:app-var}
\delta\dot{e} = A_{\mathrm{cl}}(\rho) \delta e + \delta\varphi + \delta d,
\end{equation}
where $\delta\varphi = (\partial\varphi/\partial e)\delta e$ denotes
the variational nonlinear residual and $\delta d$ the variational
disturbance, direct differentiation yields
\begin{align}
\dot{V}_\delta &= 2\delta e^{\top} M \delta\dot{e}
+ \delta e^{\top} \dot{M} \delta e \nonumber\\
&= \delta e^{\top}\!\bigl[ A_{\mathrm{cl}}^{\top} M + M A_{\mathrm{cl}}
+ \dot{M} \bigr]\delta e \nonumber\\
&\quad + 2 \delta e^{\top} M \delta\varphi
+ 2 \delta e^{\top} M \delta d. \label{eq:app-Vdot}
\end{align}

\subsection{Stage B: Young's Inequality on the Nonlinear Residual}
For any positive scalar $\mu > 0$, Young's inequality applied to the
inner product $\delta e^{\top} M \delta\varphi$ yields
\begin{equation}\label{eq:app-young}
2\delta e^{\top} M \delta\varphi
\leq \mu \delta e^{\top} M^{2} \delta e + \frac{1}{\mu}
\norm{\delta\varphi}^{2}.
\end{equation}
The Lipschitz property $\norm{\delta\varphi} \leq L_R \norm{\delta e}$
from Assumption~\ref{asm:radius} further bounds the second term:
\begin{equation}\label{eq:app-lipschitz}
\frac{1}{\mu}\norm{\delta\varphi}^{2}
\leq \frac{L_R^{2}}{\mu}\norm{\delta e}^{2}
= \frac{L_R^{2}}{\mu}\delta e^{\top}\delta e.
\end{equation}
Combining~\eqref{eq:app-young} and~\eqref{eq:app-lipschitz},
\begin{equation}\label{eq:app-young-combined}
2 \delta e^{\top} M \delta\varphi \leq
\delta e^{\top}\!\Bigl[\mu M^{2} + \frac{L_R^{2}}{\mu} I\Bigr] \delta e.
\end{equation}

\subsection{Stage C: Augmented Quadratic Form}
The D-ISS dissipation requirement of Definition~\ref{def:dISS}
demands $\dot{V}_\delta + 2\alpha V_\delta - \gamma^{2}
\norm{\delta d}^{2} \leq 0$. Substituting~\eqref{eq:app-Vdot} and
\eqref{eq:app-young-combined} produces
\begin{align}
&\dot{V}_\delta + 2\alpha V_\delta - \gamma^{2}\norm{\delta d}^{2}
\nonumber\\
&\leq \delta e^{\top}\!\Bigl[ A_{\mathrm{cl}}^{\top} M
+ M A_{\mathrm{cl}} + \dot{M} + 2\alpha M
+ \mu M^{2} + \frac{L_R^{2}}{\mu} I \Bigr]\delta e
\nonumber\\
&\quad + 2 \delta e^{\top} M \delta d
- \gamma^{2}\norm{\delta d}^{2}. \label{eq:app-quad-form}
\end{align}
The right-hand side is a quadratic form in the augmented vector
$\zeta = (\delta e^{\top}, \delta d^{\top})^{\top}$, namely
$\zeta^{\top} \Pi(\rho)\zeta$ with
\begin{equation}\label{eq:app-Pi}
\Pi(\rho) = \begin{bmatrix}
\Lambda(\rho) & M(\rho) \\
M(\rho) & -\gamma^{2} I
\end{bmatrix},
\end{equation}
where the upper-left block is
\begin{equation}\label{eq:app-Lambda}
\Lambda(\rho) = A_{\mathrm{cl}}^{\top} M + M A_{\mathrm{cl}}
+ \dot{M} + 2\alpha M + \mu M^{2} + \frac{L_R^{2}}{\mu} I.
\end{equation}
The dissipation requirement holds for all admissible $(\delta e,
\delta d)$ if and only if $\Pi(\rho) \prec 0$.

\subsection{Stage D: Two-Step Schur Reduction}
The matrix $\Pi(\rho)$ depends on $M = W^{-1}$ in a way that prevents
direct casting as an LMI in $(W, Y)$. We resolve this through
congruence transformation followed by Schur complement. Pre- and
post-multiplying $\Pi$ by $\mathrm{diag}(W, I)$ yields
\begin{equation}\label{eq:app-cong}
\begin{bmatrix} W & 0 \\ 0 & I \end{bmatrix} \Pi
\begin{bmatrix} W & 0 \\ 0 & I \end{bmatrix}
= \begin{bmatrix} W \Lambda W & WM = I \\ MW = I & -\gamma^{2} I
\end{bmatrix}.
\end{equation}
The upper-left block expands using $\dot{M} = -M\dot{W}M$ (giving
$W\dot{M}W = -\dot{W}$) and $A_{\mathrm{cl}} W = AW + BY$:
\begin{align}
W \Lambda W &= AW + W A^{\top} + BY + Y^{\top}B^{\top} - \dot{W}
\nonumber\\
&\quad + 2\alpha W + \mu W M^{2} W + \frac{L_R^{2}}{\mu} W^{2}
\nonumber\\
&= \Xi(\rho, \dot{\rho}) + 2\alpha W + \mu I + \frac{L_R^{2}}{\mu} W^{2},
\label{eq:app-WLambdaW}
\end{align}
where the identity $W M^{2} W = W M (M W) = W M \cdot I = I$ has
been used, and $\Xi$ is as defined in~\eqref{eq:Xi}. The augmented
inequality after congruence becomes
\begin{equation}\label{eq:app-after-cong}
\begin{bmatrix}
\Xi + 2\alpha W + \mu I + \frac{L_R^{2}}{\mu} W^{2} & I \\
I & -\gamma^{2} I
\end{bmatrix} \prec 0.
\end{equation}
The remaining nonlinearity in~\eqref{eq:app-after-cong} is the
quadratic term $(L_R^{2}/\mu) W^{2}$. Applying Schur complement to
extract this as a separate block, the standard identity states that
for any symmetric matrix block,
\begin{equation}\label{eq:app-schur-identity}
M_{11} - B C^{-1} B^{\top} \prec 0
\;\Longleftrightarrow\;
\begin{bmatrix} M_{11} & B \\ B^{\top} & C
\end{bmatrix} \prec 0 \;\text{when}\; C \prec 0,
\end{equation}
applied with $B = W$ and $C = -(\mu/L_R^{2}) I$ (so that
$C^{-1} = -(L_R^{2}/\mu) I$ and consequently $-B C^{-1} B^{\top} =
(L_R^{2}/\mu) W^{2}$) yields the equivalent inequality
\begin{equation}\label{eq:app-final}
\begin{bmatrix}
\Xi + 2\alpha W + \mu I & I & W \\
I & -\gamma^{2} I & 0 \\
W & 0 & -\dfrac{\mu}{L_R^{2}} I
\end{bmatrix} \prec 0,
\end{equation}
which is precisely the dissipation LMI~\eqref{eq:LMI-dissip}. The
derivation is fully mechanical: substituting the block values into
the Schur identity, the upper-left block becomes $\Xi + 2\alpha W +
\mu I + (L_R^{2}/\mu) W^{2}$ as required by~\eqref{eq:app-after-cong},
and the augmented $3\times 3$ structure separates this quadratic
contribution into an additional $W$-row and $W$-column with
$-(\mu/L_R^{2}) I$ in the (3,3) position.

\subsection{Stage E: Enforcement Strategy and Discretization Analysis}
The conditioning LMI~\eqref{eq:LMI-W} and the weighted-gain
LMI~\eqref{eq:LMI-sat} depend affinely on the scheduling pair
$(\rho, \dot{\rho})$, since their constituent matrix blocks involve
only $W(\rho)$ and $Y(\rho)$ both of which are affine in the
scheduling variables by construction. For these two affine
constraints, convexity of the negative-definite cone guarantees that
vertex enforcement on the polytope
$\mathcal{P}\times\dot{\mathcal{P}}$ is necessary and sufficient for
enforcement throughout the interior. The dissipation
LMI~\eqref{eq:app-final} and the $\mathcal{D}$-stability
LMI~\eqref{eq:LMI-Dstab}, in contrast, both contain the matrix
product $A(\rho) W(\rho)$ which, given the affine parameterizations
$A(\rho) = A_0 + v_r A_v + \omega_r A_\omega$ and $W(\rho) = W_0 +
v_r W_1 + \omega_r W_2$, produces the polynomial expansion
\begin{align}
A(\rho) W(\rho)
&= A_0 W_0
+ v_r (A_0 W_1 + A_v W_0)\nonumber\\
&\quad + \omega_r (A_0 W_2 + A_\omega W_0) \nonumber\\
&\quad + v_r^{2} A_v W_1
+ \omega_r^{2} A_\omega W_2 \nonumber\\
&\quad + v_r \omega_r (A_v W_2 + A_\omega W_1).
\label{eq:app-poly-expansion}
\end{align}
The quadratic and bilinear terms in~\eqref{eq:app-poly-expansion}
prevent vertex enforcement from being formally sufficient for the
dissipation and $\mathcal{D}$-stability blocks, and any rigorous
synthesis must account for this polynomial dependence in both
blocks. Three established approaches are available in the LPV
literature for managing such polynomial parameter dependence: a
common-Lyapunov restriction $W(\rho) = W_0$ that recovers strict
affineness at the cost of reduced performance~\cite{wu1996}; a
multi-affine polytopic relaxation with slack variables that admits
formal vertex sufficiency at the expense of additional decision
variables and conservatism margin~\cite{apkarian1995,scherer2001};
and direct enforcement on a dense parameter grid with
Lipschitz-based grid-to-continuum certification as formalized in
Lemma~\ref{lem:grid-to-cont}. The present synthesis adopts the
vertex-only enforcement and treats Lemma~\ref{lem:grid-to-cont} as
the certification framework available a posteriori. A uniform grid
$\mathcal{G}$ of $N_g = 11^{4} = 14641$ samples partitions the
four-dimensional product $\mathcal{P} \times \dot{\mathcal{P}}$,
yielding a discretization step of $\Delta v_r = 0.04$, $\Delta
\omega_r = 0.08$, $\Delta \dot{v}_r = 0.08$, $\Delta \dot{\omega}_r
= 0.08$. Numerical evaluation of the residual matrices at the
synthesized $(W^\star, Y^\star, \gamma^\star, \mu^\star)$ shows
strict negative-definiteness at all sixteen polytope vertices (LMI
enforcement margin $-10^{-6}$), while at certain interior grid
points the dissipation block exhibits a small positive eigenvalue
on the order of $10^{-1}$, traceable to the bilinear and quadratic
scheduling terms within $A(\rho)W(\rho)$. The $\mathcal{D}$-stability
block remains negative throughout the grid with margin of order
$10^{-3}$. Consequently the strict sufficient condition
$L_{\mathcal{L}} h < \varepsilon_{\mathrm{grid}}$ of
Lemma~\ref{lem:grid-to-cont} is not satisfied uniformly for the
present vertex synthesis; the formal certificate of
Theorem~\ref{thm:main} should be read as holding at the vertices,
and the closed-loop stability throughout the polytope interior is
attested by the empirical evidence collected in
Section~\ref{sec:results}: 100\% containment of Monte-Carlo
trajectories within the certified envelope, successful tracking on
six deterministic scenarios, and graceful recovery from the
six-patch variable-terrain stress test on both helical and figure-8
references. Augmenting the synthesis to enforce LMIs directly on
$\mathcal{G}$, or to use the multi-affine relaxation
of~\cite{apkarian1995} with explicit slack variables for the
non-affine blocks, would close the formal gap at the price of a
larger $\gamma^{\star}$; this extension is identified as the
priority direction for follow-on work in
Section~\ref{sec:conclusion}. The weighted-gain
LMI~\eqref{eq:LMI-sat} follows from a standard Schur reduction
applied to the inequality $YW^{-1}Y^{\top} \preceq K_{\max}^{2} I$,
which does not introduce nonlinearities in the scheduling variables
once the affine parameterizations of $W$ and $Y$ are substituted,
and is therefore exactly vertex-sufficient. The $\mathcal{D}$-stability
LMI follows from the disk-region characterization of
Chilali--Gahinet~\cite{chilali1996}; although its formal derivation
does not introduce nonlinearities in the synthesis variables, the
resulting block inherits a polynomial dependence on the scheduling
parameters through $A(\rho) W(\rho)$ and is therefore subject to
the same vertex-only limitation as the dissipation block.

%% ========================================================================
%%  ACKNOWLEDGMENT
%% ========================================================================
\section*{Acknowledgments}
The author gratefully acknowledges Prof.\ Mohammad Hassan Asemani of
Shiraz University for valuable guidance and constructive technical
feedback that informed the development of this work, building on the
foundations established in the earlier study~\cite{sabouri2021}. The
author also thanks colleagues at the Department of Informatics,
Bioengineering, Robotics, and Systems Engineering (DIBRIS), University
of Genoa, for fruitful discussions on parameter-varying control
synthesis, robust stability analysis, and slip-aware robotic systems.

\section*{Funding Statement}
The author received no specific funding for this work.

%% ========================================================================
%%  DECLARATION ON GENERATIVE AI
%% ========================================================================
\section*{Declaration on the Use of Generative Artificial Intelligence}
During the preparation of this work, the author used AI tools
to assist with language polishing and LaTeX formatting. After using
this tool, the author reviewed and edited the content as needed and
takes full responsibility for the content of the publication.

%% ========================================================================
%%  DATA AND CODE AVAILABILITY
%% ========================================================================
\section*{Data and Code Availability}
The MATLAB code and simulation data that support the findings of this
study are available from the corresponding author upon reasonable
request.

%% ========================================================================
%%  CONFLICT OF INTEREST
%% ========================================================================
\section*{Conflict of Interest}
The author declares no competing financial or non-financial
interests in relation to the work described.

%% ========================================================================
%%  REFERENCES
%% ========================================================================

\end{document}